\titlespacing{\paragraph}{%
  0pt}{
  0.1\baselineskip}{
  1em}
\titlespacing\section{0pt}{8pt plus 1pt minus 1pt}{2pt plus 1pt minus 1pt}
\titlespacing\subsection{0pt}{8pt plus 1pt minus 1pt}{2pt plus 1pt minus 1pt}
\titlespacing\subsubsection{0pt}{8pt plus 1pt minus 1pt}{2pt plus 1pt minus 1pt}
\newtheoremstyle{slplain}
  {.4\baselineskip\@plus.1\baselineskip\@minus.1\baselineskip}
  {.3\baselineskip\@plus.1\baselineskip\@minus.1\baselineskip}
  {\itshape}
  {}
  {\bfseries}
  {.}
  { }
  {}
\theoremstyle{slplain} 
\newtheorem*{definition*}{Definition}
\newtheorem*{theorem*}{Theorem}
\newtheorem{theorem}{Theorem}[section]
\newtheorem{lemma}[theorem]{Lemma}
\newtheorem{claim}[theorem]{Claim}
\newtheorem{corollary}[theorem]{Corollary}
\newtheorem{definition}[theorem]{Definition}
\newtheorem*{rep@theorem}{\rep@title}
\newcommand{\newreptheorem}[2]{%
\newenvironment{rep#1}[1]{%
 \def\rep@title{#2 \ref{##1}}%
 \begin{rep@theorem}}%
 {\end{rep@theorem}}}
\theoremstyle{definition}
\theoremstyle{remark}
\numberwithin{equation}{section}
\newtheoremstyle{etplain}
  {.0\baselineskip\@plus.1\baselineskip\@minus.1\baselineskip}
  {.0\baselineskip\@plus.1\baselineskip\@minus.1\baselineskip}
  {\itshape}
  {}
  {\bfseries}
  {.}
  { }
  {}
\newcommand{\idlow}[1]{\mathord{\mathcode`\-="702D\it #1\mathcode`\-="2200}}
\newcommand{\id}[1]{\ensuremath{\idlow{#1}}}
\newcommand{\litlow}[1]{\mathord{\mathcode`\-="702D\sf #1\mathcode`\-="2200}}
\newcommand{\lit}[1]{\ensuremath{\litlow{#1}}}
\newcommand{\namedref}[2]{\hyperref[#2]{#1~\ref*{#2}}}
\newcommand{\sectionref}[1]{\namedref{Section}{#1}}
\newcommand{\appendixref}[1]{\namedref{Appendix}{#1}}
\newcommand{\theoremref}[1]{\namedref{Theorem}{#1}}
\newcommand{\figureref}[1]{\namedref{Figure}{#1}}
\newcommand{\figurerefb}[2]{\hyperref[#1]{Figure~\ref*{#1}#2}}
\newcommand{\lemmaref}[1]{\namedref{Lemma}{#1}}
\newcommand{\claimref}[1]{\namedref{Claim}{#1}}
\newcommand{\corollaryref}[1]{\namedref{Corollary}{#1}}
\newcommand{\equationref}[1]{\hyperref[#1]{(\ref*{#1})}}
\renewcommand{\eqref}{\equationref}
\newcommand{\DEBUG}[1]{}
\begin{document}

\linespread{0.97}

\title{How to Elect a Leader Faster than a Tournament}
\date{\today}
\author{
Dan Alistarh \\ {\small Microsoft Research}
\and
Rati Gelashvili \\ {\small MIT}
\and
Adrian Vladu \\ {\small MIT}
}
\date{}
\maketitle
\begin{abstract}
The problem of electing a leader from among $n$ contenders is 
one of the fundamental questions in distributed computing. 
In its simplest formulation, the task is as follows: 
given $n$  processors, all participants must eventually return a \emph{win} or \emph{lose} indication, such that a single contender may \emph{win}. 
Despite a considerable amount of work on leader election, the following question is still open: can we elect a leader in an asynchronous fault-prone system faster than just running a $\Theta(\log n)$-time tournament, against a strong adaptive adversary?

In this paper, we answer this question in the affirmative, improving on a decades-old upper bound. 
We introduce two new algorithmic ideas to reduce the time complexity of electing a leader to $O( \log^* n)$, using $O( n^2 )$ point-to-point messages. 
A non-trivial application of our algorithm is a new upper bound for the \emph{tight renaming} problem, assigning $n$ items to the $n$ participants in expected $O( \log^2 n )$ time and $O(n^2)$ messages. 
We complement our results with lower bound of $\Omega( n^2 )$ messages for solving these two problems, closing the question of their message complexity.

\vspace{10em}

\end{abstract}

\thispagestyle{empty}

\newpage
\setcounter{page}{1}
\section{Introduction}
The problem of picking a leader from among a set of $n$ contenders in a fault-prone  system is among the most well-studied questions in distributed computing. 
In its simplest form,  \emph{leader election (test-and-set)}~\cite{Afek92} is stated as follows. 
Given $n$ participating processors,  each of the contenders must eventually return either a \emph{win} or \emph{lose} indication, with the property that a single participant may  \emph{win}. 
Leader election is one of a set of canonical problems, or \emph{tasks}, whose solvability and complexity are the focus of distributed computing theory, along with \emph{consensus (agreement)}~\cite{LSP82, PSL80}, \emph{mutual exclusion}~\cite{Dijkstra65}, \emph{renaming}~\cite{ABDPR90}, or \emph{task allocation (do-all)}~\cite{KS92}. These problems are usually considered in \emph{asynchronous} models, such as message-passing or shared-memory~\cite{Lynch97}. 

We focus on leader election in the \emph{asynchronous message-passing} model, 
  in which each of $n$ processors is connected to every other processor via a point-to-point channel.
Communication is asynchronous, i.e., messages can be arbitrarily delayed.
Moreover, local computation of processors is also performed in asynchronous steps.
The scheduling of computation steps and message deliveries in the system is controlled by 
  a \emph{strong (adaptive) adversary}, which can examine local state, 
  including random coin flips, and crash $t < n / 2$ of the participants at any point during the computation.
The natural complexity metrics are \emph{message complexity}, i.e., total number of messages sent by the protocol, 
  and \emph{time complexity}, i.e. the number of times a processor relies on the adversary to schedule 
  a computation step or to deliver messages.

Many fundamental results in  distributed computing are related to the complexity of canonical tasks in asynchronous models. 
For example, Fisher, Lynch, and Patterson~\cite{FLP85} showed that it is impossible to solve consensus deterministically  in an asynchronous system if one of  the $n$ participants may fail by crashing. This deterministic impossibility extends to leader election~\cite{Her91}. 
Since the publication of the FLP result, a tremendous amount of research effort has been invested into overcoming this impossibility for canonical tasks. 
 Seminal work by Ben-Or~\cite{BenOr83} showed that relaxing the problem specification to allow probabilistic termination can circumvent FLP, and obtain efficient distributed algorithms. 

Consequently, the past three decades have seen a continuous quest to improve the randomized upper and lower bounds for  canonical tasks, and in fact tight (or almost tight) complexity bounds are now known, against a strong adversary, for consensus~\cite{AC08, AAKS14}, mutual exclusion~\cite{HW09, HW10, GW12}, renaming~\cite{AACHGG14}, and task allocation~\cite{BussKaRa96, ABGG12}. 

For leader election against a strong adversary, the situation is different. 
The fastest known solution is more than two decades old~\cite{Afek92}, and is a \emph{tournament tree}: pair up the participants into  two-processor ``matches,'' decided by two-processor randomized consensus;  winners continue to compete, while losers drop out, until a single winner prevails. 
Time complexity is logarithmic, as the winner has to communicate at each tree level. 
No time lower bounds are known. 
Despite significant recent interest and progress on this problem in weaker adversarial models~\cite{AAGGG10, AlistarhA11, GW12b}, the question of whether a tournament is optimal when elect a leader against a strong adversary is surprisingly still open. 

\paragraph{Contribution.}
In this paper, we show that it is possible to break the logarithmic barrier in the classic asynchronous message-passing model, against an adaptive adversary. 
We present a new randomized algorithm 
which elects a leader in expected $O( \log^* n )$ time, sending $O( n^2 )$ messages. 

The algorithm is based on two new ideas, which we briefly describe below. 
The general structure is rather simple: computation occurs in \emph{phases}, 
where each phase is designed to drop as many participants as possible, while ensuring that at least one processor survives. 
Consider a simple implementation: 
each processor flips a biased coin at the beginning of the phase, to decide whether to give up (value $0$) or continue (value $1$), and communicates its choice to others. If at least one processor out of the $n_r$ participants in phase $r$ flips $1$, all processors which flipped $0$ can safely drop from contention. 
We could aim for $o( \log n )$ iterations by setting the probabilities to obtain less than a constant fraction of survivors in each phase.
Unfortunately, a strong adversary easily breaks such a strategy: since it  can see the flips, it can schedule all the processors that flipped $0$ to complete the phase \emph{before} any processor that flipped $1$, forcing \emph{everyone} to continue. 

\paragraph{Techniques.} Our first algorithmic idea 
is a way to \emph{hide} the processor coin flips during the phase, handicapping the adaptive adversary. 
In each phase, each processor first takes a ``poison pill" (moves to \emph{commit} state), and broadcasts this to all other processors. 
The processor then flips a biased local coin to decide whether to drop out of contention (\emph{low} priority) or to take an ``antidote" (\emph{high} priority), broadcasts its new state, and checks the states of other processors. 
Crucially, if it has flipped low priority, and sees \emph{any other processor} either in \emph{commit} state or in \emph{high priority} state, the processor returns \emph{lose}. Otherwise, it survives to the next phase.

The above mechanics guarantee at least one survivor
  (in the unlikely event where all processors flip \emph{low} priority, they all survive),  
  but can lead to few survivors in each phase.
The insight is that, to ensure many survivors, the adversary must examine the processors' coin flips. 
But to do so, the adversary must first allow it to take the poison pill (state \emph{commit}). 
Crucially, any low-priority processor observing this \emph{commit} state automatically drops out. 
We prove that, because of this catch-22, the adversarial scheduler can do no more than to let processors execute each phase sequentially, one-by-one, hoping that the first processor flipping high priority, which eliminates all later low-priority participants, comes as late as possible in the sequence. 
Now we can bias the flips such that a group of at most $O( \sqrt {n_r} )$ processors survive because they flipped high priority, 
and $O( \sqrt{ n_r} )$ processors survive because they did not observe any high priority. 
This choice of bias seems hard to improve, as it yields the perfect balance between the sizes of the two groups of survivors.

Our second algorithmic idea breaks this roadblock. 
Consider two extreme scenarios for a phase: first when all participants communicate with each other, leading to similar views 
  and second, when processors see fragmented views, observing just a subset of other processors. 
In the first case, each processor can safely set a low probability of surviving. 
This does not work in the second case since processor views have a lot of variance. We exploit this variance to break symmetry. 
Our technical argument combines these two strategies such that we obtain at most $O( \log^2 n_r)$ expected survivors in a phase, under \emph{any} scheduling. 

The final algorithm has additional useful properties.
It is \emph{adaptive}, meaning that, if $k \leq n$ processors participate, 
its complexity becomes $O( \log^* k )$. 
Moreover, since most participants drop in the first round of broadcast, the message complexity 
 is $O( k n )$, which we shall prove is asymptotically optimal.

\paragraph{Renaming.} We build on these properties to design a message-optimal algorithm for \emph{strong renaming}, 
  which assigns distinct items (or names) labeled from $1$ to $n$ to the $n$ processors, 
  using expected $O( n^2 )$ messages and $O( \log^2 n)$ time.
We employ a simple strategy: 
 each processor repeatedly picks a random name that it sees as available, announces it, 
 and competes for it via an instance of leader election. 
If the processor wins, it returns the name; otherwise, it tries again.   
The algorithm can be seen as a balls-into-bins game, in which $n$ balls are the processors and bins are the names.
We need to characterize two parameters: the maximum number of trials by a single processor, 
  and the maximum contention on a single bin, as they are linked with message and time complexity. 
The critical difficulty is that, since rounds are not synchronised, 
 the bin occupancy views perceived by the processors are effectively under adversarial control and out-of-date or incoherent views can lead 
 to wasted trials and increased contention on the bins. 
 
Our task is to prove that, in fact, this balls-into-bins process is robust to the correlations 
 and skews in the trial probability distributions caused by asynchrony.
Our approach is to carefully bound the evolution of processors' views and their trial distributions as more and more trials are performed. 
Roughly, for $j \geq 1$, we split the execution into time intervals, where at most $n / 2^{j - 1}$ names are available at the beginning of the interval $j$, and focus on bounding the number of wasted trials in each interval. 
The main technical difficulty that we overcome is that the views corresponding to these trials could be highly correlated, 
 as the adversary may delay messages to increase the probability of a collision. 

\paragraph{Lower bound.} We match the message complexity of our algorithms with an $\Omega( n^2 )$ lower bound on the expected message complexity 
 of any leader election or renaming algorithm.
The intuition behind the bound is that no processor should be able to decide without receiving a message, 
 as the others might have already elected a winner; since the adversary can fail up to $n / 2$ processors, 
 it should be able to force each processor to either send or receive $n / 2$ messages. 
However, this intuition is not entirely correct, as groups of processors could employ complex gossip-like message distribution strategies 
 to guarantee that at least \emph{some} processors receive \emph{some} messages while keeping the total message count $o(n^2)$. 
We thwart such strategies with a non-trivial indistinguishability argument, showing that in fact there must exist a group of $\Theta(n)$ processors,
 each of which either sends or receives a total of $\Theta(n)$ messages. 
A similar argument yields the $\Omega( n^2 )$ lower bound for renaming, and in fact for any object with strongly non-commutative operations~\cite{LawsOfOrder}.  

\paragraph{Related Work.} 
We focus on previous work on the complexity of randomized leader election\footnote{Some older references, e.g.~\cite{Afek92}, employ the name \emph{test-and-set} exclusively for this task, and use leader election for the consensus (agreement) problem, while more recent ones~\cite{GW12b} equate test-and-set and leader election.} and renaming, most of which considered the asynchronous shared-memory. 
However, one option is to emulate efficient shared-memory solutions via simulations between shared-memory and message-passing~\cite{ABD95}. 
This preserves time complexity, but communication may be increased by at most a linear factor. 
 
We classify previous solutions according to their adversarial model. 
Against a strong adversary, the fastest known leader election algorithm is the tournament tree of Afek et al.~\cite{Afek92}, 
 whose contention-adaptive variant was given in~\cite{AAGGG10}. 
For $n$ participants, these algorithms require $\Theta( \log n )$ time, 
  and $\Theta (n^2 \log n )$ messages using a careful simulation. 
\emph{PoisonPill} is contention-adaptive, improves time complexity (more than) exponentially, and gives tight message complexity bounds.  

For renaming, the fastest known shared-memory algorithm~\cite{AACHGG14} can be simulated with $O( \log n )$ time, 
 and $\Theta( n^2 \log n )$ messages. (The latter bounds  are  obtained by simulating an AKS sorting network~\cite{AKS}; 
 constructible solutions pay an extra logarithmic factor in both measures.)   
Our balls-into-bins approach is simpler and message-optimal, at the cost of an extra logarithmic factor in the time complexity. 
Reference~\cite{AAGGG10} uses a simpler balls-into-bins approach for renaming, where each processor tries all the names, in random order, until acquiring some one. Despite the similarity, this algorithm has expected time complexity $\Omega(n)$, as a late processor 
 may try out a linear number of spots before succeeding.

References~\cite{AlistarhA11, GW12b} considered the complexity of leader election against a weak (oblivious) adversary, 
 which fixes the schedule in advance. 
The structure of splitting the computation into sifting rounds, eliminating more than a constant factor of the participants per round, 
 was introduced in~\cite{AlistarhA11}, where the authors give an algorithm with $O(\log \log n)$ time complexity. 
Giakkoupis and Woelfel~\cite{GW12b} improved this to $O( \log^* k )$, where $k$ is the number of participants. 
These algorithms yield the same bounds in asynchronous message-passing, but their complexity bounds only hold against a \emph{weak} adversary.

The \emph{consensus} problem can be stated as leader election if we ask processors to return the \emph{identifier} of the winner, 
as opposed to a win/lose indication. As such, consensus solves leader election, but not vice-versa~\cite{Her91}. 
In fact, randomized consensus has $\Omega( n )$ time complexity~\cite{AC08}. 
Recent work~\cite{AAKS14} considered the message complexity of randomized consensus in the same model, 
achieving $O( n^2 \log^2 n )$ message complexity, and $O( n \log^3 n )$ time complexity, 
  using completely different techniques. 
\section{Definitions and Notation}
\paragraph*{System Model.}
We consider the classic asynchronous message-passing  model~\cite{ABD95}.
Here, $n$ processors communicate with each other 
        by sending \emph{messages} through \emph{channels}.
There is one channel 
        from each processor to every other processor; the channel from  $i$ to $j$ 
        is independent from the channel from $j$ to $i$.
Messages can be arbitrarily delayed by a channel, but do not get corrupted.

Computations are modeled as sequences of steps of the processors, which can be either \emph{delivery steps}, representing the delivery of a new message, or \emph{computation steps}.  
At each computation step, the processor receives all messages delivered to it since the last computation step, and, unless it is \emph{faulty}, 
  it can perform local computation and send new messages. 
  A processor is \emph{non-faulty}, if it is allowed
        to perform local computations and send messages infinitely often
        and if all messages it sends are eventually delivered.
 Notice that 
        messages are also delivered to \emph{faulty} processors, although their outgoing messages may be dropped.

We consider algorithms that tolerate up to $t \leq \lceil n/2 \rceil - 1$
        processor failures. 
That is, when more than half of the processors are non-faulty,
        they all return an answer from the protocol with probability one.
A standard assumption in this setting is that all non-faulty processors always take part in the computation
 by replying to the messages, irrespective of whether they participate in a certain algorithm 
 or even after they return a value---otherwise, the $t \leq \lceil n/2 \rceil - 1$ condition may be violated. 

\paragraph*{The Communicate Primitive.} 
Our algorithms use a procedure called \lit{communicate}, 
        defined in ~\cite{ABD95} as a building block for asynchronous communication.
The call \lit{communicate}($m$) 
        sends the message $m$ to all $n$ processors and waits for at least 
        $\lfloor n/2 \rfloor + 1$ acknowledgments before proceeding with the protocol.
The \lit{communicate} procedure can be viewed as a best-effort broadcast mechanism; 
        its key property is that any two \lit{communicate} calls intersect in at least one recipient.
In the following, a processor $i$  will \lit{communicate} messages of the form 
         (\emph{propagate},$v_i$) or (\emph{collect},$v$).
For the first message type, each recipient $j$ updates its view of the variable $v$ and  
        acknowledges by sending back an \emph{ACK} message.
In the second case, the acknowledgement is a pair (\emph{ACK},$v_j$)
        containing $j$'s view of the variable for the receiving process.
In both cases, processor $i$ waits for $> n / 2$ \emph{ACK} replies 
        before proceeding with its protocol.
In the case of \emph{collect}, the \lit{communicate} call returns an array of at least 
        $\lfloor n/2 \rfloor + 1$ views that were received.
	
\paragraph*{Adversary.} 
We consider strong adversarial setting where the scheduling of processor steps, 
        message deliveries and processor failures are controlled 
        by an adaptive adversary.
At any point, the adversary can examine the system state, 
        including the outcomes of random coin flips,
        and adjusts the scheduling accordingly.

\paragraph*{Complexity Measures.} 
We consider two worst-case complexity measures against the adaptive adversary.
\emph{Message complexity} is the maximum expected number of messages sent by all processors during an execution.
When defining \emph{time complexity}, we need to take into account the fact that, in asynchronous message-passing, 
 the adversary schedules both message delivery and local computation. 

\begin{definition*}[Time Complexity]
  Assume that the adversary fixes two arbitrarily large numbers $t_1$ and $t_2$ before an execution,
    and these numbers are unknown to the algorithm.
  Then, during the execution, the adversary delivers every message of a non-faulty processor within time $t_1$ and
    schedules a subsequent step of any non-faulty processor in time at most $t_2$.\footnote{Note that 
	the adversary can set $t_1$ or $t_2$ arbitrarily large, unknown to the algorithm, 
        so the guarantees from the algorithm's prospective are still only that messages are \emph{eventually} delivered
        and steps are \emph{eventually} scheduled.}
  An algorithm has time complexity $O(T)$ if the maximum expected time before all non-faulty processors return 
    that the adversary can achieve is $O(T (t_1 + t_2))$.\footnote{Applied to asynchronous shared-memory, this yields an alternative definition of \emph{step (time) complexity}, taking $t_2$ as an upper bound on the time for a thread to take a shared-memory step (and ignoring $t_1$). Counting all the delivery and non-trivial computation \emph{steps} in message-passing  gives an alternative definition of message complexity, corresponding to shared-memory \emph{work complexity}.}
\end{definition*}
\noindent For instance, in our algorithms, all messages  are triggered by the \lit{communicate} primitive.
A processor depends on the adversary to schedule a step in order to compute and call \lit{communicate}, 
  and then depends on the adversary to deliver these messages and acknowledgments.
In the above definition, if all processors call \lit{communicate} at most $T$ times, 
  then all non-faulty processors return in time at most $2 T (t_1 + t_2) = O(T (t_1 + t_2))$:
  each communicated message reaches destination in time $t_1$, 
  gets processed within time $t_2$, at which point the acknowledgment is sent back and delivered after $t_1$ time.
So, after $2t_1 + t_2$ time responses from more than half processors are received, 
  and in at most $t_2$ time the next step of the processor is scheduled when it again computes and communicates.
This implies the following. 
\begin{claim}
\label{clm:comtime}
For any algorithm, if the maximum expected number of \lit{communicate} calls by any processor that the adversary can achieve is $O(T)$, 
  then time complexity is also $O(T)$. 
\end{claim}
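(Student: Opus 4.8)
The plan is to turn the informal accounting already given before the claim into a clean chaining argument: show that a single \lit{communicate} call, together with the computation step that issues the next one, costs any non-faulty processor only $O(t_1+t_2)$ units of time in the model of the Time Complexity definition, and then multiply by the (expected) number of such calls.

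First I would fix an arbitrary adversary strategy together with the bounds $t_1,t_2$ it commits to, and analyze one \lit{communicate} call of a non-faulty processor $p$. By the model, each of the $n$ messages $p$ sends is delivered within time $t_1$; the recipient processes it at its next computation step, scheduled within an additional $t_2$, and — because by the standing model assumption every non-faulty processor keeps replying to messages even after it has returned — sends back an \emph{ACK} (or \emph{(ACK},$v_j)$) message, delivered within a further $t_1$. Since $t \le \lceil n/2\rceil-1$, more than $n/2$ processors are non-faulty, so $p$ collects more than $\lfloor n/2\rfloor$ acknowledgments within time $2t_1+t_2$ and the call returns; the subsequent computation step of $p$, in which it performs local work and possibly issues the next \lit{communicate} call, is scheduled within one more $t_2$. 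Hence the elapsed time between the start of one \lit{communicate} call of $p$ and the start of the next is at most $2t_1+2t_2=O(t_1+t_2)$, and $p$ returns within $O(t_1+t_2)$ of its last call.

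Then I would chain these per-call bounds along an execution: if $c_p$ is the number of \lit{communicate} calls issued by processor $p$ in the execution, then $p$ has returned by time $O\!\left(c_p(t_1+t_2)\right)$, so all non-faulty processors have returned by time $O\!\left((t_1+t_2)\max_p c_p\right)$. Taking expectations over the coins for this fixed adversary and then the supremum over adversaries, the maximum expected time until all non-faulty processors return is $O\!\left(T(t_1+t_2)\right)$, using the hypothesis $\EX[\max_p c_p]=O(T)$; by the Time Complexity definition this is precisely ``time complexity $O(T)$.''

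I do not expect a real obstacle here — the content of the claim is essentially the observation that the conversion factor between ``number of \lit{communicate} calls'' and ``time'' is a single $\Theta(t_1+t_2)$, which the algorithm cannot see and the definition of time complexity divides back out. The only points needing care are bookkeeping ones: invoking the assumption that non-faulty processors acknowledge messages even after returning (so acknowledgments are never lost), using $t<n/2$ so that a majority of acknowledgments always arrives, and being mildly careful about the order of expectation and maximum over processors — the argument naturally bounds $\EX[\max_p c_p]$, which is also the quantity the paper's later analyses control, since a processor's return time depends only on its own call count.
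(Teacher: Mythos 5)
Your proof is correct and follows essentially the same route as the paper's own justification, which appears in the paragraph preceding the claim: each \lit{communicate} call plus the scheduling of the next computation step costs a non-faulty processor at most $2(t_1+t_2)$, and multiplying by the number of calls gives the $O(T(t_1+t_2))$ bound required by the definition. Your extra remark about $\EX[\max_p c_p]$ versus $\max_p \EX[c_p]$ is a reasonable tightening of the paper's informal accounting rather than a different approach.
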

		
\paragraph*{Problem Statements.} In  \emph{leader election} (\emph{test-and-set}), each processor may return either \emph{WIN} or \emph{LOSE}. Every (correct) processor should return (\emph{termination}), and only one processor may return \emph{WIN} (\emph{unique winner}). 
No processor may lose before the eventual winner starts its execution. 
The goal is to ensure that  operations are \emph{linearizable}, i.e., can be ordered 
        such that (1) the first operation is $\id{WIN}$ and every other return value is $\id{LOSE}$, and 
        (2) the order of non-overlapping operations is respected. 
\emph{Strong (tight) renaming} requires every (correct) processor to eventually return a \emph{unique} name between $1$ and $n$.

\section{The Leader Election Algorithm}
\label{sec:tas}
Our leader election algorithm guarantees that 
        if $k$ processors participate, the maximum expected number of \lit{communicate} calls
        by any processor that the \emph{strong adaptive}  adversary can achieve is 
        $O( \log^{\ast} k )$, and the maximum expected total number of messages is $O(nk)$.
We start by illustrating the main algorithmic idea.
\subsection{The PoisonPill Technique}
Consider the protocol specified in~\figureref{fig:PoisonPill}
        from the point of view of a participating processor.
The procedure receives the id of the processor as an input, 
        and returns a $\id{SURVIVE}/\id{DIE}$ indication. 
All $n$ processors react to received messages by replying with acknowledgments 
        according to the \lit{communicate} procedure.
\begin{figure}[!h]
\DontPrintSemicolon
\hrule
{\centering
{\small
\begin{algorithm}[H]
\KwIn{Unique identifier $i$ of the participating processor}
\KwOut{$\id{SURVIVE}$ or $\id{DIE}$}
\SetKwInput{KwVariables}{Local variables}
\KwVariables{
\;$\id{Status}[n] = \{\bot\};$ \Indp
\;$\id{Views}[n][n];$
\;\textbf{int} $\id{coin};$}
\textbf{procedure} $\lit{PoisonPill}\langle i \rangle$\;
{ 
\Indp
                $\id{Status}[i] \gets \id{Commit}$\nllabel{line:commit}
                        \tcc*[r]{commit to coin flip}
        
                $\lit{communicate}(\id{propagate},\id{Status}[i])$\nllabel{line:commitpr} 
                        \tcc*[r]{propagate status}

                $coin \gets \lit{random}(1 \textit{ with probability } 1/\sqrt{n},
                                         0 \textit{ otherwise})$\nllabel{line:random}
                        \tcc*[r]{flip coin}
        
                \lIf{$\id{coin} = 0$} 
                { 
                        $\id{Status}[i] \gets \id{Low-Pri}$ 
                } \lElse 
                {
                        $\id{Status}[i] \gets \id{High-Pri}$
                }
                $\lit{communicate}(\id{propagate},\id{Status}[i])$\nllabel{line:pripr}
                        \tcc*[r]{propagate updated status}                

                $\id{Views} \gets \lit{communicate}(\id{collect},\id{Status})$ \nllabel{line:pricol}
                        \tcc*[r]{collect status from $> n/2$}

                \If{$\id{Status}[i] = \id{Low-Pri}$}
                {
                        \If{$\exists \,proc.\, j: 
                                (\exists k: \id{Views}[k][j] \in \{\id{Commit},\id{High-Pri}\}$ 
                                \textbf{and}
                                $\forall k': \id{Views}[k'][j] \neq \id{Low-Pri})$}
                        {       
                                \textbf{return} $\id{DIE}$ \nllabel{line:sift}  
                \tcc*[r]{$i$ has low priority, sees processor $j$ with either high priority
                or committed and not low priority, and dies}
                        }
                }

           \textbf{return} $\id{SURVIVE}$\;
\Indm
}     
\end{algorithm}}}
\hrule
\caption{\lit{PoisonPill} Technique}
\label{fig:PoisonPill}
\end{figure}
In the following, we call a \emph{quorum} any set of more than $n/2$ processors. 

Each participating processor announces that it is about to flip a random coin by moving to state 
        \emph{Commit} (lines~\ref{line:commit}-\ref{line:commitpr}), then obtain either low or high 
        priority based on the outcome of a biased coin flip. 
The processor then propagates its priority information to  a quorum (line~\ref{line:pripr}).
Next, it collects the status of other processors from a quorum using the $\lit{communicate}( \id{collect}, \id{Status})$ 
 call on line~\ref{line:pricol} that requests views of the array $\id{Status}$ from each processor $j$, 
 returning the set of replies received, of size at least $ n / 2$.

The crux of the round procedure is the \emph{DIE} condition on line~\ref{line:sift}. 
A processor $p$ returns $\id{DIE}$ at this line if \emph{both} of the following occur: 
        (1) the processor $p$ has low priority,
        \emph{and} (2) it observes another processor $q$ that does not have low priority        
        in any of the views, but $q$ has either high priority or is committed
        to flipping a coin (state \emph{Commit}) in some view. 
Otherwise, processor $p$ survives.
The first key observation is that
\begin{claim}
\label{clm:leastone}
If all processors participating in \lit{PoisonPill} return, at least one processor survives.
\end{claim}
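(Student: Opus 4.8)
The plan is to argue by contradiction: suppose every participating processor returns $\id{DIE}$ at line~\ref{line:sift}. The first step is to extract a total order on the moment each processor performs its second \lit{communicate}, the $\lit{communicate}(\id{collect},\id{Status})$ call on line~\ref{line:pricol}; since these calls happen at well-defined points in the execution, let $p$ be the processor whose collect call \emph{completes first} among all participants. The key idea is that when $p$'s collect returns, no other processor has yet finished its collect, hence (crucially) every other processor's view as written into $p$'s $\id{Views}$ array reflects only what had been propagated \emph{before} $p$ collected.

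The next step is a case analysis on $p$'s priority. If $p$ has high priority, then $p$ itself survives (the outer \textbf{if} on line~\ref{line:sift} is not entered), contradicting the assumption. So $p$ has low priority. Then for $p$ to die, there must be some processor $q$ and some index $k$ with $\id{Views}[k][j]=\id{Views}[k][q] \in \{\id{Commit},\id{High\text{-}Pri}\}$, while no view records $q$ as $\id{Low\text{-}Pri}$. I would then chase what this means about $q$'s execution. If $\id{Views}[k][q]=\id{High\text{-}Pri}$, then $q$ flipped high priority, and by \claimref{clm:leastone}'s own logic $q$ never dies (a high-priority processor always reaches \textbf{return} $\id{SURVIVE}$), contradicting that all processors die. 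If instead the only recorded state of $q$ is $\id{Commit}$, I would show $q$ had not yet propagated its priority on line~\ref{line:pripr} at the time $p$'s collect intersected with the quorum that acknowledged $q$'s commit; but then $q$'s own collect on line~\ref{line:pricol} happens strictly after $p$'s, since $q$ must first execute line~\ref{line:pripr}. The hard part is then showing $q$ cannot die either: by induction on the completion order of collect calls, or more directly, by observing that $p$'s low-priority status — propagated on line~\ref{line:pripr} before $p$'s collect, and hence before $q$'s collect — will appear in some entry $\id{Views}[k'][p]$ of $q$'s array as $\id{Low\text{-}Pri}$, so the $\forall k'$ clause fails for the witness $p$; and $p$ is the only processor $q$ could have used as a witness only if... — here one needs to be careful, since $q$ might die on account of a \emph{third} processor.

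To handle that, I expect the cleanest route is not to fix $q$ but to run the argument on the $\prec$-minimal processor directly and extract a descending chain. Concretely: the processor $p$ with the earliest-completing collect dies only via a witness $q$ that is seen as $\id{Commit}$-only (the $\id{High\text{-}Pri}$ case is already dispatched, since high-priority processors survive). Being seen as $\id{Commit}$-only by a quorum that $p$'s collect intersects means $q$ had not propagated its priority before $p$ collected, so $q$ collects strictly later than $p$ — but $p$ was chosen to collect first among \emph{all} participants, a contradiction. Hence $p$ survives, contradicting the assumption that everyone dies. The main obstacle, and the place I would spend the most care, is the timing/indistinguishability bookkeeping: precisely arguing that "$q$ is recorded as $\id{Commit}$ and never as $\id{Low\text{-}Pri}$ in $p$'s collect" forces "$q$ had not reached line~\ref{line:pripr} when $p$ collected," which in turn relies on the quorum intersection property of \lit{communicate} (any two \lit{communicate} calls share a recipient) to rule out that $q$ propagated low priority to a quorum disjoint from the one $p$ read from.
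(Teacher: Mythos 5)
Your opening reduction (if anyone flips high priority it survives, so under the contradiction hypothesis every participant is \id{Low-Pri}) and your use of quorum intersection are both on the right track, but the extremal choice at the heart of your argument is backwards, and the final step is a non-sequitur. You choose $p$ to be the processor whose collect on line~\ref{line:pricol} \emph{completes first}, derive that any \id{Commit}-only witness $q$ must collect \emph{strictly later} than $p$, and declare this a contradiction with the choice of $p$. It is not a contradiction: ``$p$ collects first'' means precisely that every other processor collects later, so you have derived something entirely consistent with your choice. Worse, the statement you are trying to prove about this $p$ is simply false: the earliest collector is the processor with the \emph{least} information, hence the most likely to die. Concretely, with two participants, let $q$ execute lines~\ref{line:commit}--\ref{line:commitpr} (so its \id{Commit} reaches a quorum) and then pause before flipping; let $p$ run to completion, flipping $0$. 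Then $p$'s collect intersects the quorum holding $q$'s \id{Commit}, no view shows $q$ as \id{Low-Pri}, and $p$ returns $\id{DIE}$ on line~\ref{line:sift} --- even though $p$ is the first (indeed, so far the only) processor to complete a collect. The survivor in this execution is $q$, who collects last.

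The paper's proof fixes exactly this by taking the opposite extremal element, and at the earlier communication step: let $i$ be the \emph{last} processor to complete the priority propagate on line~\ref{line:pripr}. At that moment every participant $j$ has already written $\id{Status}[j]=\id{Low-Pri}$ to some quorum $Q_j$; since $i$'s subsequent collect quorum intersects each $Q_j$, and views only grow, $i$ sees a \id{Low-Pri} entry for every participant and cannot fire the DIE condition, contradicting the assumption that everyone dies. Note also that even ``last to \emph{complete} the collect'' would not quite work without care (a witness could finish propagating its priority after the intersection processor has already replied to the chosen processor's collect request); anchoring the extremal choice at the completion of line~\ref{line:pripr} is what makes the quorum-intersection step airtight. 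Your instinct to set up a descending chain or induction over completion times is a workable alternative route, but as written the base case you reduce everything to is the one that fails.
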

\begin{proof}
Assume the contrary.
Since processors with high priority always survive, all participating processors must have a low priority.
All participants propagate their low priority information to a quorum
        by calling the \lit{communicate} procedure on line~\ref{line:pripr}.
Let $i$ be the last processor that completes this \lit{communicate} call.
At this point, the status information of all participants is already propagated to a quorum. 
More precisely, for every paritipating processor $j$,
        more than half of the processors have a view $\id{Status}[j] = \id{Low-Pri}$.

Therefore, when processor $i$ proceeds to line~\ref{line:pricol} 
        and collects the $\id{Status}$ arrays from more than half of the processors, then, 
        since any two quorums intersect, for every participating processor $j$, 
        there will be a view of some processor $k'$ showing $j$'s low priority.
All non-participating processors will have priority $\bot$ in all views.
But given the structure of the protocol, 
        processor $i$ will \emph{not} return on line~\ref{line:sift} and will survive.
This contradiction completes the proof.
\end{proof}
\noindent On the other hand, we can bound the maximum expected number of processors that survive:
\begin{claim}
\label{clm:survivor}
The maximum expected number of processors that return $\id{SURVIVE}$ is $O(\sqrt{n})$.
\end{claim}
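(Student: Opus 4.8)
The plan is to split the survivors into the processors that flipped \emph{high priority} on \lineref{line:random} and those that flipped \emph{low priority}, and to bound the expected size of each group separately. The high-priority group is immediate: each participating processor sets $\id{Status}[i]\gets\id{High-Pri}$ independently with probability $1/\sqrt n$, and the adversary cannot influence this coin, so if $k\le n$ processors participate, the expected number of high-priority processors is $k/\sqrt n\le\sqrt n$. It therefore suffices to show that the expected number of \emph{low-priority} processors that return $\id{SURVIVE}$ is also $O(\sqrt n)$.

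For this I would establish the following structural fact, which holds in \emph{every} execution: if a low-priority processor $p$ returns $\id{SURVIVE}$, then $p$ completes its first $\lit{communicate}$ call, on \lineref{line:commitpr} (announcing state $\id{Commit}$), strictly before \emph{any} high-priority processor completes the same call. Suppose not, and let $q$ be a high-priority processor whose \lineref{line:commitpr} completes before $p$ issues its $\id{collect}$ on \lineref{line:pricol}. The quorum that acknowledged $q$'s propagation and the quorum that answers $p$'s $\id{collect}$ intersect in some processor $g$ (the defining property of $\lit{communicate}$); from the moment $g$ acknowledged $q$'s message it has $\id{Status}[q]=\id{Commit}$, and it can later only upgrade this to $\id{High-Pri}$ --- never to $\id{Low-Pri}$, since $q$ is high-priority. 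Hence $p$'s collected $\id{Views}$ contain an entry in $\{\id{Commit},\id{High-Pri}\}$ for $q$ but never the entry $\id{Low-Pri}$ for $q$, so the test on \lineref{line:sift} fires with $j=q$ and $p$ returns $\id{DIE}$ --- a contradiction. Note that it is essential here that \lineref{line:sift} reacts to the $\id{Commit}$ state and not merely to $\id{High-Pri}$: this is exactly what lets the argument key off the \emph{first} $\lit{communicate}$ (which precedes the coin flip) rather than the second (which follows it).

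Consequently, the number of low-priority survivors is at most $s$, the number of participating processors that complete \lineref{line:commitpr} before the first high-priority processor does (and all such processors are low-priority). It remains to bound $\EX[s]$. List the participants $p_1,p_2,\dots$ in the order in which they complete \lineref{line:commitpr}. The crucial point is that $p_j$ flips its coin on \lineref{line:random} only \emph{after} it has completed \lineref{line:commitpr}; hence, at the instant the adversary is committed to which processor occupies position $j$ in this order, it has not yet seen $p_j$'s coin, and --- whatever it has done so far --- $p_j$'s flip is a fresh $\mathrm{Bernoulli}(1/\sqrt n)$. Thus $s+1$, the position of the first high-priority processor, is stochastically dominated by a geometric random variable with success probability $1/\sqrt n$, giving $\EX[s]\le\sqrt n-1$. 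Combining the two bounds, $\EX[\text{number of survivors}]\le\sqrt n+(\sqrt n-1)=O(\sqrt n)$.

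The step I expect to be the main obstacle is the last one: turning ``the adversary cannot see $p_j$'s coin at the moment it fixes position $j$'' into a rigorous statement against an \emph{adaptive} scheduler. One must set up a filtration in which each coin is revealed exactly when (or after) the corresponding processor passes \lineref{line:commitpr}, and argue that the hitting time of the first high-priority flip is dominated by a geometric variable; equivalently, one can couple the execution with a sequence of independent coins revealed in completion order. Everything else --- the high-priority count and the structural claim --- is a one-line expectation computation and a quorum-intersection argument, respectively.
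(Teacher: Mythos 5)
Your proof is correct and follows essentially the same route as the paper's: split the survivors into high-priority flippers (expected at most $\sqrt n$) and low-priority survivors, use quorum intersection on the propagated \emph{Commit} status to show every low-priority survivor must precede the first high-priority processor, and bound that prefix by a geometric variable with mean $\sqrt n$. The only cosmetic difference is that you order processors by completion of \lineref{line:commitpr} while the paper orders them by coin-flip time, and the paper is no more formal than you are about the adaptive-adversary/stopping-time point you flag at the end.
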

\begin{proof}
Consider the random coin flips on line~\ref{line:random} and let us highlight the first time 
 when some processor $i$ flips value $1$.
We will argue that no other processor $j$ that subsequently (or simultaneously) flips value $0$ 
        can survive. 
Consider such a state.
When processor $j$ flips $0$, processor $i$ has already propagated its
        \id{Commit} status to a quorum of processors. 
Furthermore, processor $i$ has a high priority, 
        thus no processor can ever view it as having a low priority.
Hence, when processor $j$ collects views from a quorum, because every two quorums 
        have an intersection, some processor $k$ will definitely report the status of processor $i$
        as \id{Commit} or \id{High-Pri} and no processor will report \id{Low-Pri}.
Thus, processor $j$ will have to return $\id{DIE}$ on line~\ref{line:sift}.

The above argument implies that processors survive either if they flip $1$ and get a high priority, 
        or if they flip $0$ strictly before any other processor flips $1$.
Each of the at most $n$ processors independently flips a biased coin on line~\ref{line:random} 
        and hence, the number of processors that flip $1$ is at most the number of $1$'s 
        in $n$ Bernoulli trials with success probability $1/\sqrt{n}$, in expectation $\sqrt{n}$.
Processors that flip $0$ at the same time as the first $1$ do not survive, and it also takes 
        $\sqrt{n}$ trials in expectation before the first $1$ is flipped, giving an upper bound 
        $\sqrt{n}$ on the maximum expected number of processors that can flip $0$ and survive.
\end{proof}
\noindent It is possible to apply this technique recursively with some extra care and construct 
        an algorithm with an expected $O(\log \log{n})$ time complexity.
But we do not want to stop here.

\subsection{Heterogeneous PoisonPill}
Building a more efficient algorithm based on the \lit{PoisonPill} technique
        requires reducing the number of survivors beyond $\Omega(\sqrt{n})$ 
        without violating the invariant that not all participants may die.
We control the coin flip bias, but setting the probability of flipping $1$ to $1 / \sqrt{n}$ 
        is provably optimal.
Let the adversary schedule processors to execute \lit{PoisonPill} sequentially.
With a larger probability of flipping $1$, more than $\sqrt{n}$ processors 
        are expected to get a high priority and survive.
With a smaller probability, at least the first $\sqrt{n}$ processors 
        are expected to all have low priority and survive.
There are always $\Omega(\sqrt{n})$ survivors.

To overcome the above lower bound, after committing, we make each processor record the list $\ell$ 
        of all processors including itself, that have a non-$\bot$ status in some view collected 
        from the quorum.
Then we use the size of list $\ell$ of a processor to determine its probability bias.
Each processor also augments priority with its $\ell$ and propagates that as a status.
This way, every time a high or low priority of a processor $p$ is observed, 
        $\ell$ of processor $p$ is also known.
Finally, the survival criterion is modified: each processor first computes set $L$ as the union of 
        all processors whose non-$\bot$ statuses it ever observed itself, and of the $\ell$ lists
        it has observed in priority informations in these statuses.
If there is a processor in $L$ for which no reported view has low priority, 
        the current processor drops.

The algorithm is described in~\figureref{fig:PoisonPillAdapt}.
The particular choice of coin flip bias is influenced by factors that should become clear from 
        the analysis.
\begin{figure}[!h]
\DontPrintSemicolon
\hrule
{\centering
{\small
\begin{algorithm}[H]
\textbf{procedure} $\lit{HeterogeneousPoisonPill}\langle i \rangle$\;
{ 
\Indp
                $\id{Status}[i] \gets \{\lit{.stat} = \id{Commit}, \lit{.list} = \{ \} \}$
                        \nllabel{line:commit1}
                        \tcc*[r]{commit to coin flip}
        
                $\lit{communicate}(\id{propagate},\id{Status}[i])$\nllabel{line:commitpr1} 
                        \tcc*[r]{propagate status}
        
                $\id{Views} \gets \lit{communicate}(\id{collect},\id{Status})$\nllabel{line:colx}
                        \tcc*[r]{collect status from $> n/2$}
        
                $\ell \gets \{j \mid \exists k: \id{Views}[k][j] \neq \bot \}$\nllabel{line:countx}
                        \tcc*[r]{record participants}
                                        
                \lIf{$|\ell| = 1$}{$\id{prob} \gets 1$\tcc*[f]{set bias}}
        
                \lElse {$\id{prob} \gets \frac{\log{|\ell|}}{|\ell|}$\tcc*[f]{set bias}}
        
                $coin \gets \lit{random}(1 \textit{ with probability } \id{prob},
                                         0 \textit{ otherwise})$ \nllabel{line:random1}
                                \tcc*[r]{flip coin}
        
                \lIf{$\id{coin} = 0$} 
                {
                        $\id{Status}[i] \gets \{\lit{.stat}=\id{Low-Pri}, \lit{.list}=\ell\}$ 
                                \tcc*[f]{record priority and list}
                }  
        
                \lElse 
                {
                        $\id{Status}[i] \gets \{\lit{.stat}=\id{High-Pri}, \lit{.list}=\ell\}$
                                \tcc*[f]{record priority and list}
                }
        
                $\lit{communicate}(\id{propagate},\id{Status}[i])$\nllabel{line:pripr1}
                        \tcc*[r]{propagate priority and list}                

                $\id{Views} \gets \lit{communicate}(\id{collect},\id{Status})$ \nllabel{line:pricol1}
                        \tcc*[r]{collect status from $> n/2$}
        
                \If{$\id{Status}[i].\lit{stat} = \id{Low-Pri}$}
                {
                        $L \gets \cup_{k,j: \id{Views}[k][j] \neq \bot} \id{Views}[k][j].\lit{list}$
                               \nllabel{line:upd1}
                               \tcc*[r]{union all observed lists}
        
                        $L \gets L \cup \{j \mid \exists k: \id{Views}[k][j] \neq \bot \}$
                               \nllabel{line:upd2}
                               \tcc*[r]{record new participants}

                        \If{$\exists \,proc.\, j \in L: 
                                \forall k: \id{Views}[k][j].\lit{stat} \neq \id{Low-Pri}$}
                        {       
                                \textbf{return} $\id{DIE}$ \nllabel{line:sift1}  
                        \tcc*[r]{$i$ has low priority, learns about processor $j$ participating 
                        whose low priority is not reported, and dies}
                        }
                }

           \textbf{return} $\id{SURVIVE}$\;

\Indm
}     
\end{algorithm}}}
\hrule
\caption{Heterogeneous \lit{PoisonPill}}
\label{fig:PoisonPillAdapt}
\end{figure}
Despite modifications, the same argument as in~\claimref{clm:leastone} still guarantees 
        at least one survivor.
Let us now prove that the views of the processors have the following interesting \emph{closure property},
 which will be critical to bounding the number of survivors with low priority.
\begin{claim}
\label{clm:closure}
Consider any set $S$ of processors that each flip $0$ and survive.
Let $U$ be the union of all $L$ lists of processors in $S$.
Then, for $p \in U$ and every processor $q$ in the $\ell$ list of $p$, $q$ is also in $U$.
\end{claim}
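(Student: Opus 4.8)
The plan is to prove the slightly stronger statement that each \emph{individual} list $L$ is already closed under this operation, and then get the claim for $U$ for free by taking unions. Concretely, for $s\in S$ write $L_s$ for the list $L$ that $s$ computes on lines~\ref{line:upd1}--\ref{line:upd2}, and $\id{Views}_s$ for the array $s$ collects on line~\ref{line:pricol1}. I will show: for every $s\in S$ and every $p\in L_s$, the $\ell$ list $\ell_p$ that $p$ recorded on line~\ref{line:countx} satisfies $\ell_p\subseteq L_s$. Granting this, any $p\in U=\bigcup_{s\in S}L_s$ lies in some $L_s$, hence $\ell_p\subseteq L_s\subseteq U$, which is exactly the claim.

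First I would fix $s\in S$ and $p\in L_s$ and extract, from the fact that $s$ survives, that $s$ has actually \emph{seen} $p$ in $\id{Low-Pri}$ state. Indeed $s$ flipped $0$, so $\id{Status}[s].\lit{stat}=\id{Low-Pri}$ and $s$ reaches the test on line~\ref{line:sift1}; since it returns $\id{SURVIVE}$, the $\id{DIE}$ condition there fails, i.e.\ for every $j\in L_s$ there is some $k$ with $\id{Views}_s[k][j].\lit{stat}=\id{Low-Pri}$. Taking $j=p$ yields a $k$ with $\id{Views}_s[k][p].\lit{stat}=\id{Low-Pri}$; in particular $p$ participated and did record some list $\ell_p$.

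Next I would use the fact that a processor writes its own $\id{Status}$ entry at most twice: once to $\{\lit{.stat}=\id{Commit},\lit{.list}=\{\}\}$ on line~\ref{line:commit1}, and then, after flipping, exactly once to a priority value bundled with precisely the list $\ell_p$ it recorded on line~\ref{line:countx} — and these are the only two values it ever propagates. Hence the only values any processor can ever hold for $\id{Status}[p]$ are $\bot$, the $\id{Commit}$ value, or the priority value carrying exactly $\ell_p$. Since $\id{Views}_s[k][p].\lit{stat}=\id{Low-Pri}$, this forces $\id{Views}_s[k][p].\lit{list}=\ell_p$. Finally, line~\ref{line:upd1} unions $\id{Views}_s[k][p].\lit{list}$ into $L_s$, so $\ell_p\subseteq L_s$, finishing the argument.

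The one step that I expect needs a careful sentence is the identification $\id{Views}_s[k][p].\lit{list}=\ell_p$: one must rule out $s$ reading a stale or partial version of $p$'s list. This is handled by the write-once nature of $p$'s priority status (the list travels together with the $\id{Low-Pri}$/$\id{High-Pri}$ tag and is assigned exactly once), together with the observation that $\id{Commit}$ is the only other value $\id{Status}[p]$ can take, so a $\id{Low-Pri}$ reading can only be the final one — no FIFO or synchrony assumption on the channels is required. Everything else is bookkeeping against the pseudocode.
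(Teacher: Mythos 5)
Your proof is correct and follows essentially the same route as the paper's: survival of $s$ forces a $\id{Low-Pri}$ view of every $p\in L_s$, that view necessarily carries $p$'s list $\ell_p$ because the priority status is written exactly once together with $\ell_p$, and line~\ref{line:upd1} then unions $\ell_p$ into $L_s\subseteq U$. Your version is just a more explicit write-up (per-survivor closure $\ell_p\subseteq L_s$ plus the stale-read caveat) of the argument the paper states in three sentences.
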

\begin{proof}
In order for processors in $S$ to survive, they should have observed a low priority for each
 of the processors in their $L$ lists.
Thus, every processor $p \in U$ must flip $0$, as otherwise it would not have a low priority.
However, the low priority of $p$ observed by a survivor was augmented by the $\ell$ list of $p$. 
According to the algorithm, the survivor includes in its own $L$ all processors $q$ from this $\ell$ list of $p$,
 implying $q \in U$.
\end{proof}
\noindent Next, let us prove a few other useful claims:
\begin{claim}
\label{clm:pqorder}
If processor $q$ completed executing line~\ref{line:commitpr1} no later than processor $p$
        completed executing line~\ref{line:commitpr1}, then $q$ will be included
        in the $\ell$ list of $p$.
\end{claim}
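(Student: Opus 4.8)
The plan is to run the standard quorum-intersection argument for the \lit{communicate} primitive, together with the monotonicity of the $\id{Status}$ views: once a processor records a non-$\bot$ value for some entry of its $\id{Status}$ array, that entry stays non-$\bot$ forever (every value ever propagated for an entry is one of $\id{Commit}$, $\id{Low-Pri}$, $\id{High-Pri}$, all non-$\bot$, so reordering of channel messages cannot revert it).

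First I would name the two relevant quorums. Let $Q_q$ be the set of more than $n/2$ processors that acknowledge $q$'s call $\lit{communicate}(\id{propagate}, \id{Status}[q])$ on line~\ref{line:commitpr1} --- at which point $q$'s status is $\id{Commit}$ --- and let $Q_p$ be the set of more than $n/2$ processors whose replies $p$ receives in its call $\lit{communicate}(\id{collect}, \id{Status})$ on line~\ref{line:colx}. Before sending its acknowledgment to $q$, each processor in $Q_q$ processed $q$'s $\id{propagate}$ message and thereby set its local view of the $q$-entry of $\id{Status}$ to a non-$\bot$ value; by monotonicity this view remains non-$\bot$ from then on.

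Next I would line up the events in the global order of steps. By hypothesis $q$ completes line~\ref{line:commitpr1} no later than $p$ completes line~\ref{line:commitpr1}, and $p$ sends the requests for its \lit{collect} on line~\ref{line:colx} only after completing line~\ref{line:commitpr1}; hence every processor in $Q_q$ had already recorded a non-$\bot$ $q$-entry before any processor received $p$'s \lit{collect} request. Since $|Q_q| + |Q_p| > n$, the two quorums intersect; pick $k \in Q_q \cap Q_p$. When $k$ answers $p$'s \lit{collect} it reports its current $\id{Status}$ array, whose $q$-entry is non-$\bot$, so the array $\id{Views}$ collected by $p$ on line~\ref{line:colx} satisfies $\id{Views}[k][q] \neq \bot$. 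By the definition of $\ell$ on line~\ref{line:countx}, $q \in \ell$ for $p$, as claimed.

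I do not expect a genuine obstacle; the only point that needs care is the monotonicity observation, since $q$ later overwrites its own $\id{Status}[q]$ (first $\id{Commit}$ on line~\ref{line:commit1}, then $\id{Low-Pri}$ or $\id{High-Pri}$ on the way to line~\ref{line:pripr1}) and the channels may reorder these propagations --- but all the values involved are non-$\bot$, so a recipient's view of $q$'s entry never drops back to $\bot$. Everything else is just the ``any two quorums intersect'' property already exploited in the proofs of Claims~\ref{clm:leastone} and~\ref{clm:survivor}.
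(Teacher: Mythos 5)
Your proposal is correct and follows essentially the same quorum-intersection argument as the paper's proof of this claim; the extra monotonicity observation (that a recipient's view of $q$'s entry never reverts to $\bot$ after later overwrites) is a careful elaboration the paper leaves implicit, not a different route.
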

\begin{proof}
When $p$ collects statuses on line~\ref{line:colx} from a quorum, $q$ is already done propagating 
        its \id{Commit} on line~\ref{line:commitpr1}.
As every two quorum has an intersection, $p$ will observe a non-$\bot$ status of $q$ 
        on line~\ref{line:countx}.
\end{proof}
\begin{claim}
\label{clm:z1z}
The probability of at least $z$ processors flipping $0$ and surviving is $O(1/z)$.
\end{claim}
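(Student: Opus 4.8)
The plan is to first reduce the event to a purely combinatorial statement about coin flips, using \claimref{clm:pqorder} and \claimref{clm:closure}, and then bound the probability of that statement against the adaptive scheduler. For the reduction, order the participants $A_1, A_2, \dots$ by the time they complete the broadcast on \lineref{line:commitpr1}. By \claimref{clm:pqorder}, $A_i \in \ell_{A_j}$ whenever $i \le j$, so $|\ell_{A_j}| \ge j$; moreover each $\ell_{A_j}$ is a \emph{prefix} $\{A_1, \dots, A_{g(j)}\}$ of this order, being exactly the set of processors that finished \lineref{line:commitpr1} before $A_j$'s collect on \lineref{line:colx}. Suppose at least $z$ low-priority processors survive; call this set $S$ and put $U = \bigcup_{s \in S} L_s$. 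The proof of \claimref{clm:closure} shows every processor of $U$ flipped $0$, and \claimref{clm:closure} itself shows $U$ is closed under $p \mapsto \ell_p$. If $A_m$ is the member of $U$ that is latest in the order, then \claimref{clm:pqorder} gives $\{A_1, \dots, A_m\} \subseteq \ell_{A_m}$, closure gives $\ell_{A_m} \subseteq U$, and maximality gives $U \subseteq \{A_1, \dots, A_m\}$; hence $U = \{A_1, \dots, A_m\} = \ell_{A_m}$ and $m = |U| \ge |S| \ge z$. Finally, for each $j \le m$ we have $A_j \in U$, so $\ell_{A_j} \subseteq U$, i.e.\ $j \le |\ell_{A_j}| \le m$. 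Thus the event ``$\ge z$ low-priority processors survive'' is contained in $\bigcup_{m \ge z} G_m$, where $G_m$ is the coin-flip event that $A_1, \dots, A_m$ all flip $0$ and $|\ell_{A_j}| \le m$ for every $j \le m$.

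For the probability bound, reveal the schedule and coin outcomes in execution order; in the resulting filtration, at the instant $A_j$ reaches \lineref{line:random1} the size $|\ell_{A_j}|$ is already fixed (it was determined on \lineref{line:colx}), so $A_j$ flips $1$ with conditional probability $(\log|\ell_{A_j}|)/|\ell_{A_j}|$ when $|\ell_{A_j}| > 1$, and with probability $1$ when $|\ell_{A_j}| = 1$ (in which case $A_j \notin G_m$). On $G_m$ we have $|\ell_{A_j}| \in [j, m]$, and $x \mapsto (\log x)/x$ is decreasing past a small constant, so each of $A_1, \dots, A_m$ flips $0$ with conditional probability at most $1 - (\log m)/m$, whence $\Pr[G_m] \le (1 - (\log m)/m)^m \le 1/m$. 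A naive union bound then only yields $\sum_{m \ge z} 1/m$, which is too weak. To avoid it, one observes that on the bad event the first $m$ processors to execute \lineref{line:random1} are exactly $A_1, \dots, A_m$: any later committer $A_{m'}$, $m' > m$, is outside $U$, hence outside $L_s$ for every survivor $s$, hence finishes \lineref{line:commitpr1} only after $s$'s final collect on \lineref{line:pricol1}, which is after all of $A_1, \dots, A_m$ have flipped. Thus $m$ is pinned to the number of $0$-flips preceding the first $1$-flip, so the whole bad event can be charged to a single stopping time; bounding step-by-step as the flips are revealed collapses the sum and gives $\Pr \le O(1/z)$.

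The main obstacle is precisely this last step. The identities $A_1, A_2, \dots$, the list sizes $|\ell_{A_j}|$ (equivalently, the bias with which each processor flips), and the stopping index are all under adaptive adversarial control and correlated with the coin outcomes, so no estimate may be made unconditionally. The argument must exploit the trade-off that the bias $(\log|\ell|)/|\ell|$ of \lineref{line:random1} forces on the adversary: enlarging the processors' views delays the first high-priority flip, but it also inflates the $L$-lists that a surviving low-priority processor must see entirely in the \emph{low} state, so the probability that a surviving processor sees a large $L$-list all low decays roughly like the reciprocal of that list's size. Converting this qualitative trade-off into the claimed $O(1/z)$ bound, uniformly over adversary strategies and correlations, is the crux of the proof.
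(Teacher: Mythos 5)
Your combinatorial reduction is exactly the paper's: using \claimref{clm:pqorder} and \claimref{clm:closure} you correctly show that on the bad event the set $U$ is a prefix $\{A_1,\dots,A_m\}$ of the order in which processors complete \lineref{line:commitpr1}, with $m\ge z$, every member flipping $0$, and every member's list satisfying $j\le|\ell_{A_j}|\le m$; the per-prefix bound $(1-\log m/m)^m\le 1/m$ is also the paper's. Where you stop, however, is exactly where the claim still has to be proved: you observe (correctly) that a union bound over $m\ge z$ gives the divergent $\sum_{m\ge z}1/m$, you gesture at a stopping-time fix, and then you explicitly declare that converting the trade-off into the $O(1/z)$ bound ``is the crux of the proof.'' An argument that labels its last step as the open crux is not a proof, so this is a genuine gap. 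Moreover, the particular fix you sketch --- pinning $m$ to the number of $0$-flips preceding the first $1$-flip and charging the whole event to that stopping time --- does not work on its own: the adversary can inflate every list to size $n$, making each flip land $0$ with probability $1-\log n/n$, so the probability that the first $z$ flips are all $0$ can be made close to $1$. The $1/z$ decay cannot come from the first-$1$-flip time alone; it must come from the self-containment constraint $|\ell_{A_j}|\le m$, which forces the adversary to keep the lists small precisely on the prefix it wants to save. (Also, $m$ need not equal the number of $0$-flips before the first $1$: processors $A_{m'}$ with $m'>m$ may flip $0$ too, and your argument that they flip only after all of $A_1,\dots,A_m$ covers the survivors $s\in S$ but not the members of $U\setminus S$.)

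For comparison, the paper does not attempt a union bound at all: it treats the realized $U$ as given and bounds the probability that its $|U|$ members, each flipping $1$ with probability at least $\log|U|/|U|$, all flip $0$, obtaining $(1-\log|U|/|U|)^{|U|}=O(1/|U|)=O(1/z)$. This is informal about the fact that $|U|$ is itself random --- the very issue you raise --- but the resolution is not the first-$1$-flip time. One clean way to finish: replace $|U|$ by the \emph{minimal} $m\ge z$ such that $\max_{j\le m}|\ell_{A_j}|\le m$. The bad event implies that this minimal $m$ exists and that $A_1,\dots,A_m$ all flip $0$; since each $|\ell_{A_j}|$ is fixed (by \lineref{line:colx}) before $A_j$ flips on \lineref{line:random1}, whether the minimality condition holds at level $m$ is determined by information revealed no later than the corresponding flips, and the candidate witnesses for different $m$ are totally ordered by prefix containment, so only the minimal one needs to be paid for; conditioning flip by flip then gives $(1-\log m/m)^m\le 1/m\le 1/z$ for that single event, with no sum over $m$. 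You identified the right obstacle but neither closed it nor proposed a viable route to closing it, so the proposal falls short of establishing the claim.
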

\begin{proof}
Let $S$ be the set of the $z$ processors that flip $0$ and survive and let us define $U$ as in~\claimref{clm:closure}.
For any processor $p \in U$ and any processor $q$ that completes executing line~\ref{line:commitpr1} no later than $p$,
 by~\claimref{clm:pqorder} processor $q$ has to be contained in the $\ell$ list of $p$, which by the closure property (\claimref{clm:closure})
 implies $q \in U$.
Thus, if we consider the ordering of processors according to the time they complete executing line~\ref{line:commitpr1},
 all processors not in $U$ must be ordered strictly after all processors in $U$.

Therefore, during the execution, first $|U|$ processors that complete line~\ref{line:commitpr1}
        must all flip $0$.
The adversary may influence the composition of $U$, but by the closure property, each $\ell$ list of processors in $U$ 
        contains only processors in $U$, meaning $|\ell| \leq |U|$.
So the probability for each processor to flip $0$ is at most $(1-\frac{\log{|U|}}{|U|})$ and for
        all processors in $U$ to flip $0$'s is at most $(1-\frac{\log{|U|}}{|U|})^{|U|} = O(1/|U|)$. 
This is $O(1/z)$ since all $z$ survivors from $S$ are included in their own lists and hence also in $U$.
\end{proof}
\noindent We have never relied on knowing $n$. 
If $k \leq n $ processors participate in the heterogeneous $\lit{PoisonPill}$, we get
\begin{lemma}
\label{lem:loglow}
The maximum expected number of processors that flip $0$ and survive is $O(\log{k}) + O(1)$.
\end{lemma}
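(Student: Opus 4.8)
The plan is to turn the tail bound of \claimref{clm:z1z} into a bound on the expectation of the random variable $X$ counting processors that flip $0$ and survive. \claimref{clm:z1z} tells us that $\Pr[X \geq z] = O(1/z)$ for every $z$, and we also have the trivial bound $X \leq k$ always (at most $k$ processors participate). The expectation of a nonnegative integer-valued random variable is $\EX[X] = \sum_{z \geq 1} \Pr[X \geq z]$, so I would split this sum at some threshold and use the sharper of the two available estimates on each piece.

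Concretely, first I would fix the constant $c$ hidden in \claimref{clm:z1z}, so that $\Pr[X \geq z] \leq c/z$ for all $z \geq 1$ (and of course $\Pr[X \geq z] = 0$ for $z > k$). Then I would write
\[
\EX[X] \;=\; \sum_{z=1}^{k} \Pr[X \geq z] \;\leq\; \sum_{z=1}^{k} \min\!\left(1, \frac{c}{z}\right).
\]
For $z \leq c$ bound each term by $1$, contributing $O(1)$; for $c < z \leq k$ bound each term by $c/z$, and the partial harmonic sum $\sum_{z=1}^{k} 1/z = O(\log k)$ gives a contribution of $O(\log k)$. Adding the two pieces yields $\EX[X] = O(\log k) + O(1)$, which is exactly the claimed bound. (If one prefers to be careful about whether $k$ could be small relative to $c$, the trivial bound $X \leq k$ already handles that regime, so the statement holds uniformly.)

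The only thing to be slightly careful about is that \claimref{clm:z1z} is stated for a fixed set $S$ of $z$ survivors, whereas $X$ is the total count; but "$X \geq z$" is precisely the event that \emph{some} such set $S$ of size $z$ exists, i.e. the event analyzed there, so no union bound over $\binom{k}{z}$ choices is needed — the closure argument in \claimref{clm:z1z} already bounds the probability that the first $|U| \geq z$ processors to finish line~\ref{line:commitpr1} all flip $0$, and that event contains $\{X \geq z\}$. I do not anticipate a genuine obstacle here: this is the standard "integrate the tail" computation, and the main (if minor) point is simply invoking the correct tail bound with the right event identification and splitting the harmonic-type sum at the constant threshold.
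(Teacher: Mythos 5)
Your proof is correct and follows essentially the same route as the paper's (omitted) argument: write $\mathbb{E}[X]=\sum_{z}\Pr[X\geq z]$, invoke \claimref{clm:z1z} for the tail, and sum the harmonic series to get $O(\log k)+O(1)$. Your remark that $\{X\geq z\}$ is exactly the event bounded in \claimref{clm:z1z}, so no union bound over sets $S$ is needed, is a correct and worthwhile clarification.
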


\begin{lemma}
\label{lem:loghigh}
The maximum expected number of processors that flip $1$ is $O(\log^2{k}) + O(1)$.
\end{lemma}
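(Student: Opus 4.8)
The plan is to charge each processor that flips $1$ to its position in the order in which the $k$ participants complete their first \lit{communicate} call on Line~\ref{line:commitpr1}, and to use the fact that this order forces the list $\ell$ computed on Line~\ref{line:countx} — and hence the bias $\id{prob}$ on Line~\ref{line:random1} — to be large. Concretely, fix an execution and list the processors as $p_1, p_2, \ldots$ in the order in which they complete Line~\ref{line:commitpr1} (steps are totally ordered, so this is well defined). By~\claimref{clm:pqorder}, every one of $p_1, \ldots, p_r$ lies in the list $\ell$ that $p_r$ records on Line~\ref{line:countx}, so processor $p_r$ has $|\ell| \geq r$. Since Line~\ref{line:commitpr1} precedes the coin flip on Line~\ref{line:random1} in $p_r$'s code, both the index $r$ and the bound $|\ell| \geq r$ are already determined by the time $p_r$ flips. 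Hence, conditioned on the entire state of the execution just before $p_r$'s flip, the probability that $p_r$ flips $1$ equals exactly $\log|\ell| / |\ell|$ when $|\ell| \geq 2$, and this is at most $\log r / r$ whenever $r \geq 3$, because $x \mapsto \log x / x$ is non-increasing for $x \geq 3$; for $r \in \{1,2\}$ the probability is trivially at most $1$.

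I would then sum over ranks by linearity of expectation. Let $X_r$ be the indicator that an $r$-th processor to complete Line~\ref{line:commitpr1} exists and flips $1$; then the number of processors flipping $1$ is $\sum_r X_r$, so its expectation is $\sum_r \Pr[X_r = 1]$. Averaging the conditional bound above over the state gives $\Pr[X_r = 1] \leq \log r / r$ for $r \geq 3$ and $\Pr[X_r=1] \leq 1$ for $r \leq 2$. Since at most $k$ processors participate, this yields $2 + \sum_{r=3}^{k} \log r / r = O(\log^2 k) + O(1)$, using the routine estimate that $\sum_{r=1}^{k} \log r / r$ is within a constant of $\int_1^{k} (\log x / x)\,dx = \tfrac12(\log k)^2$.

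The one delicate point, and the one I would be most careful about, is the interaction with the adaptive adversary: the ordering $p_1, p_2, \ldots$, the composition of each list $\ell$, and even which processors take a step at all are under adversarial, coin-flip-dependent control. What makes the argument robust is that the inequality $|\ell| \geq r$ for the $r$-th committer holds \emph{deterministically in every execution} — it is a purely structural consequence of~\claimref{clm:pqorder} and of the fact that any two quorums intersect, independent of how the adversary resolves its choices. Once that is in hand, conditioning on the state immediately before $p_r$'s flip pins down the conditional flip probability and bounds it by $\log r / r$ regardless of anything the adversary did earlier, and only the elementary sum $\sum_{r \leq k} \log r / r = O(\log^2 k)$ remains.
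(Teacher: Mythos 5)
Your proof is correct and follows essentially the same route as the paper's: order the participants by completion of the first \lit{communicate} call, use \claimref{clm:pqorder} to get $|\ell|\geq r$ for the $r$-th committer deterministically in every execution, and sum $\sum_r \log r/r = O(\log^2 k)$. You are in fact slightly more careful than the paper at the boundary, since $x\mapsto \log x/x$ is only non-increasing from $x\geq 3$ on, which you handle by bounding the first two ranks by $1$.
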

\begin{proof}
Consider the ordering of processors according to the time they complete executing
        line~\ref{line:commitpr1}, breaking ties arbitrarily.
Due to~\claimref{clm:pqorder}, the processor that is ordered first always has $|l| \geq 1$, 
        the second processor always computes $|l| \geq 2$, and so on.
The probability of flipping $1$ decreases as $|l|$ increases, and the best expectation achievable
        by adversary is $1 + \sum_{l=2}^{k} \frac{\log{l}}{l} = O(\log^2{k}) + O(1)$ as desired.
\end{proof}
\subsection{Final construction}
The idea of implementing leader election is to have rounds of heterogeneous \lit{PoisonPill},
        where all processors participate in the first round and only the survivors of round $r$ 
        participate in round $r+1$.
Each processor $p$, before participating in round $r_p$, first propagates $r_p$ as its current round 
        number to a quorum, then collects information about the rounds of other processors from a quorum.
Let $R$ be the maximum round number of a processor in all views that $p$ collected.
To determine the winner, we use the idea from~\cite{SSW91}: 
        if $R > r_p$, then $p$ loses and if $R < r_p - 1$ then $p$ wins.    
We also use a standard doorway mechanism~\cite{Afek92} to ensure linearizability.
The pseudocode of the final construction is given in~\appendixref{app:tasproofs}, 
        along with the complete proof of the following statement:
\begin{reptheorem}{thm:maintas}
Our leader election algorithm is linearizable.
If there are at most $\lceil n/2 \rceil -1$ processor faults, all non-faulty processors 
        terminate with probability $1$.
For $k$ participants, it has time complexity $O(\log^{\ast}{k})$ and message complexity $O(kn)$.
\end{reptheorem}
\noindent The performance guarantees follow from~\lemmaref{lem:loglow} and~\lemmaref{lem:loghigh} 
        with some careful analysis.
In particular, later rounds in which maximum expected number of participants is constant require
        special treatment.

\section{The Renaming Algorithm}
\begin{figure}[th]
\hrule
\DontPrintSemicolon
{\centering
{\small
\begin{algorithm}[H]
\KwIn{Unique identifier $i$ from a large namespace}
\KwOut{\textbf{int} $\id{name} \in [n]$}
\SetKwInput{KwVariables}{Local variables}
\KwVariables{
\;\textbf{bool} $\id{Contended}[n] = \{\lit{false}\};$ \Indp
\;\textbf{bool} $\id{Views}[n][n];$
\;\textbf{int} $\id{coin}, \id{spot}, \id{outcome};$}
\textbf{procedure} $\lit{getName}\langle i \rangle$\;
{ 
\Indp
        \While{$\id{true}$\nllabel{line:while}}
        {
                $\id{Views} \gets \lit{communicate}(\id{collect},\id{Contended})$
                        \nllabel{line:colContended}
                        \tcc*[r]{collect contention information}
        
                \For{$j \gets 1$ \textbf{to} $n$}
                {
                        \If{$\exists k: \id{Views}[k][j] = \lit{true}$}
                        {
                                $\id{Contended}[j] \gets \lit{true}$
                                       \nllabel{line:setcont}
                                       \tcc*[r]{mark names that became contended}
                        }
                }

                $\lit{communicate}(\id{propagate},\{\id{Contended}[j] 
                        \mid \id{Contended}[j] = \lit{true}\})$
                        \nllabel{line:prepropcontend}
                        \tcc*[r]{propagate}
        
                $\id{spot} \gets \lit{random}(j \mid \id{Contended}[j] = \lit{false})$
                        \nllabel{line:pickspot}
                        \tcc*[r]{pick random uncontended name}                
        
                $\id{Contended}[\id{spot}] \gets \lit{true}$
        
                $\id{outcome} \gets \lit{LeaderElect}_{\id{spot}}(i)$
                         \tcc*[r]{contend for a new name}
        
                $\lit{communicate}(\id{propagate},\id{Contended}[\id{spot}])$
                        \nllabel{line:propcontend}
                        \tcc*[r]{propagate contention}

                \If{$\id{outcome} = \id{WIN}$}
                {
                        \textbf{return} $spot$\nllabel{line:namefound}
                                \tcc*[r]{win iff you are leader}
                }
        }
\Indm
}
\end{algorithm}}}
\hrule
\caption{Pseudocode of the renaming algorithm for $n$ processors.}
\label{fig:rename}
\end{figure}
\noindent The algorithm is described in~\figureref{fig:rename}. 
There is a separate leader election protocol for each name;
        which a processor must win in order to claim the name.
Each processor repeatedly chooses a new name and contends for it by
        participating in the corresponding leader election,
        until it eventually wins.
Processors keep track of contended names and use this information
        to choose the next name to compete for:
        in particular, the next name is selected uniformly at random from the uncontended names.
The algorithm is correct.
\begin{replemma}{lem:renameterminate}
No two processors return the same name from the \lit{getName} call and if there are at most $\lceil n/2 \rceil -1$ processor faults, 
 all non-faulty processors terminate with probability 1.
\end{replemma}
\noindent Omitted proofs can be found in~\appendixref{app:renproofs}.
Let us now introduce some notation. 
For an arbitrary execution, and for every name $u$, consider the first time when more than 
        half of processors have $\id{Contended}[u] = \lit{true}$ in their view
        (or time $\infty$, if this never happens).
Let $\prec$ denote the name ordering based on these times, and 
        let $\{u_i\}$ be the sequence of names sorted according to increasing $\prec$.
Among the names with time $\infty$, sort later the ones that are never contended by the processors.
Resolve all the remaining ties according to the order of the names.
This ordering has the following useful temporal property.
\begin{replemma}{lem:renameorder}
In any execution, if a processor views $\id{Contended}[i] = \lit{true}$ 
        in some while loop iteration, 
        and in some subsequent iteration on line~\ref{line:pickspot} the same 
        processor views $\id{Contended}[j] = \lit{false}$, $i \prec j$ has to hold.
\end{replemma}
\noindent 
Let $X_i$ be a random variable, denoting the number of processors 
        that ever contend in a leader election for the name $u_i$. 
The following holds.
\begin{replemma}{lem:msgrenameweak}
The message complexity of our renaming algorithm is
        $O(n \cdot \mathbb{E}[\sum_{i=1}^n X_i])$. 
\end{replemma}
\noindent 
We partition names $\{u_i\}$ into $\log{n}$ \emph{groups}, 
        where the first group $G_1$ contains the first $n/2$ names, 
        second group $G_2$ contains the next $n/4$ names, etc.
We use notation $G_{j' \geq j}$, $G_{j'' > j}$ and $G_{j''' < j}$ to denote 
        the union of all groups $G_{j'}$ where $j' \geq j$, 
        all groups $G_{j''}$ where $j'' > j$, and all groups $G_{j'''}$ where $j''' < j$, respectively.
We can now split any execution into at most $\log{n}$ phases.
The first phase starts when the execution starts and ends as soon as for each $u_i \in G_1$ 
        more than half of the processors view $\id{Contended}[u_i] = \lit{true}$ 
        (the way $\{u\}$ is sorted, this is the same as when 
        the contention information about $u_{n/2}$ is propagated to a quorum). 
At this time, the second phase starts and ends when for each $u_i \in G_2$
        more than half of the processors view $\id{Contended}[u_i] = \lit{true}$.
When the second phase ends, the third phase starts, and so on.

Consider any loop iteration of some processor $p$ in some execution.
We say that an iteration \emph{starts} at a time instant when $p$ executes line~\ref{line:while} 
        and reaches line~\ref{line:colContended}.
Let $V_p$ be $p$'s view of the $\id{Contended}$ array
        right before picking a spot on line~\ref{line:pickspot} in the given iteration.
We say that an iteration is $\id{clean}(j)$,
        if the iteration starts during phase $j$
        and no name from later groups $G_{j'' > j}$ is contended in $V_p$.       
We say that an iteration is $\id{dirty}(j)$,
        if the iteration starts during phase $j$
        and some name from a later group $G_{j'' > j}$ is contended in $V_p$.

Observe that any iteration that starts in phase $j$ can be uniquely classified as 
        $\id{clean}(j)$ or $\id{dirty}(j)$ and in these iterations,
        processors view all names $u_i \in G_{j''' < j}$ from previous groups as contended.
\begin{replemma}{lem:ncontend}
In any execution, at most $\frac{n}{2^{j-1}}$ processors ever contend for names 
        from groups $G_{j' \geq j}$.
\end{replemma}
\begin{lemma}
\label{lem:phaseunion}
For any fixed $j$, the total number of $\id{clean}(j)$ iterations
        is larger than or equal to $\alpha n + \frac{n}{2^{j-1}}$ with probability 
        at most $e^{-\frac{\alpha n}{16}}$ for all $\alpha \geq \frac{1}{2^{j-5}}$.
\end{lemma}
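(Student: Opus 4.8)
The plan is to show that once a $\id{clean}(j)$ iteration picks a spot, it has a good chance of picking a spot in $G_{j' \geq j}$, and whenever it does, that iteration makes ``progress'' in a balls-into-bins sense; since only $n/2^{j-1}$ processors can ever touch $G_{j' \geq j}$ (Lemma~\ref{lem:ncontend}), not too many clean iterations can occur before the phase ends. First I would fix $j$ and consider a $\id{clean}(j)$ iteration of some processor $p$. By definition the iteration starts in phase $j$, so by the temporal property (Lemma~\ref{lem:renameorder}) and the observation preceding the lemma, in $V_p$ all names in $G_{j''' < j}$ are contended, while no name in $G_{j'' > j}$ is contended. Hence the uncontended names in $V_p$ that $p$ draws from on line~\ref{line:pickspot} are a subset of $G_j$ together with possibly some already-chosen-by-others names of $G_{j' \geq j}$; the key point is that the set of names $p$ can draw, restricted to those that will actually end up in $G_{j' \geq j}$, is large relative to the total draw pool. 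More carefully: the names $p$ sees as uncontended lie in $G_{j' \geq j}$ (the earlier groups are all contended in $V_p$, and $p$ cannot be seeing $G_{j'' > j}$ names as contended because the iteration is clean — but uncontended $G_{j'' > j}$ names it may draw, which is fine since those are still in $G_{j' \geq j}$). So \emph{every} name $p$ can draw in a $\id{clean}(j)$ iteration lies in $G_{j' \geq j}$, which means each clean$(j)$ iteration causes $p$ to contend for some name of $G_{j' \geq j}$.

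Next I would set up the counting argument. Call a $\id{clean}(j)$ iteration of processor $p$ \emph{fresh} if the spot it picks is a name that $p$ has not previously contended for; since each processor contends for a given name at most a bounded number of times before either winning it or seeing it contended (and $\prec$-ordering pushes it past that name), I would argue the number of non-fresh clean$(j)$ iterations is controlled separately, or — cleaner — I would instead bound the number of \emph{distinct} (processor, name) contention pairs with name in $G_{j' \geq j}$, which by Lemma~\ref{lem:ncontend} is at most $\frac{n}{2^{j-1}} \cdot |G_{j' \geq j}| $... no, that's too weak. The right move: each clean$(j)$ iteration in which $p$ picks spot $s$ and \emph{loses} the leader election $\lit{LeaderElect}_s$ results in $p$ marking $s$ contended, and by Lemma~\ref{lem:renameorder} $p$ will never again draw $s$ while in a clean$(j)$ (or later) iteration; each clean$(j)$ iteration in which $p$ wins ends the algorithm for $p$. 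So the number of clean$(j)$ iterations of $p$ is at most $1$ plus the number of names in $G_{j' \geq j}$ that $p$ ever contends for and loses. Summing over $p$, the total number of clean$(j)$ iterations is at most (number of processors touching $G_{j' \geq j}$) plus (total number of (processor, name) contention events on $G_{j' \geq j}$). The first term is $\le n/2^{j-1}$ by Lemma~\ref{lem:ncontend}. For the second term I would use a concentration argument: each clean$(j)$ iteration picks its spot uniformly at random among its uncontended names, all of which lie in $G_{j' \geq j}$; a standard balls-into-bins / coupon-collector bound then says that after $\alpha n + \frac{n}{2^{j-1}}$ such iterations, with the drawn spots being uniform over a shrinking-but-still-$\Theta(n/2^j)$-sized pool, the probability that the phase has not ended (i.e. some $u_i \in G_j$ still has no quorum viewing it contended — equivalently some name of $G_j$ has received too few contention events) is at most $e^{-\alpha n/16}$.

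The concentration step is where I would be most careful, and I expect it to be the main obstacle. The difficulty, flagged in the introduction, is that the views $V_p$ — and hence the supports of the uniform draws — are adversarially controlled and correlated across iterations; the adversary can delay $\id{propagate}$ messages to inflate collision probability. The way I would handle this is a \emph{worst-case domination} argument: regardless of the adversary's choices, in any clean$(j)$ iteration the draw is uniform over a set $A \subseteq G_{j' \geq j}$ with $|A| \ge |G_j| - (\text{number of } G_j \text{ names already seen contended by } p) \ge$ some quantity bounded below as long as the phase hasn't ended (since the phase ends precisely when all of $G_j$ is quorum-contended, and $|G_j| = n/2^j$). So each clean$(j)$ draw, conditioned on the entire past, hits any particular still-uncontended name of $G_j$ with probability at least $\frac{1}{|G_{j' \geq j}|} = \frac{2^{j}}{n} \cdot \frac{1}{2} = \Theta(2^j/n)$; this gives a stochastic-domination by a sequence of independent Bernoulli trials, to which I apply a Chernoff bound. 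With $\alpha n$ iterations past the first $n/2^{j-1}$, the expected number of hits on the set of not-yet-done $G_j$ names is $\ge \alpha n \cdot \Theta(2^j/n) = \Theta(\alpha 2^j)$, and since $|G_j| = n/2^j$ we need the constraint $\alpha \ge 1/2^{j-5}$ exactly so that this expected count dominates $|G_j|$ by a constant factor, making the Chernoff tail $e^{-\Theta(\alpha n)}$ — matching $e^{-\alpha n/16}$ after tracking constants. I would close by noting the additive $n/2^{j-1}$ term absorbs the ``one per processor'' slack and the at-most-$n/2^{j-1}$ processors that can appear in $G_{j' \geq j}$ at all, so the bound holds uniformly over adversary strategies.
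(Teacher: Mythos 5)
There is a genuine gap, and it is in exactly the step you flagged as the main obstacle: the concentration argument. You reduce the lemma to a coupon-collector statement (``after $\alpha n$ clean draws, every name of $G_j$ must have been hit''), with per-draw, per-name hit probability $\Theta(2^{j-1}/n)$. But that gives a per-name miss probability of about $e^{-\alpha 2^{j-1}}$ and, after a union bound over the $n/2^j$ names of $G_j$, a failure probability of order $(n/2^j)e^{-\alpha 2^{j-1}}$. You yourself compute that the expected per-name hit count is only $\Theta(\alpha 2^j)$, which at the threshold $\alpha = 1/2^{j-5}$ is a \emph{constant} (about $16$); for small $j$ (e.g.\ $j=1$, $\alpha=32$) your bound is roughly $(n/2)e^{-16}$, which is enormous compared to the claimed $e^{-\alpha n/16}=e^{-2n}$. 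No rescuing of constants fixes this: covering a set of $n/2^j$ coupons cannot produce a tail of $e^{-\Omega(\alpha n)}$ unless $2^j = \Omega(n)$. A second, independent problem is that $G_j$ is not a fixed target: by construction it consists of whichever $n/2^j$ names happen to become quorum-contended next, so the adversary influences its composition, and any argument that treats it as fixed needs a union bound over the $\binom{2^{1-j}n}{2^{-j}n}$ possible choices (this is exactly why the paper's exponent degrades from $\alpha n/8$ to $\alpha n/16$). Your proposal does not address this adaptivity.

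The paper's proof gets the exponential-in-$\alpha n$ tail by counting something else entirely: not hits on $G_j$, but hits on the \emph{later} groups $G_{j''>j}$. In a $\id{clean}(j)$ iteration the draw pool is contained in $G_{j'\geq j}$ (size $n/2^{j-1}$) and contains all of $G_{j''>j}$ (size $n/2^j$), so each clean draw lands in $G_{j''>j}$ with probability at least $1/2$ --- a constant, not $\Theta(2^j/n)$. On the other hand, among the clean iterations completed before the first completed $\id{dirty}(j)$ iteration, each processor can contend in $G_{j''>j}$ at most once (a second attempt would require completing the first, which ends this window), and only $n/2^j$ processors ever touch $G_{j''>j}$ by Lemma~\ref{lem:ncontend}; so at most $n/2^j \leq \alpha n/32$ of these $\alpha n$ draws may land in $G_{j''>j}$. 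This is a binomial lower-tail event with mean $\alpha n/2$ and threshold $\alpha n/32$, giving $e^{-\alpha n/8}$, and the union bound over the composition of $G_j$ costs a factor $(2e)^{n/2^j}$, yielding $e^{-\alpha n/16}$. Your deterministic bookkeeping (at most $n/2^{j-1}$ processors with an iteration straddling the end of the safe window) matches the paper's additive term, but the probabilistic core of your argument measures the wrong event and cannot reach the stated bound.
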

\begin{proof}
Fix some $j$.
Consider a time $t$ when the first $\id{dirty}(j)$ iteration is completed.
At time $t$, there exists an $i$ such that $u_i \in G_{j''>j}$ and a quorum of processors view $\id{Contended}[i] = \lit{true}$, 
 so all iterations that start later will set $\id{Contended}[i] \gets \lit{true}$ on line~\ref{line:setcont}.
Therefore, any iteration that starts after $t$ must observe $\id{Contended}[i] = \lit{true}$ 
        on line~\ref{line:pickspot} and by definition cannot be $\id{clean}(j)$.
By~\lemmaref{lem:ncontend}, at most $\frac{n}{2^{j-1}}$ processors
        can have active $\id{clean}(j)$ iterations at time $t$.
The total number of $\id{clean}(j)$ iterations is thus upper bounded by 
        $\frac{n}{2^{j-1}}$ plus the number of $\id{clean}(j)$ 
        iterations completed before time $t$, which we denote as \emph{safe} iterations.\footnote{
If no $\id{dirty(j)}$ iteration ever completes, then we call all $\id{clean(j)}$ iterations safe.
}
Safe iterations all finish before any iteration where a processor contends for a name in $G_{j''>j}$ is completed.
By~\lemmaref{lem:ncontend}, at most $\frac{n}{2^{j}}$ different processors can ever contend 
        for names in $G_{j''>j}$, therefore, $\alpha n$ safe iterations can occur only if in 
        at most $\frac{n}{2^{j}}$ of them processors choose to contend in $G_{j''>j}$.
Otherwise, some processor would have to complete an iteration where it contended 
        for a name in $G_{j''>j}$.
        
In every $\id{clean}(j)$ iteration, on line~\ref{line:pickspot}, 
        any processor $p$ contends for a name in $G_{j' \geq j}$ uniformly at random 
        among non-contended spots in its view $V_p$.
With probability at least $\frac{1}{2}$, $p$ contends for a name from $G_{j''>j}$,
        because by definition of $\id{clean}(j)$, 
        all spots in $G_{j''>j}$ are non-contended in $V_p$.

Let us describe the process by considering a random variable 
        $Z \sim \mathrm{B}(\alpha n, \frac{1}{2})$ for $\alpha \geq \frac{1}{2^{j-5}}$,
        where each success event corresponds to an iteration contending in $G_{j''>j}$.
By the Chernoff Bound, the probability of $\alpha n$ iterations with at most
        $\frac{n}{2^{j}}$ processors contending in $G_{j''>j}$ is:
\begin{align*}
\Pr\left[Z \leq \frac{n}{2^j}\right] = 
\Pr\left[Z \leq \frac{\alpha n}{2} \left( 1 - \frac{2^{j-1} \alpha - 1}{2^{j-1} \alpha}\right)\right]
\leq \exp\left(- \frac{\alpha n (2^{j-1} \alpha - 1)^2}{2 (2^{j-1} \alpha)^2} \right)
\leq e^{-\frac{\alpha n}{8}}
\end{align*}
So far, we have assumed that the set of names belonging to the later groups $G_{j''>j}$ was fixed,
        but the the adversary controls the execution.
Luckily, what happened before phase $j$ (i.e. the actual names that were acquired from $G_{j'''<j}$)
        is irrelevant, because all the names from the earlier phases
        are viewed as contended by all iterations that start in phases $j' \geq j$.
Unfortunately, however, the adversary also influences what names belong to
        group $j$ and to groups $G_{j''>j}$.
There are $\binom{2^{1-j}n}{2^{-j}n}$ different possible choices for names in $G_j$, 
        and by a union bound, the probability that $\alpha n$ iterations can occur 
        even for one of them is at most:
\begin{align*}
e^{-\frac{\alpha n}{8}} \cdot \binom{2^{1-j}n}{2^{-j}n}
\leq e^{-\frac{\alpha n}{8}} \cdot (2e)^{2^{-j}n}
\leq e^{-n (2^{-3}\alpha - 2^{1-j})}
\leq e^{-\frac{\alpha n}{16}}, \textnormal{ proving the claim.}
\end{align*}
\end{proof}
\noindent Plugging $\alpha = \beta - \frac{1}{2^{j-1}} \geq \frac{1}{2^{j-5}}$ in the above lemma,
        we obtain that the total number of $\id{clean}(j)$ iterations 
        is larger than or equal to $\beta n$ with probability 
        at most $e^{-\frac{\beta n}{32}}$ for all $\beta \geq \frac{1}{2^{j-6}}$.
 Let $X_i(\id{clean})$ be the number of processors that ever contend for 
        $u_i \in G_j$ in some $\id{clean}(j)$ iteration and define $X_i(\id{dirty})$ analogously:
        as the number of processors that ever contend for 
        $u_i \in G_j$ in some $\id{dirty}(j)$ iteration.
Relying on the above bound on the number of $\id{clean}(j)$ iterations, we get the following result:
\begin{replemma}{lem:rentype1}
$\mathbb{E}[\sum_{i=1}^n X_i(\id{clean})] = O(n).$
\end{replemma}
\noindent Each iteration where a processor contends for a name $u_i \in G_j$ is by definition
        either as $\id{clean}(j)$, $\id{dirty}(j)$ or starts in a phase $j''' < j$.
Let us call these $\id{cross}(j)$  and denote by $X_i(\id{cross})$ the number of such iterations.
We show that in any execution, for each $j$, any processor participates in at most one $\id{dirty}(j)$
        and at most one $\id{cross}(j)$ iteration.
This allows us to prove with some work that $\mathbb{E}[\sum_{i=1}^n X_i(\id{dirty})] = O(n)$ and 
        $\mathbb{E}[\sum_{i=1}^n X_i(\id{cross})] = O(n)$ (Lemma~\ref{lem:rentype23}). 
        The message complexity upper bound then follows by piecing together the previous claims. 
\begin{theorem}
The expected message complexity of our renaming algorithm is $O(n^2)$.
\end{theorem}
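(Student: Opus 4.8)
The plan is to derive the bound directly from \lemmaref{lem:msgrenameweak} together with the three per-type accounting lemmas already in place, so the only thing that really needs arguing at this point is that the three iteration types cover all the contention. First I would recall that, by \lemmaref{lem:msgrenameweak}, the expected message complexity is $O\big(n \cdot \mathbb{E}[\sum_{i=1}^n X_i]\big)$, so it suffices to prove $\mathbb{E}[\sum_{i=1}^n X_i] = O(n)$.

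Next I would observe that every processor that ever contends for a name $u_i \in G_j$ does so in some while-loop iteration, and that iteration is classified (according to the phase in which it starts and whether $V_p$ already shows a later-group name as contended) as exactly one of $\id{clean}(j)$, $\id{dirty}(j)$, or $\id{cross}(j)$. Hence the set of processors contending for $u_i$ is contained in the union of the three corresponding sets, which gives the pointwise bound $X_i \le X_i(\id{clean}) + X_i(\id{dirty}) + X_i(\id{cross})$ in every execution. (In fact equality holds: once a processor contends for $u_i$ it records $\id{Contended}[u_i] = \lit{true}$ locally, and line~\ref{line:pickspot} only ever picks names it views as uncontended, so no processor contends for $u_i$ twice; but subadditivity is all we need.) Summing over $i$ and taking expectations,
\[
\mathbb{E}\Big[\sum_{i=1}^n X_i\Big] \;\le\; \mathbb{E}\Big[\sum_{i=1}^n X_i(\id{clean})\Big] + \mathbb{E}\Big[\sum_{i=1}^n X_i(\id{dirty})\Big] + \mathbb{E}\Big[\sum_{i=1}^n X_i(\id{cross})\Big] \;=\; O(n),
\]
where the clean term is $O(n)$ by \lemmaref{lem:rentype1} and the dirty and cross terms are each $O(n)$ by \lemmaref{lem:rentype23}. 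Plugging this into \lemmaref{lem:msgrenameweak} yields expected message complexity $O(n \cdot n) = O(n^2)$.

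The hard part is not this assembly, which is essentially a one-line combination, but the lemmas it consumes. \lemmaref{lem:rentype1} must convert the tail bound of \lemmaref{lem:phaseunion} on the number of $\id{clean}(j)$ iterations into a bound on the number of distinct processors that spend a clean iteration on a name of $G_j$, and it must do so while the adversary is free both to decide which names fall into $G_j$ and to correlate the processors' contention views; this is where the balls-into-bins analysis is genuinely needed. The dirty and cross bounds in \lemmaref{lem:rentype23} instead rest on the structural observation that, for each $j$, a given processor can run at most one $\id{dirty}(j)$ iteration and at most one $\id{cross}(j)$ iteration, which caps each processor's contribution at $O(\log n)$ and, after amortizing against the geometrically shrinking group sizes $|G_j| = n/2^j$, gives an $O(n)$ expected total over all names. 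Once those inputs are established, the theorem follows exactly as in the display above.
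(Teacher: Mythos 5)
Your proof is correct and follows essentially the same route as the paper's: decompose $X_i$ into its $\id{clean}$, $\id{dirty}$, and $\id{cross}$ contributions, bound each sum by $O(n)$ via Lemmas~\ref{lem:rentype1} and~\ref{lem:rentype23}, and combine with Lemma~\ref{lem:msgrenameweak}. The paper asserts the decomposition as an equality where you only need (and carefully justify) subadditivity, but this is an immaterial difference.
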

\begin{proof}
We know $X_i = X_i(\id{clean}) + X_i(\id{dirty}) + X_i(\id{cross})$ 
        and by~\lemmaref{lem:rentype1},~\lemmaref{lem:rentype23}
        we get $\mathbb{E}[\sum_i X_i] = O(n)$.
Combining with~\lemmaref{lem:msgrenameweak} gives the desired result.
\end{proof}
\noindent The time complexity upper bound exploits a trade-off between the probability 
 that a processor collides in an iteration (and must continue) and the ratio of 
 available slots which must be assigned during that iteration. 
\begin{reptheorem}{thm:rounds}
The time complexity of the the renaming algorithm is $O( \log^2 n )$.
\end{reptheorem}
\section{Message Complexity Lower Bounds}
\label{sec:lb}

We can prove that the algorithms we presented for leader election and renaming in the previous two 
sections are asymptotically message-optimal. Due to space constraints, the proof of this result 
is deferred to the Appendix. 
(In fact, we show a stronger claim, proving the same message complexity lower bound for arbitrary objects with strongly non-commutative operations as defined in~\cite{LawsOfOrder}.)

\begin{repcorollary}{cor:lowerbound}
	Any implementation of leader election or renaming by $k \leq n$ 
        processors guaranteeing termination with probability at least $\alpha > 0$ in an asynchronous message-passing system where $t < n / 2$ processors may 
        fail by crashing must have worst-case expected message complexity $\Omega( \alpha k n )$. 
\end{repcorollary}

\section{Discussion and Future Work}
\label{sec:conclusion}

We have given the first sub-logarithmic leader election algorithm against a strong adversary, 
and asymptotically tight bounds for the message complexity of renaming and leader election. 
Our results also limit the power of topological lower bound techniques, e.g.~\cite{HS99}, 
when applied to randomized leader election, since these techniques allow processors to communicate using unit-cost broadcasts or snapshots. 
Our algorithm shows that no bound stronger than $\Omega( \log^* n )$ time is possible using such techniques, unless the cost of information dissemination is explicitly taken into account. 

Determining the tight time complexity bounds for leader election remains an intriguing open question. 
Another interesting research direction would be to apply the tools we developed to obtain time- and message-efficient implementations of other fundamental distributed tasks, 
 such as task allocation or mutual exclusion, and to explore solutions optimizing bit complexity.


\bibliographystyle{alpha}
\footnotesize
\bibliography{biblio}
\normalsize
\appendix
\section{Deferred Proofs}
\subsection{Leader Election Construction and Analysis}
\label{app:tasproofs}
\figureref{fig:rounds} contains the pseudocode of \lit{PreRound} procedure that processors 
        execute before participating in round $r$.
Every processor starts in the same initial non-negative round.
The \lit{PreRound} procedure takes round number $r$ and the id of the processor as an input and 
        outputs either $\id{PROCEED}$, $\id{WIN}$ or $\id{LOSE}$.
Each processor $p$ first propagates $r$ to a quorum, then collects information about the rounds of 
        other processors also from a quorum.
Let $R$ be the maximum round number of a processor in all views that $p$ collected.
Using idea from~\cite{SSW91}, if $R > r$, then $p$ loses, if $R < r - 1$ then $p$ wins and 
        otherwise $p$ returns $\id{PROCEED}$.
\begin{figure}[!h]
\DontPrintSemicolon
\hrule
{\centering
{\small
\begin{algorithm}[H]
\KwIn{Unique identifier $i$ of the participating processor, round number $r$}
\KwOut{$\id{PROCEED}$, $\id{WIN}$, or $\id{LOSE}$}
\SetKwInput{KwVariables}{Local variables}
\KwVariables{
\;\textbf{int} $\id{Round}[n] = \{0\};$ \Indp
\;\textbf{int} $\id{Views}[n][n];$
\;\textbf{int} $\id{R};$}
\BlankLine
\textbf{procedure} $\lit{PreRound}\langle i, r \rangle$\;
{ 
\Indp
        $\id{Round}[i] \gets \id{r}$
                \tcc*[r]{record own round}
        
        $\lit{communicate}(\id{propagate},\id{Round}[i])$ \nllabel{line:rprop}
                \tcc*[r]{propagate own round}

        $\id{Views} \gets \lit{communicate}(\id{collect},\id{Round})$ \nllabel{line:rcollect}
                \tcc*[r]{collect round from $> n/2$}
        
        $R \gets \max_{k,j \mid j \neq i}(Views[k][j])$
                \tcc*[r]{maximum round of other processors observed}
        
        \If{$r < R$}
        {
                \textbf{return} \nllabel{line:highround} $\id{LOSE}$\;
        }
        \If{$R < r - 1$}
        {
                \textbf{return} $\id{WIN}$\;
        }

        \textbf{return} $\id{PROCEED}$\;
\Indm
}
\end{algorithm}}}
\hrule
\caption{\lit{PreRound} procedure}
\label{fig:rounds}
\end{figure}

To ensure linearizability we use a standard doorway technique, described in~\figureref{fig:doorway}.
This doorway mechanism is implemented by the variable $\id{door}$ stored by the processors. 
A value $\lit{false}$ corresponds to the door being open and 
        a value $\lit{true}$ corresponds to the door being closed.
Each participating processor $p$ starts by collecting the views of $\id{door}$ from more than half 
        of the processors on line~\ref{line:doorcheck}.
If a closed door is reported, $p$ is too late and automatically returns $\id{LOSE}$.
The door is closed by processors on line~\ref{line:doorclose}, 
        and this information is then propagated to a quorum. 
The goal of the doorway is to ensure that no processor can lose \emph{before} the winner 
        has started its execution. 
\begin{figure}[!h]
\DontPrintSemicolon
\hrule
{\centering
{\small
\begin{algorithm}[H]
\KwOut{$\id{PROCEED}$ or $\id{LOSE}$}
\SetKwInput{KwVariables}{Local variables}
\KwVariables{
\;\textbf{bool} $\id{door} = \lit{false}$\Indp \tcc*[r]{door is initially open} 

\textbf{bool} $\id{Doors}[n];$}
\BlankLine
\textbf{procedure} $\lit{Doorway} \langle \rangle$\;
{ 
\Indp
        $\id{Doors} \gets \lit{communicate}(\id{collect},\id{door})$ \nllabel{line:doorcheck} 
                \tcc*[r]{collect $\id{door}$ from  $ > n / 2$}

        \If{$\exists j: Doors[j] = \lit{true}$}
        {
                \textbf{return} $\id{LOSE}$\nllabel{line:doorlose}      
                       \tcc*[r]{lose if the door is closed}
        }

        $\id{door} \gets \lit{true}$\nllabel{line:doorclose}
                \tcc*[f]{close the door}
        
        $\lit{communicate}(\id{propagate},\id{door})$ \nllabel{line:doorclosepr} 
                \tcc*[r]{propagates  $\id{door} = \lit{true}$ to $> n/2$}      
        
        \textbf{return} $\id{PROCEED}$\;
\Indm
}
\end{algorithm}}}
\hrule
\caption{\lit{Doorway} procedure}
\label{fig:doorway}
\end{figure}

Finally we put the pieces together.
Our complete leader election algorithm is described in~\figureref{fig:elect}.
It involves going through the doorway procedure in the beginning, and then
        rounds of $\lit{PreRound}$ procedure possibly followed by participation in 
        a $\lit{HeterogeneousPoisonPill}$ protocol for round $r$.
Note that $\lit{HeterogeneousPoisonPill}$ protocols for different rounds are completely disjoint
        from each other.
\begin{figure}[!h]
\DontPrintSemicolon
\hrule
{\centering
{\small
\begin{algorithm}[H]
\KwIn{Unique identifier $i$ of the participating processor}
\KwOut{$\id{WIN}$ or $\id{LOSE}$}
\SetKwInput{KwVariables}{Local variables}
\KwVariables{
\;\textbf{int} $\id{r = 1};$ \Indp
\;$\id{outcome}$;}
\BlankLine
\textbf{procedure} $\lit{LeaderElect}\langle i \rangle$\;
{ 
\Indp
        \If{$\lit{Doorway} \langle \rangle = \id{LOSE}$}
        {
                \textbf{return} $\id{LOSE}$\nllabel{line:eldoor}
                       \tcc*[r]{lose if door was closed}
        }
       
        \Repeat{never}{\nllabel{line:elloop}
                $\id{outcome} \gets \lit{PreRound}\langle i, r\rangle$
                       \tcc*[r]{preround routine}
        
                \If{$\id{outcome} \in \{\id{WIN}, \id{LOSE}\}$}
                {
                \textbf{return} $\id{outcome}$\nllabel{line:elrlose}
                       \tcc*[r]{return if rounds permit}
                }
        
                \BlankLine
        
                \If{$\lit{HeterogeneousPoisonPill_r\langle i \rangle} = \id{DIE}$}
                {
                \textbf{return} $\id{LOSE}$ \nllabel{line:elplose}
                       \tcc*[r]{lose if did not survive the round}
                }
        
                $r \gets r + 1$
        }
\Indm
}
\end{algorithm}}}
\hrule
\caption{Leader election algorithm}
\label{fig:elect}
\end{figure}
        
We now prove the properties of the algorithm.
\begin{lemma}
\label{lem:leastone}
If all processors that call \lit{LeaderElect} return,
        at least one processor returns $\id{WIN}$.
\end{lemma}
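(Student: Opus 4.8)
The plan is to run the classical round-based argument (in the spirit of~\cite{SSW91}): single out the processor that reaches the largest round number and show that it must return $\id{WIN}$. First I would check that the set $D$ of participants that pass \lit{Doorway} is nonempty. Consider the participant whose $\lit{communicate}(\id{collect},\id{door})$ call on line~\ref{line:doorcheck} returns first; at that instant no processor has yet executed the assignment $\id{door}\gets\lit{true}$ on line~\ref{line:doorclose}, since that line is reached only after a \emph{completed} — hence strictly earlier — call on line~\ref{line:doorcheck}. Thus this participant collects only $\lit{false}$ values and proceeds into the \lit{Repeat} loop, so $D\neq\emptyset$. Every $p\in D$ executes $\lit{PreRound}\langle p,1\rangle$, so the quantity $f(p)$, defined as the largest $r$ for which $p$ executes $\lit{PreRound}\langle p,r\rangle$, is well defined, finite (because $p$ returns, by hypothesis), and at least $1$. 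Set $r^{\ast}=\max_{p\in D}f(p)$ and pick $p^{\ast}\in D$ attaining it.

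Next I would pin down the value $R$ that $p^{\ast}$ computes inside $\lit{PreRound}\langle p^{\ast},r^{\ast}\rangle$. A processor $j$ writes a round number into another processor's $\id{Round}$ entry only by propagating on line~\ref{line:rprop} while executing $\lit{PreRound}\langle j,r\rangle$, which happens only for $r\le f(j)$; a non-participant never propagates, so the corresponding entry stays $0$. Since $R$ maximizes $\id{Views}[k][j]$ over $j\neq p^{\ast}$, this yields $R\le r^{\ast}$, hence the test $r^{\ast}<R$ on line~\ref{line:highround} fails and $p^{\ast}$ returns either $\id{WIN}$ or $\id{PROCEED}$. It then remains to rule out $\id{PROCEED}$. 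In that case $p^{\ast}$ would participate in the round-$r^{\ast}$ instance of $\lit{HeterogeneousPoisonPill}$; all its participants lie in $D$ and return from it (each returns from \lit{LeaderElect}, hence from every subroutine it invokes), so by the analogue of \claimref{clm:leastone} for the heterogeneous variant at least one participant $q$ returns $\id{SURVIVE}$. Then $q$ continues the loop with $r\gets r^{\ast}+1$ and executes $\lit{PreRound}\langle q,r^{\ast}+1\rangle$, giving $f(q)\ge r^{\ast}+1$ and contradicting the maximality of $r^{\ast}$. Therefore $p^{\ast}$ returns $\id{WIN}$, proving the lemma.

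I expect the only delicate point to be the inequality $R\le r^{\ast}$: one must argue carefully that a propagated round value can be \emph{stale} (smaller than the sender's current round) but can never exceed a round that the sender actually enters, so no spurious ``future'' round ever appears among the collected views. Everything else is bookkeeping over the control flow of \figureref{fig:elect}, \figureref{fig:rounds} and \figureref{fig:doorway}, together with the at-least-one-survivor property of \lit{HeterogeneousPoisonPill} used as a black box.
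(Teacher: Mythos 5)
Your proof is correct and follows essentially the same route as the paper's: establish that someone passes the doorway, focus on the maximal round reached, bound the collected round values to rule out losing there, and use the at-least-one-survivor property of \lit{HeterogeneousPoisonPill} to contradict maximality if the processor merely proceeds. The only (cosmetic) difference is that you argue directly about the max-round processor winning, while the paper frames the whole thing as a contradiction from ``everyone returns \id{LOSE}''; both handle the doorway and the inequality $R\le r^{\ast}$ with the same underlying quorum/control-flow observations.
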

\begin{proof}
Assume for contradiction that all processors that participate in \lit{PoisonPill} return $\id{LOSE}$.
Let us first prove that at least one processor always reaches the loop on line~\ref{line:elloop},
        or alternatively that not all processors can lose on line~\ref{line:eldoor}.
This would mean that all processors return $\id{LOSE}$ on line~\ref{line:doorlose} of the 
        $\lit{Doorway}$ procedure, but in that case the door would never be closed 
        on line~\ref{line:doorclose}.
Thus, all processor views would be $\id{door}=\lit{false}$, and no processor would 
        actually be able to return on line~\ref{line:doorlose}.

Since we showed that at least one processor reaches the loop, let us consider the largest round $r$ 
        in which some processors return, either in the pre-round routine of round $r$ 
        on line~\ref{line:elrlose} or because of the poison pill on line~\ref{line:elplose}.
By our assumption all these processors return $\id{LOSE}$ in round $r$.
But then, none of them may return on line~\ref{line:elrlose}, because this is only possible after
        returning $\id{LOSE}$ on line~\ref{line:highround}, which only happens if a larger 
        round than $r$ is reported, contradicting our assumption that $r$ is the largest round.

So, at least one processor participates in the $\lit{HeterogeneousPoisonPill}_r$ protocol.
However, by exactly the same argument as in~\claimref{clm:leastone}, 
        $\lit{HeterogeneousPoisonPill}_r$ is guaranteed to have at least one survivor which would then 
        participate in round $r+1$, again contradicting that $r$ is the largest round.
\end{proof}
\begin{lemma}
\label{lem:mostone}
At most one processor that executes $\lit{LeaderElect}$ can return $\id{WIN}$.
\end{lemma}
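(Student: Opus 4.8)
The plan is to adapt the classical round-comparison argument of~\cite{SSW91} to the \lit{communicate}-based setting, using quorum intersection as the only combinatorial tool. First observe that $\id{WIN}$ is returned only inside \lit{PreRound}, and only in a round $r$ for which the maximum observed round $R$ of another processor satisfies $R < r-1$; since all $\id{Round}$ entries start at $0$ and every propagated round value is positive, $R \geq 0$ always holds, so no processor can win before round $2$. Suppose for contradiction that two distinct processors $p$ and $q$ return $\id{WIN}$, in rounds $r_p$ and $r_q$ respectively, and (relabeling if needed) that $r_p \le r_q$; in particular $r_p \ge 2$, so $r_p - 1$ is a legitimate round.

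The key claim is: no processor $q' \neq p$ can \emph{complete} a \lit{communicate}$(\id{propagate}, r')$ call on line~\ref{line:rprop} with $r' \ge r_p - 1$ before $p$ issues its collect on line~\ref{line:rcollect} inside its winning \lit{PreRound}$\langle p, r_p \rangle$. Indeed, such a completion would mean a quorum of processors had recorded $\id{Round}[q'] \ge r_p - 1$ before $p$'s collect requests went out; since $p$'s collect gathers a quorum of views and any two quorums intersect, $p$ would then observe $\id{Round}[q'] \ge r_p - 1$, forcing $R \ge r_p - 1 > r_p - 2$ and contradicting the fact that $p$ wins in round $r_p$.

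Now, since $r_q \ge r_p$ and rounds advance one at a time, $q$ runs \lit{PreRound}$\langle q, r_p - 1 \rangle$ at some point, and there it completes \lit{communicate}$(\id{propagate}, r_p - 1)$ on line~\ref{line:rprop}. By the key claim this completion occurs only after $p$ has completed its own \lit{communicate}$(\id{propagate}, r_p)$ (which precedes $p$'s collect); hence $q$'s subsequent collect on line~\ref{line:rcollect} in that call is issued only after $\id{Round}[p] = r_p$ has been propagated to a full quorum. By quorum intersection again, $q$'s collect returns a view with $\id{Round}[p] \ge r_p$, so the value $R$ that $q$ computes in \lit{PreRound}$\langle q, r_p - 1 \rangle$ is at least $r_p > r_p - 1$, forcing $q$ to return $\id{LOSE}$ on line~\ref{line:highround}. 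But then $q$ never reaches round $r_p$, contradicting $r_q \ge r_p$; hence at most one processor returns $\id{WIN}$.

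The part needing care is the asynchronous bookkeeping behind the two quorum-intersection steps: one must verify that a processor $k$ lying in the relevant quorum intersection processes the $\id{propagate}$ message \emph{before} it answers the $\id{collect}$ request. This holds because a \lit{communicate} collect's requests are sent only after the preceding \lit{communicate} propagate has been acknowledged by an entire quorum, so $k$ receives the collect request strictly after it has already recorded and acknowledged the propagated round value. Granting this ordering, the argument above is a straightforward chain of implications; the rest is routine.
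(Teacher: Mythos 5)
Your proof is correct and follows essentially the same route as the paper's: quorum intersection applied twice, first to conclude from $p$'s win in round $r_p$ that no other processor had yet completed propagating a round $\geq r_p-1$, and then to force any other processor to observe $\id{Round}[p]=r_p$ in its round-$(r_p-1)$ collect and lose on line~\ref{line:highround}. Your version merely makes explicit two points the paper leaves implicit (the WLOG ordering $r_p\le r_q$ of the two hypothetical winners, and that $q$ must pass through round $r_p-1$ since rounds advance one at a time), which is fine.
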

\begin{proof}
A processor $p$ can only return $\id{WIN}$ from $\lit{LeaderElect}$ on line~\ref{line:elrlose}, which
        only happens after $p$ returns $\id{WIN}$ from $\lit{PreRound}$ call with some round $r$.
This means $p$ first propagated round $r$ to a quorum on line~\ref{line:rprop}, then collected
        views of $\id{Round}$ array on line~\ref{line:rcollect}, and observed maximum round 
        $R < r - 1$ of any processor in any of the views.
This implies that when $p$ finished propagating $r$ to a quorum, no processor had finished 
        propagating $r-1$, i.e. executing line~\ref{line:rprop} in round $r-1$.
Otherwise, since every two quorums have an intersection, $p$ would have observed round $r-1$
        and $R < r - 1$ would not hold.
But for every other processor $q$, when $q$ executes line~\ref{line:rcollect} in round $r-1$ and
        invokes the $\lit{PreRound}$ procedure, $R$ will be at least $r$ since $p$ has already 
        propagated to a quorum, so $q$ will observe $r-1 < r$ and return $\id{LOSE}$ 
        on line~\ref{line:highround} and subsequently return $\id{LOSE}$ from $\lit{LeaderElect}$.
\end{proof}
\begin{lemma}
\label{lem:elinear}
Our leader election algorithm is linearizable.
\end{lemma}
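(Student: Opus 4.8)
The plan is to exhibit, for every execution, a linearization order on the completed operations (after completing some of the pending ones) that respects real time and has the shape $\id{WIN}, \id{LOSE}, \id{LOSE}, \dots$, i.e.\ a legal sequential history of test-and-set. First I would pin down the winner: by \lemmaref{lem:mostone} at most one processor returns $\id{WIN}$, and, since all non-faulty processors terminate with probability $1$ under the fault bound, \lemmaref{lem:leastone} yields at least one $\id{WIN}$ whenever all participants return; in an execution where the prospective winner crashes I would instead complete one pending operation that already passed the \lit{Doorway} as $\id{WIN}$. Fix this (possibly completed) winner $w$, place its operation first, and place every remaining operation afterwards, completed as $\id{LOSE}$, in any linear extension of the real-time precedence relation on those operations. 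Such a sequence trivially satisfies the sequential specification, so the entire content of the argument is real-time consistency, and the only non-trivial obligation there is the statement that \emph{no processor returns $\id{LOSE}$ before $w$'s operation is invoked}: if some $\id{LOSE}$ completed earlier, any real-time-respecting order would put it before $w$, contradicting that a sequential test-and-set history begins with $\id{WIN}$; conversely, once this property holds, $w$ may safely be placed first and the remaining checks are immediate.

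To establish that property I would case on where a processor $p$ returning $\id{LOSE}$ does so. If $p$ loses inside the \lit{Doorway} (\lineref{line:eldoor}), then its collect on \lineref{line:doorcheck} observed some processor with $\id{door} = \lit{true}$, so a processor had already executed \lineref{line:doorclose} and begun propagating the closed door; the classical doorway argument of~\cite{Afek92}, using that any two \lit{communicate} quorums intersect, then shows that $w$ — which itself must have passed the \lit{Doorway}, so its own collect on \lineref{line:doorcheck} returned no closed door — completed that collect before the closed door was established at a quorum, and hence before $p$ returned. If instead $p$ loses inside some round (\lineref{line:elrlose} via \lit{PreRound}, or \lineref{line:elplose} via $\lit{HeterogeneousPoisonPill}$), then $p$ had passed the \lit{Doorway} and executed \lit{PreRound} for at least round $1$, propagating its round number to a quorum on \lineref{line:rprop}; were $w$ invoked only after $p$ returned, then by the round-comparison mechanism of~\cite{SSW91} (each processor propagates its round to a quorum before proceeding, and any two quorums intersect) $w$ would always observe another processor already at round $\ge r_w - 1$ and could never return $\id{WIN}$, contradicting the choice of $w$. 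In either case $w$'s invocation precedes $p$'s response.

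Combining the two cases gives that no $\id{LOSE}$ completes before $w$'s invocation, so the order constructed above is a valid linearization, which completes the proof. The main obstacle I expect is the \lit{Doorway} case: one must track the quorum intersections carefully to rule out a ``late winner'' that slips past an already-closed door observed by an earlier loser — this is precisely what the doorway mechanism is designed to prevent — whereas the round-loser case reuses the $\cite{SSW91}$-style reasoning already present in \lemmaref{lem:mostone}, and the treatment of pending operations is routine bookkeeping.
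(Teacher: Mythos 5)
Your high-level plan --- designate a winner $w$, put it first, and reduce everything to the single real-time obligation that no $\id{LOSE}$ response precedes $w$'s invocation --- is the same skeleton as the paper's proof, but both of your case arguments for that obligation have gaps, and the two tools you invoke are applied to the wrong cases. In Case~2 (a processor $p$ that loses inside some round), the round-comparison mechanism does not give what you want: $p$ has stopped propagating rounds at some $r_p$, and a $w$ invoked after $p$'s response could in principle climb past round $r_p+1$ and then legitimately observe $R<r_w-1$, so nothing in \lemmaref{lem:mostone}-style reasoning by itself prevents such a $w$ from winning. What actually saves this case is the doorway: since $p$ passed \lit{Doorway}, it completed the \lit{communicate} on \lineref{line:doorclosepr}, so the closed door is at a quorum before $p$ returns, and any $w$ invoked later must see it on \lineref{line:doorcheck} and return $\id{LOSE}$ on \lineref{line:doorlose}. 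Conversely, in Case~1 (a processor that loses at \lineref{line:doorlose}) your inference ``$w$ completed its collect before the closed door was established at a quorum, and hence before $p$ returned'' is a non sequitur: $p$ returns as soon as a \emph{single} replier reports $\id{door}=\lit{true}$, which can happen while the closer's propagation has reached only a few processors, so the door may reach a quorum long after $p$'s response (or never, if the closer crashes). This is exactly the delicate case, and no direct quorum-intersection argument of the kind you sketch closes it.

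The paper does not close it that way either: it argues by contradiction that if some $\id{LOSE}$ response preceded the designated winner's linearization point $P$, then the designated winner would itself have to observe a closed door and could never return $\id{WIN}$; it must therefore be a \emph{non-returning} invocation, chosen to be the \emph{earliest-invoked} such one, and an extension of the execution in which all pending processors run to completion then forces every operation to return $\id{LOSE}$, contradicting \lemmaref{lem:leastone}. Two further points in your write-up would need repair to follow that route: your completion of pending operations picks ``a pending operation that already passed the \lit{Doorway},'' which need not exist (every pending operation may still be inside \lit{Doorway}) and need not be the earliest-invoked one, which is the property the contradiction argument leans on; and the existence of a $\id{WIN}$ cannot be read off \lemmaref{lem:leastone} in executions where some participants never return, which is precisely why the paper introduces the default earliest-invoked non-returner rather than a doorway-based choice.
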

\begin{proof}
All processors that execute $\lit{LeaderElect}$ cannot return $\id{LOSE}$ by~\lemmaref{lem:leastone}.
Therefore, in every execution we can find $\lit{LeaderElect}$ invocation where processor either
        does not return, or returns $\id{WIN}$.
On the other hand, by~\lemmaref{lem:mostone}, no more than one processor can return $\id{WIN}$.
If no processor returns $\id{WIN}$, let us linearize the processor that invoked $\lit{LeaderElect}$
        the earliest as the leader.
This way, we always have an unique processor to be linearized as the winner.
We linearize it at the beginning of its invocation interval, say point $P$, and claim that every 
        remaining $\lit{LeaderElect}$ call can be linearized as returning $\id{LOSE}$ after $P$.

Assume contrary, then the problematic $\lit{LeaderElect}$ invocation must return before $P$, and 
        we know it has to return $\id{LOSE}$.   
By definition, this earlier call either closes the door or observes a closed door while executing 
        the $\lit{Doorway}$ procedure.
Therefore, the later call that we are linearizing as the winner has to observe a closed door on 
        line~\ref{line:doorcheck} and cannot avoid returning $\id{LOSE}$ on line~\ref{line:doorlose}.
Hence, this invocation can never return $\id{WIN}$, and since we are linearizing it as winner,
        it should be the case that it does not return and no other processor returns $\id{WIN}$.
We picked this invocation to have the earliest starting point, so every other $\lit{LeaderElect}$ 
        invocation that does not return must start after $P$.
Let us now consider an extension of the current execution where the processors executing these 
        invocations are continuosly scheduled to take steps and all messages are delivered.
According to the above argument, since all invocations start after $P$, these processors must 
        observe a closed door on line~\ref{line:doorcheck} and return $\id{LOSE}$ after only 
        finitely many steps.
We have hence constructed a valid execution where all processors that execute $\lit{LeaderElect}$ return $\id{LOSE}$.
This contradiction with~\lemmaref{lem:leastone} completes the proof.        
\end{proof}
\noindent We need one final claim before proving the main theorem.
\begin{claim}
\label{clm:constpart}
The maximum expected number of participants decreases 
        at least by some fixed constant fraction in every two rounds.
\end{claim}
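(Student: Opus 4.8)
The plan is to control $\mu_r$, the supremum over adaptive adversaries of the expected number of processors that invoke $\lit{HeterogeneousPoisonPill}_r$ during an execution, and to produce a fixed $\rho<1$ with $\mu_{r+2}\le\rho\,\mu_r$ for all $r$. Introduce $g_1(k)$ and $g_2(k)$: the suprema over adversaries of the expected number of processors reaching $\lit{HeterogeneousPoisonPill}_{r+1}$ (resp.\ $\lit{HeterogeneousPoisonPill}_{r+2}$) conditioned on exactly $k$ processors reaching $\lit{HeterogeneousPoisonPill}_r$. Since the set of participants only shrinks from round to round, $g_2(k)\le g_1(k)$; and since the number $P_r$ of participants in round $r$ is integer-valued, it suffices to prove $g_2(k)\le\rho\,k$ for every integer $k\ge0$, because then $\mu_{r+2}\le\EX[g_2(P_r)]\le\rho\,\EX[P_r]\le\rho\,\mu_r$ (the expectations taken under the worst adversary).

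For large $k$ this is immediate from the earlier lemmas. By \lemmaref{lem:loglow} and \lemmaref{lem:loghigh}, a round started by $k$ processors has at most $c_1\log^2 k+c_2$ survivors in expectation for absolute constants $c_1,c_2$ (each survivor flipped $1$, or flipped $0$ and survived), and at most $k$ trivially; hence there is a constant $c$ with expected survivors at most $k/2+c$ for \emph{all} $k\ge0$, the constant absorbing the finitely many $k$ below the threshold where $c_1\log^2 k+c_2\le k/2$. This bound is affine in $k$, so applying it twice and taking expectations gives $g_2(k)\le k/4+\tfrac{3c}{2}$, and in particular $g_2(k)\le k/2$ whenever $k\ge6c$.

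The crux is the bounded range $2\le k<6c$, where \lemmaref{lem:loglow}--\lemmaref{lem:loghigh} are vacuous since their $O(\log^2 k)+O(1)$ estimate can exceed $k$. Here I would argue $g_1(k)<k$ directly. Fix any adversary; if fewer than two processors ever complete line~\ref{line:commitpr1}, at most one survives and $g_1(k)<k$ trivially, so assume at least two do and let $p$ be the first. Since $p$'s $\id{Commit}$ status is visible to every collect performed after $p$ completes line~\ref{line:commitpr1} (any two quorums intersect), line~\ref{line:upd2} places $p$ in the set $L$ of every participant. By \claimref{clm:pqorder}, $p$ lies in every participant's $\ell$-list, so $p$ is the only participant that can have $|\ell|=1$ and every other participant $p'$ has $|\ell_{p'}|\ge2$; thus $p'$ flips $0$ with probability at least $1-\max_{m\ge2}\tfrac{\log m}{m}>0$, while $p$ flips $1$ with probability $\id{prob}_p>0$. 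When $k\ge2$ these two events occur jointly with positive probability; then $p'$ holds $\id{Low-Pri}$, $p$ lies in $p'$'s set $L$, and $p$ is never observed with $\id{Low-Pri}$ (its status is only ever $\id{Commit}$ or $\id{High-Pri}$), so $p'$ returns $\id{DIE}$ on line~\ref{line:sift1} and strictly fewer than $k$ processors survive; hence $g_1(k)<k$. The trivial cases are $g_2(0)=0$, and $k=1$: the lone survivor runs $\lit{PreRound}$ in rounds $r+1$ and $r+2$, never sees another process at a round exceeding $r$ (no other process ever reaches the $\lit{PreRound}$ of round $r+1$), so the round-gap test of~\cite{SSW91} makes it return $\id{WIN}$ before it would invoke $\lit{HeterogeneousPoisonPill}_{r+2}$, giving $g_2(1)=0$.

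To conclude, on the finite set $\{2,\dots,\lceil6c\rceil\}$ set $\rho_1:=\max_k g_1(k)/k<1$; then $g_2(k)\le\max(\tfrac12,\rho_1)\,k$ for every integer $k\ge0$, so $\mu_{r+2}\le\max(\tfrac12,\rho_1)\,\mu_r$, which is the claim with $\rho=\max(\tfrac12,\rho_1)$. I expect the bounded regime $k\ge2$ to be the main obstacle: the asymptotic survivor bounds give nothing there, so the decrease must be extracted from the internals of $\lit{HeterogeneousPoisonPill}$ together with the $\lit{PreRound}$/doorway machinery — and this is precisely why looking two rounds ahead rather than one is the convenient framing, since a processor that has just become the lone survivor may need one further round before the~\cite{SSW91} test fires, so the count can be flat for one round but must drop over two.
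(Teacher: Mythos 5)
Your proof is correct and takes essentially the same route as the paper's: the $O(\log^2 k)$ survivor bounds handle large $k$, the observation that the first processor to complete line~\ref{line:commitpr1} lands in every participant's $\ell$-list (so that whenever it fails to obtain \id{Low-Pri} it poisons every $0$-flipper) handles the bounded range $k\ge 2$, and the \lit{PreRound} round-gap test disposes of $k=1$ within the two-round window. The one loose point is deducing $\rho_1=\max_k g_1(k)/k<1$ from a per-adversary ``positive probability'' statement: since $g_1(k)$ is a supremum over adversaries, you need that probability bounded below uniformly (it is, because the coin biases depend only on $|\ell|\in\{1,\dots,k\}$ with $k$ bounded), and you must also cover adversaries that stall $p$ or $p'$ before their flips---in which case the stalled processor simply never survives, and a $p$ frozen in state \emph{Commit} still forces every $0$-flipper to return $\id{DIE}$---so the bound $g_1(k)\le k-\delta_k$ with $\delta_k>0$ goes through.
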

\begin{proof}
This obviously holds for a single participant, because it will return $\id{WIN}$ in the next round
        and the number of participants after that will be zero.

We know that for $k$ participants in some round, by~\lemmaref{lem:loglow} and~\lemmaref{lem:loghigh},
        the maximum expected number of participants in the next round is $O(\log^2{k} + 1)$. 
This implies that for a large enough constant $D$, there is constant $c_1 < 1$ such that for 
        $k > D$ the maximum expected number of participants in the next round, and thus in all
        rounds thereafter, is at most $c_1k$.
If $k \leq D$, then the first processor that finishes executing line~\ref{line:commitpr1} flips $1$ 
        with at least a constant probability.
In this case, all processors that flip $0$ will die, and the expected number of the remaining
        processors that flip $0$ is at least $\frac{k-1}{2} \leq \frac{k}{4}$ for $k \geq 2$.
This is because the expected number of remaining processors that flip $1$ is at most $\frac{k-1}{2}$, 
        as each of them observes at least the first processor and itself, hence has no more than 
        $1/2$ probability of flipping $1$.
Thus, if $k \leq D$, with a constant probability, a constant fraction of participants dies, meaning that there is a 
        constant $c_2 < 1$ such that the maximum expected number of participants is at most $c_2k$.
Setting $c = \max(c_1, c_2) < 1$ we obtain that the maximum expected number of 
        participants in every two rounds always decreases by at least a constant fraction to $ck$.
\end{proof}
\begin{theorem}
\label{thm:maintas}
Our leader election algorithm is linearizable.
If there are at most $\lceil n/2 \rceil -1$ processor faults, all non-faulty processors 
        terminate with probability $1$.
For $k$ participants, it has time complexity $O(\log^{\ast}{k})$ and message complexity $O(kn)$.
\end{theorem}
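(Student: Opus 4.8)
The plan is to combine the three properties that must hold---linearizability, termination, and the complexity bounds---by assembling the pieces already proved for the individual sub-procedures. Linearizability is immediate from \lemmaref{lem:elinear}, which uses the \lit{Doorway} mechanism to guarantee a valid ordering, and termination with probability $1$ under $\leq \lceil n/2\rceil - 1$ faults follows because every \lit{communicate} call returns once a quorum of non-faulty processors is available, and by \lemmaref{lem:leastone} at least one processor wins in each round, so the round count cannot increase forever without someone detecting (via the \cite{SSW91} round-comparison rule in \lit{PreRound}) that it should win or lose. So the substantive part is the complexity analysis, and that is where I would spend almost all of the effort.

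For message complexity, I would argue that in each round $r$ with $k_r$ expected participants, every participant sends $O(n)$ messages in \lit{PreRound}, \lit{Doorway}, and one invocation of \lit{HeterogeneousPoisonPill}${}_r$ (each \lit{communicate} is $O(n)$ messages plus $O(n)$ acknowledgments), so round $r$ costs $O(k_r n)$ messages in expectation. By \claimref{clm:constpart} the expected participant count drops by a constant factor every two rounds, so $\sum_r k_r = O(k)$ by a geometric series, giving $O(kn)$ total messages. The one subtlety is that \claimref{clm:constpart} only controls the \emph{expected} number of participants, and I need $\EX[\sum_r k_r]$ rather than $\sum_r \EX[k_r \mid \text{participation}]$; I would handle this by taking expectations over the recursion $\EX[k_{r+2}] \le c\,\EX[k_r]$ implied by iterated expectation, which is legitimate because the per-round bounds in \lemmaref{lem:loglow} and \lemmaref{lem:loghigh} hold for \emph{any} number of actual participants and \emph{any} adversary strategy.

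For time complexity I would show the expected number of rounds is $O(\log^* k)$, hence by \claimref{clm:comtime} the time complexity is $O(\log^* k)$ since each round contributes $O(1)$ \lit{communicate} calls per processor. The key recursion is that if a round has $k_r$ participants then, by \lemmaref{lem:loglow} and \lemmaref{lem:loghigh}, the next round has expected $O(\log^2 k_r)$ participants; iterating $\log^2(\cdot)$ collapses any polynomial to a constant in $O(\log^* k)$ steps. I would then use a concentration/tail argument (or a direct argument that the process reaches the constant regime in $O(\log^* k)$ rounds with high enough probability and in expectation $O(\log^* k)$ overall rounds even accounting for the rare bad branches) to turn the bound on expected participant counts into a bound on expected rounds. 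The main obstacle, and the place the ``careful analysis'' is really needed, is the tail of the recursion: once $k_r = O(1)$, \lemmaref{lem:loglow} and \lemmaref{lem:loghigh} give only $O(1)$ expected survivors, not strictly fewer, so the process could linger for many rounds at constant contention. This is exactly why \claimref{clm:constpart} is stated for \emph{every two rounds} with a genuine constant factor $c < 1$ even in the $k \le D$ regime: in that regime the first committer flips high priority with constant probability, killing a constant fraction, so the expected contention still contracts geometrically. I would therefore split the analysis at the threshold $D$: above $D$, the $\log^2$ recursion drives the count down to $O(1)$ in $O(\log^* k)$ rounds in expectation; below $D$, the geometric contraction from \claimref{clm:constpart} finishes in $O(1)$ additional expected rounds. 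Combining the two regimes and invoking \claimref{clm:comtime} yields the $O(\log^* k)$ time bound, completing the theorem.
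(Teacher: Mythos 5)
Your overall decomposition matches the paper's: linearizability via \lemmaref{lem:elinear}, message complexity via the $O(n)$ messages per participant per round plus the geometric contraction of \claimref{clm:constpart}, and time complexity by iterating the $O(\log^2 k_r)$ survivor bound down to the constant regime and then finishing with \claimref{clm:constpart} (the paper converts the residual constant expected contention into $O(1)$ expected extra rounds via $\EX[R]=\sum_i\Pr[S_i\geq 1]\leq\sum_i\EX[S_i]$, which is essentially what you sketch). You also correctly identify the two real subtleties: that the participant count in each round is a random variable under adversarial control, and that the recursion stalls at $O(1)$ expected survivors unless \claimref{clm:constpart} supplies genuine contraction there.

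There is, however, one concrete gap in your handling of the nonlinear recursion. Your ``iterated expectation'' device is sound for the \emph{linear} contraction $\EX[k_{r+2}]\leq c\,\EX[k_r]$ used in the message bound, but it does not by itself justify $\EX[S_{r+2}]\leq f(f(k))$ for $f(x)=C(\log^2 x+2\log x)$: conditioning on the round-$r$ outcome gives $\EX[S_{r+2}]\leq\EX[f(S_r)]$, and passing from $\EX[f(S_r)]$ to $f(\EX[S_r])$ requires an additional argument. The paper closes exactly this step by observing that $f$ is concave (and eventually monotone), so Jensen's inequality gives $\sum_i p_i f(K_i)\leq f\left(\sum_i p_i K_i\right)\leq f(f(k))$, which is what lets the composition $f\circ f\circ\cdots\circ f$ collapse to a constant in $O(\log^* k)$ rounds. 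Your proposed fallback---a ``concentration/tail argument'' on the number of survivors---is not supported by anything proved in the paper: \claimref{clm:z1z} only gives a $O(1/z)$ tail for the low-priority survivors and \lemmaref{lem:loghigh} is purely an expectation bound, so you would have to develop new concentration machinery. With the Jensen step inserted, your argument becomes the paper's proof.
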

\begin{proof}
We have shown linearizability in~\lemmaref{lem:elinear}.
        
All $k \geq 1$ processors participate in the first round.
The maximum expected number of processors that participate in round $3$ is 
        clearly no more than the maximum expected number of survivors of the first round, 
        which by~\lemmaref{lem:loglow} and~\lemmaref{lem:loghigh} for $k>1$ can be written
        as $C(\log^2{k} + 2\log{k})$ for some constant $C$.
If $k=1$, then this lone processor will observe all other processors in round $0$, leading to $R = 0$ and as current round 
 is $r = 2$ it will return $\id{WIN}$ in the second round.
Hence, for $k=1$, there will be zero participants in the third round.
Thus, for any $k$, the maximum expected number of participants in round $3$ is at most 
        $f(k) = C(\log^2{k} + 2\log{k})$.

Let us say the adversary can achieve a probability distribution for round $3$ such that
        there are $K_i$ participants with probability $p_i$.
We have shown above that 
\begin{equation}
\label{eq:1rexp}
\sum_i p_i K_i \leq f(k)
\end{equation}
\noindent Now, using the same argument as above, we can bound the maximum expected number 
        of participants in round $5$ to be at most $\sum p_i f(K_i)$.
Function $f$ is concave for non-negative arguments, and for arguments larger than a constant
        it is also monotonically increasing.
This implies that either $\sum p_i K_i$, the expected number of participants in round $3$, 
        is constant, or
\begin{equation}
\sum p_i f(K_i) \leq f( \sum p_i K_i) \leq f(f(k))
\end{equation}
where the first part is Jensen's inequality and the second follows from~\equationref{eq:1rexp} 
        and the monotonicity property. 
Similarly, we get that unless the maximum expected number of participants in round $5$ is less than 
        a constant, the maximum number of participants in round $7$ is at most $f(f(f(k)))$, 
        and so on.
Since $f(f(k)) \geq \log{k}$ for all $k$ larger than some constant, if we denote by $S_0$
        the number of participants in round $1 + 2\log^{\ast}{k}$, maximum $\mathbb{E}[S_0]$
        that the adversary can achieve must also be constant.
These $S_0$ participants execute the same algorithm, with $S_1$ of them participating 
        in the next round, etc.
        
Let $R$ be the number of remaining rounds. 
Expectation of $R$ can be written as
\begin{equation}
\mathbb{E}[R] = \sum_{i=1}^{\infty} \Pr[R \geq i] 
        = \sum_{i=1}^{\infty} \Pr[S_i \geq 1] 
        \leq \sum_{i=1}^{\infty} \mathbb{E}[S_i]
\end{equation}
where the equality is by the definition of rounds and then we apply Markov's inequality 
        to get to expectations.
Finally, by~\claimref{clm:constpart} we get that $\mathbb{E}[R] = O(\mathbb{E}[S_0]) = O(1)$
        and thus the maximum total number of rounds any processor participates in is 
        $O(\log^{\ast}{k})$, and processors perform only fixed, constantly many \lit{communicate} 
        calls per round.
Time complexity follows from~\claimref{clm:comtime}.
        
To bound the maximum expected number of messages, let $Q_r$ be the number of participants 
        in round $r$, counting from the very first round.
Since each processor sends $O(n)$ messages per round, the maximum expected number of messages is 
        $\sum_{r=1}^{\infty} \mathbb{E}[O(nQ_r)] = n\cdot \mathbb{E}[O(Q_1)] = O(nk)$ 
        using~\claimref{clm:constpart}.

If there are at most $\lceil n/2 \rceil -1$ processor faults, all \lit{communicate} calls 
        return, and processors must enter larger rounds.
However, the probability that all processors terminate before reaching round $r$ is 
        $1 - \Pr[Q_r \geq 1] \geq 1 - \mathbb{E}[Q_r]$ which tends to $1$ as
        $r$ increases by~\claimref{clm:constpart}.
\end{proof}
\subsection{Renaming Analysis}
\label{app:renproofs}
\begin{lemma}
\label{lem:renameterminate}
No two processors return the same name from the \lit{getName} call and if there are at most $\lceil n/2 \rceil -1$ processor faults, 
 all non-faulty processors terminate with probability 1.
\end{lemma}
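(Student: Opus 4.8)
The plan is to prove the two assertions in turn. For name-uniqueness: a processor returns a name $\id{spot}$ only by returning $\id{WIN}$ from $\lit{LeaderElect}_{\id{spot}}$ (line~\ref{line:namefound}). Since there is a dedicated leader-election instance per name, and each instance is linearizable with respect to the test-and-set sequential specification (Lemma~\ref{lem:elinear}, Theorem~\ref{thm:maintas}), at most one processor can return $\id{WIN}$ from it; hence at most one processor returns any given name, so no two processors return the same name.

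For termination, fix a non-faulty processor $p$. First I would record a deterministic monotonicity observation: whenever $p$ executes line~\ref{line:pickspot} it picks a name with $\id{Contended}[\id{spot}]=\lit{false}$ in its view and immediately sets $\id{Contended}[\id{spot}]\gets\lit{true}$, and $\id{Contended}$ entries are never reset; hence $p$ picks a \emph{distinct} name in each loop iteration, so $p$ completes at most $n$ iterations and invokes at most $n$ leader-election instances. Now condition on the probability-$1$ event that all of $p$'s $\lit{communicate}$ calls return (automatic once a quorum is non-faulty) and all of $p$'s at most $n$ $\lit{LeaderElect}$ calls return (each with probability $1$ under at most $\lceil n/2\rceil-1$ faults, by Theorem~\ref{thm:maintas}). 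On this event the only way $p$ fails to return is to get stuck at line~\ref{line:pickspot} with \emph{every} name contended in its view, so it remains to rule this out.

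The hard part is exactly this last step. I would argue as follows. A $\id{Contended}$ bit is set (directly, or transitively via propagation on lines~\ref{line:setcont} and~\ref{line:prepropcontend}) only because some processor picked that name on line~\ref{line:pickspot} and invoked its leader election; so if all $n$ names are contended in $p$'s view, all $n$ leader-election instances have been invoked. Extending the execution so every other non-faulty processor runs forever (while $p$ stays stuck), each invoked instance has a well-defined linearized \emph{owner}: the processor whose operation is ordered first — the unique one returning $\id{WIN}$, or the earliest invoker if none returns $\id{WIN}$, as in the proof of Lemma~\ref{lem:elinear}. These owners are pairwise distinct, because an owner of two names would, on completing whichever of the two leader-election instances it finishes first, return $\id{WIN}$ and thus return that name on line~\ref{line:namefound}, never reaching the iteration in which it would pick the second name (for a crashing owner the same follows since a processor's picks occur in iteration order). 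Hence there are $n$ distinct owners among the $n$ processors, so \emph{every} processor — in particular $p$ — owns some name $j_0$. But owning $j_0$ forces $p$ to have invoked $\lit{LeaderElect}_{j_0}$, hence to have picked $j_0$; since $j_0$ is already contended in $p$'s view and these bits persist, $p$ must have picked $j_0$ in a strictly earlier, completed iteration, in which $p$ completed $\lit{LeaderElect}_{j_0}$ as its owner and therefore returned $\id{WIN}$ and returned $j_0$ — contradicting that $p$ is still at line~\ref{line:pickspot}. So the stuck state is unreachable, and $p$ returns within $n$ iterations with probability $1$.
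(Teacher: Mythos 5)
Your proof is correct and follows essentially the same route as the paper's: uniqueness comes from the unique-winner property of the per-name leader-election instance, and termination comes from the fact that a processor picks distinct names, all \lit{communicate} and \lit{LeaderElect} calls return with probability $1$, and a pigeonhole/ownership count over the $n$ linearized winners forces each non-faulty processor to win within $n$ iterations. Your third paragraph is in fact more careful than the paper, which asserts ``each non-faulty processor eventually wins a name and returns'' without explicitly ruling out the state where a processor reaches line~\ref{line:pickspot} with every name contended in its view; your distinct-owners argument closes that gap cleanly.
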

\begin{proof}
Assume that less than half of the processors are faulty.
Processors executing the \lit{getName} call the \lit{communicate} procedure 
 which always terminates under at most at most $\lceil n/2 \rceil -1$ processor faults.
All local computations steps are also always performed successfully by non-faulty processors. 

Finally, processors invoke our leader election algorithm from~\sectionref{sec:tas} for at most $n$ names, 
 at most once for each name (the first time they set $\id{Contended} \gets \lit{true}$,
 which prohibits contending in the future).
By~\theoremref{thm:maintas} all invocations of the leader election for a particular name by non-faulty processors 
 terminate with probability $1$, and using union bound for at most $n$ names, 
 the probability that all leader election calls by all non-faulty processors terminate tends to $1$.
Therefore, with probability $1$, non-faulty processors keep making progress, i.e. they keep contending for new names,
 and as there are $n$ names and $n$ processors that do not contend for the same name twice, 
 each non-faulty processor eventually wins a name and returns.

A processor that returns some name $u$ from a \lit{getName} call
        has to be the winner of our leader election protocol.
However, according to~\theoremref{thm:maintas}, 
        $\lit{LeaderElect}_u$ cannot have more than one winner.
\end{proof}
\begin{lemma}
\label{lem:renameorder}
In any execution, if a processor views $\id{Contended}[i] = \lit{true}$ 
        in some while loop iteration, 
        and in some subsequent iteration on line~\ref{line:pickspot} the same 
        processor views $\id{Contended}[j] = \lit{false}$, $i \prec j$ has to hold.
\end{lemma}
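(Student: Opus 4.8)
The plan is to reduce both hypotheses to statements about the quantities that define $\prec$. For a name $u$, write $\tau_u$ for the first instant at which more than half of the processors hold $\id{Contended}[u]=\lit{true}$ in their view (so $\tau_u=\infty$ if this never happens); the order $\prec$ sorts names by increasing $\tau_u$, breaking ties as prescribed, and in particular every name with finite $\tau$ precedes every name with $\tau=\infty$. I will show $\tau_i<\tau_j$, which is \emph{strict}, hence immediately yields $i\prec j$ regardless of tie-breaking, and also covers the case $\tau_j=\infty$.

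First I would bound $\tau_i$. Let $t_1$ be the iteration in which the processor, call it $p$, holds $\id{Contended}[i]=\lit{true}$. Since $p$'s $\id{Contended}$ array is monotone and is written only on line~\ref{line:setcont} and at the assignment $\id{Contended}[\id{spot}]\gets\lit{true}$, there are exactly three possibilities: $p$ already held $\id{Contended}[i]=\lit{true}$ upon entering iteration $t_1$; $p$ set it on line~\ref{line:setcont} of iteration $t_1$; or $p$ drew $\id{spot}=i$ on line~\ref{line:pickspot} of iteration $t_1$. In the first two cases $\id{Contended}[i]=\lit{true}$ when $p$ reaches the \lit{communicate}$(\id{propagate},\cdot)$ of line~\ref{line:prepropcontend}, so $i$ lies in the propagated set and a quorum records $\id{Contended}[i]=\lit{true}$ by the time that call returns; in the third case $p$ propagates $\id{Contended}[\id{spot}]=\lit{true}$ on line~\ref{line:propcontend}, with the same effect. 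In every case $\tau_i$ is at most the time iteration $t_1$ of $p$ finishes — and the existence of a later iteration $t_2$ of $p$ guarantees that it does finish — so in particular $\tau_i$ is finite and no later than the instant $p$ begins iteration $t_2$.

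Next I would use the second hypothesis. In iteration $t_2$, just before line~\ref{line:pickspot}, $p$ holds $\id{Contended}[j]=\lit{false}$; hence every processor in the quorum $Q'$ from which $p$ collected on line~\ref{line:colContended} of iteration $t_2$ reported $\id{Contended}[j]=\lit{false}$ in its acknowledgment, for otherwise the loop at line~\ref{line:setcont} would have set $p$'s entry to $\lit{true}$. Now assume, toward a contradiction, that $i\prec j$ fails; since $i\neq j$ (by monotonicity $p$ still holds $\id{Contended}[i]=\lit{true}$ in iteration $t_2$ while it holds $\id{Contended}[j]=\lit{false}$), this means $j\prec i$, hence $\tau_j\le\tau_i<\infty$, so at time $\tau_j$ some quorum $Q_j$ holds $\id{Contended}[j]=\lit{true}$. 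Pick $q\in Q'\cap Q_j$, which is nonempty since both are quorums. Process $q$ acknowledged $p$'s collect of iteration $t_2$ at some time $s'$ with $\id{Contended}[j]=\lit{false}$, yet holds it $\lit{true}$ at $\tau_j$; by monotonicity $s'<\tau_j$. But $s'$ is not earlier than when $p$ issued that collect, i.e.\ not earlier than the start of iteration $t_2$, so $s'\ge\tau_i$. Chaining, $\tau_i\le s'<\tau_j\le\tau_i$, a contradiction; therefore $i\prec j$.

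The delicate step is the first one: one must be careful that ``$p$ \emph{views} $\id{Contended}[i]=\lit{true}$ in iteration $t_1$'' really forces a quorum to record $i$ \emph{within} that iteration, which is exactly what the three-case analysis — together with monotonicity of $\id{Contended}$ and the fact that the relevant propagate call always carries $i$ — delivers. After that, the argument is the routine quorum-intersection-plus-monotonicity reasoning about $\prec$ already used elsewhere in the renaming analysis.
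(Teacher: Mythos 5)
Your proof is correct and follows essentially the same route as the paper's: the propagate call of the earlier iteration (line~\ref{line:prepropcontend} or~\ref{line:propcontend}) puts $\id{Contended}[i]=\lit{true}$ at a quorum by the end of that iteration, while the collect of the later iteration certifies a quorum still reporting $\id{Contended}[j]=\lit{false}$ afterwards, and quorum intersection plus monotonicity forces $\tau_i<\tau_j$. You merely spell out the case analysis and the final intersection step more explicitly than the paper does; there is no substantive difference.
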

\begin{proof}
Clearly, $j \neq i$ because contention information never disappears
        from a processor's view.
In the earlier iteration, the processor propagates 
        $\id{Contended}[i]=\lit{true}$ to a quorum on line~\ref{line:prepropcontend} or~\ref{line:propcontend}.
During the later iteration, on line~\ref{line:colContended},
        the processor collects information and does not set 
        $\id{Contended}[j]$ to $\lit{true}$ before reaching line~\ref{line:pickspot}.
Thus, more than half of the processors view
        $\id{Contended}[j] = \lit{false}$ at some intermediate time point.
Therefore, a quorum of processors views $\id{Contended}[i]=\lit{true}$
        strictly earlier than $\id{Contended}[j]=\lit{true}$, 
        and by definition $i \prec j$.
\end{proof}
\begin{lemma}
\label{lem:msgrenameweak}
The message complexity of our renaming algorithm is 
        $O(n \cdot \mathbb{E}[\sum_{i=1}^n X_i])$. 
\end{lemma}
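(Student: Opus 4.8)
The plan is to split the total number of messages, call it $M$, into the messages produced by the \lit{communicate} calls that appear directly in the body of the \lit{getName} loop, and the messages produced inside the leader-election instances, and to bound each part separately in terms of $\sum_{i=1}^n X_i$.

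First I would establish the pathwise identity that the total number of while-loop iterations performed over the whole execution equals $\sum_{i=1}^n X_i$. This is because in each iteration a processor picks exactly one name on line~\ref{line:pickspot} and immediately sets its own view of $\id{Contended}$ for that name to $\lit{true}$, so it can never pick -- hence never contend for -- that name again; thus the iterations of a fixed processor are in bijection with the distinct names it ever contends for, and summing over all processors gives $\sum_i X_i$. Since each iteration issues only a constant number of \lit{communicate} calls (the \id{collect} on line~\ref{line:colContended} and the two \id{propagate} calls on lines~\ref{line:prepropcontend} and~\ref{line:propcontend}), and each \lit{communicate} call costs at most $2n$ messages -- $n$ outgoing plus at most $n$ acknowledgments, using the standing assumption that every non-faulty processor always replies -- the messages of this first type total at most $c\,n\sum_i X_i$ on every execution, hence in expectation $O\big(n\,\mathbb{E}[\sum_i X_i]\big)$.

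Second, for the messages consumed inside the leader elections, note that the instance $\lit{LeaderElect}_{u_i}$ (including its internal \lit{Doorway} and \lit{PreRound} machinery) has exactly $X_i$ participants, so by \theoremref{thm:maintas} its expected message complexity is $O(n\,X_i)$. The subtlety is that $X_i$ is a random variable determined by the surrounding execution, so the bound must be applied conditionally: conditioning on the set of participants of $\lit{LeaderElect}_{u_i}$ and on the scheduling history up to its first step, the renaming adversary induces some adversary for this leader-election instance with that many participants, and since the $O(nk)$ guarantee of \theoremref{thm:maintas} is worst-case over adversaries it still holds; taking expectations over the conditioning then gives $\mathbb{E}[\text{messages in }\lit{LeaderElect}_{u_i}] = O(n\,\mathbb{E}[X_i])$. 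Summing over $i$ and adding the first part yields $\mathbb{E}[M] = O\big(n\,\mathbb{E}[\sum_{i=1}^n X_i]\big)$.

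I expect the main (though modest) obstacle to be precisely this conditioning/composition step: arguing carefully that a worst-case message bound for a sub-protocol can be plugged into an outer randomized protocol whose adversary and number of participants are themselves correlated with the protocol's coin flips. The only other point to watch is that processors which have already returned from \lit{getName}, or that never participate in renaming at all, still send acknowledgments; but these are already absorbed in the factor-$2$ bound on the cost of a \lit{communicate} call, so they do not affect the asymptotics.
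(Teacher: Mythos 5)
Your proof is correct and follows essentially the same route as the paper's: both establish the identity between total loop iterations and $\sum_i X_i$, charge $O(n)$ messages per iteration for the constant number of \lit{communicate} calls, and charge $O(n\,X_i)$ to each leader-election instance via \theoremref{thm:maintas}. Your treatment of the conditioning step for the nested leader-election instances is in fact slightly more careful than the paper's, which states the $O(\sum_i n X_i)$ bound for the leader-election messages without comment.
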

\begin{proof}
Let $L_j$ be the number of loop iterations executed by processor $j$.
Then $\sum_i X_i = \sum_j L_j$, because every iteration involves
        one processor contending at a single name spot, and 
        no processor contends for the same name twice.
Each iteration involves two \lit{communicate} calls with $O(n)$ total messages.
The total number of messages sent in the leader election protocols is 
        $O(\sum_i n \cdot X_i)$.
The message complexity is thus the expectation of:
\begin{equation*}
O\left(\sum_i n \cdot X_i \right) + \sum_j O(n) \cdot L_j
 = O\left(\sum_i n \cdot X_i\right)
\end{equation*}
\noindent as desired.
\end{proof}
\begin{lemma}
\label{lem:ncontend}
In any execution, at most $\frac{n}{2^{j-1}}$ processors ever contend for names 
        from groups $G_{j' \geq j}$.
\end{lemma}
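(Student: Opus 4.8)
The plan is to bound the number of distinct processors that can ever contend for a name in $G_{j'\ge j}$ by tracking how contention information about the earlier groups $G_{j'''<j}$ propagates. The key fact to exploit is the temporal ordering guarantee from Lemma~\ref{lem:renameorder}: once a processor sees $\id{Contended}[i]=\lit{true}$ for some name $i$, it can subsequently pick (on line~\ref{line:pickspot}) only names $j$ with $i\prec j$. Recall that the groups are defined precisely by the $\prec$-ordering: $G_1$ is the first $n/2$ names in $\prec$-order, $G_2$ the next $n/4$, and so on, so $G_{j'\ge j}$ consists of the $\prec$-last $n/2^{j-1}$ names. Thus a processor can contend for a name in $G_{j'\ge j}$ in a given iteration only if, at the moment it reaches line~\ref{line:pickspot}, every name in the earlier groups $G_{j'''<j}$ already appears as $\lit{true}$ in its view of $\id{Contended}$ (otherwise it would have a non-$\prec$-monotone choice available, contradicting Lemma~\ref{lem:renameorder} applied to the first previously-contended name it has seen).

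First I would make the above precise: fix the name $u_m$ that is $\prec$-maximal in $G_{j'''<j}$, i.e.\ $u_m = u_{n - n/2^{j-1}}$, the last name of group $G_{j-1}$. By the definition of $\prec$, the first time a quorum of processors holds $\id{Contended}[u_m]=\lit{true}$ is no earlier than the corresponding time for any name of $G_{j'''<j}$; more to the point, once a processor picks a name from $G_{j'\ge j}$ it must, by Lemma~\ref{lem:renameorder}, have already viewed all of $G_{j'''<j}$ as contended, and in particular $\id{Contended}[u_m]=\lit{true}$. Second, I would argue that any processor that ever contends for a name in $G_{j'\ge j}$ must, at or before that iteration, have executed line~\ref{line:prepropcontend} with $\id{Contended}[u_m]=\lit{true}$ included in the propagated set (it set this bit on line~\ref{line:setcont} in the same or a previous iteration, since contention information is monotone and never erased). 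Hence every such processor has propagated $\id{Contended}[u_m]=\lit{true}$ to a quorum. Third — and this is the crux — I would count: the names of $G_{j'\ge j}$ number exactly $n/2^{j-1}$, and since no processor contends for the same name twice (it sets $\id{Contended}[\id{spot}]\gets\lit{true}$ immediately and then propagates it), each processor contends for at most one name per $G_{j'\ge j}$-name; but that only gives a bound through the leader-election side, not directly on processor count.

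The cleaner counting route, which I expect to be the actual argument, is this: consider the set $P$ of processors that ever contend for some name in $G_{j'\ge j}$. I claim $|P|\le n/2^{j-1}$ because each processor in $P$, in the iteration where it first does so, "occupies" a distinct available slot. More carefully, order the processors of $P$ by the time they reach line~\ref{line:pickspot} in their first $G_{j'\ge j}$-contending iteration. When a processor $p$ does so, it picks a name $v\in G_{j'\ge j}$ that is $\lit{false}$ in its view $V_p$; it then immediately sets $\id{Contended}[v]\gets\lit{true}$ and propagates it before the iteration ends. The point is to show that the map sending $p$ to this $v$ is injective over $P$ — if two processors $p,q$ both pick the same $v$, the later one must not have seen $v$ as contended, which forces (via the quorum intersection property and the propagation on line~\ref{line:propcontend} or~\ref{line:prepropcontend}) that the earlier one had not yet propagated it, and then one invokes the $\prec$ definition plus Lemma~\ref{lem:renameorder} to derive a contradiction with $v\in G_{j'\ge j}$ (roughly: the later processor should have seen all of $G_{j'''<j}$ but not $v$, constraining the timeline so tightly that $v$ cannot be in a later-or-equal group). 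Granting injectivity, $|P|\le |G_{j'\ge j}| = n/2^{j-1}$, which is the claim. The main obstacle I anticipate is establishing this injectivity cleanly: the asynchrony means "picked $v$ and hasn't seen $v$ contended" does not immediately imply disjoint picks, so the argument must carefully combine (i) monotonicity of the $\id{Contended}$ view, (ii) the fact that after picking $v$ a processor propagates $\id{Contended}[v]=\lit{true}$ to a quorum within the same iteration, and (iii) the $\prec$-ordering's compatibility with the group structure, to rule out two processors both selecting a name that lies in $G_{j'\ge j}$ yet is seen uncontended by the second.
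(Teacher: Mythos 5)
There is a genuine gap: the injectivity claim at the heart of your count is false. You propose to map each processor in $P$ to the first name $v\in G_{j'\ge j}$ it contends for and to show this map is injective, but nothing in the algorithm prevents two (or many) processors from concurrently seeing the same name $v$ as uncontended and both picking it --- the adversary simply delays the propagation on line~\ref{line:propcontend} of whichever processor picked $v$ first. This is exactly why the algorithm runs a separate leader election per name and why the quantity $X_i$ (the number of processors contending for $u_i$) requires a nontrivial expectation bound later in the paper; if your injectivity held, one would have $\sum_i X_i\le n$ deterministically and most of the renaming analysis would be vacuous. Lemma~\ref{lem:renameorder} cannot rescue the argument either: it constrains what a processor may pick \emph{after} having viewed some name as contended in an earlier iteration, but it says nothing about two processors' concurrent first picks of the same name. (Your preliminary claim that a processor picking in $G_{j'\ge j}$ must have seen \emph{all} of $G_{j'''<j}$ as contended is also not a consequence of Lemma~\ref{lem:renameorder}: the random choice on line~\ref{line:pickspot} is uniform over the names currently seen as free, so a later-group name can be picked even while an earlier-group name is still seen as free.)

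The paper's proof counts the complement instead. If any name of $G_{j'\ge j}$ is ever contended then, by the way $\prec$ sorts never-contended names last, all $n-\frac{n}{2^{j-1}}$ names of the earlier groups are contended, and each such name $u$ has a (linearized) winner of its leader election --- these winners are distinct processors, since a processor returns at most one name and stops. By Lemma~\ref{lem:renameorder}, such a winner $p$ cannot have contended for any $v\in G_{j'\ge j}$ before contending for $u$ (that would force $v\prec u$, contradicting the group ordering), and it contends for nothing after $u$. Hence at least $n-\frac{n}{2^{j-1}}$ processors never contend in $G_{j'\ge j}$, leaving at most $\frac{n}{2^{j-1}}$ that can. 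You should replace your injectivity argument with this exclusion count.
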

\begin{proof}
If no name from $G_{j' \geq j}$ is ever contended, then the statement is 
        trivially true.
If some name from $G_{j' \geq j}$ was contended, then by our ordering so were all names from the former groups. 
Otherwise, an uncontended name from an earlier group must be sorted later and cannot belong to an earlier group.

There are $n-\frac{n}{2^{j-1}}$ names in earlier groups.
Since they all were contended, there are $n-\frac{n}{2^{j-1}}$
        processors that can be linearized to win the corresponding 
        leader election and the name.
Consider one such processor $p$ and the name $u$ from some earlier group
        $G_{j''' < j}$, that $p$ is bound to win.
Processor $p$ does not contend for names after $u$, and it also 
        never contends for a name from $G_{j' \geq j}$ before contending 
        for $u$, because that contradicts~\lemmaref{lem:renameorder}.
Thus, none of the $n-\frac{n}{2^{j-1}}$ processors ever contend for a
        name from $G_{j' \geq j}$, out of $n$ processors in total, 
        completing the argument. 
\end{proof}
\begin{lemma}
\label{lem:rentype1}
$\mathbb{E}[\sum_{i=1}^n X_i(\id{clean})] = O(n).$
\end{lemma}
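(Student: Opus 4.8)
The plan is to reduce the claim to the tail bound on the number of $\id{clean}(j)$ iterations proved right after \lemmaref{lem:phaseunion}, via a decomposition by group. Writing $\sum_{i=1}^n X_i(\id{clean}) = \sum_{j=1}^{\log n} \sum_{u_i \in G_j} X_i(\id{clean})$, I would first observe that in any $\id{clean}(j)$ iteration the processor picks, on line~\ref{line:pickspot}, a single name that is uncontended in its view $V_p$; since every iteration starting in phase $j$ sees all names of $G_{j'''<j}$ as contended, this picked name must lie in $G_{j'\geq j}$, i.e.\ it is either a name in $G_j$ or one in $G_{j''>j}$. Because no processor ever contends for the same name twice, the group sum $\sum_{u_i \in G_j} X_i(\id{clean})$ equals the number of $\id{clean}(j)$ iterations whose picked name lies in $G_j$, and in particular it is at most $N_j$, the total number of $\id{clean}(j)$ iterations.

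Next I would bound $\mathbb{E}[N_j]$ using the corollary to \lemmaref{lem:phaseunion}, which states $\Pr[N_j \geq \beta n] \leq e^{-\beta n/32}$ for every $\beta \geq \tfrac{1}{2^{j-6}}$. Integrating the tail, bounding the probability by $1$ for $t$ below the threshold $t_0 = \tfrac{n}{2^{j-6}} = \tfrac{64n}{2^j}$ and by $e^{-t/32}$ for $t \geq t_0$, gives
\[
\mathbb{E}[N_j] \;\leq\; t_0 + \int_{t_0}^{\infty} e^{-t/32}\,dt \;=\; \frac{64n}{2^j} + 32\,e^{-2n/2^j} \;\leq\; \frac{64n}{2^j} + 32 .
\]
Summing over the at most $\log n$ groups then yields $\mathbb{E}\big[\sum_{i=1}^n X_i(\id{clean})\big] \leq \sum_{j=1}^{\log n}\mathbb{E}[N_j] \leq 64n\sum_{j\geq 1}2^{-j} + 32\log n = O(n)$, which is the claim.

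The whole argument is essentially bookkeeping on top of \lemmaref{lem:phaseunion}, so I do not anticipate a genuine obstacle. The one place requiring care is the reduction in the first paragraph: one must check that summing $X_i(\id{clean})$ over an entire group $G_j$ does not double count any single $\id{clean}(j)$ iteration, which holds because a processor never re-contends for a name and because the name chosen in a $\id{clean}(j)$ iteration always lies in $G_{j'\geq j}$, so when that name is in $G_j$ it contributes exactly $1$ to the group sum and otherwise $0$. A minor point worth stating explicitly is that the tail integral contributes only an additive $O(1)$ per group rather than $o(1)$, so the $\log n$ groups add an $O(\log n)$ term; since $\log n = O(n)$ this is harmless and merely explains why the bound is phrased as $O(n)$.
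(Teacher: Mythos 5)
Your proof is correct, and it takes a genuinely different (and somewhat leaner) route than the paper's. The paper proves the stronger per-name statement $\mathbb{E}[X_i(\id{clean})] = O(1)$: it uses the fact that in a $\id{clean}(j)$ iteration all of $G_{j''>j}$ (at least $n/2^j$ names) is uncontended in $V_p$, so the uniform pick hits a fixed $u_i \in G_j$ with probability at most $2^j/n$; conditioning on there being $\beta n$ such iterations it dominates $X_i(\id{clean})$ by a $\mathrm{B}(\beta n, 2^j/n)$ variable and then sums the conditional expectations against the same tail bound $e^{-\beta n/32}$ that you use. You instead aggregate over the group first, observe that $\sum_{u_i \in G_j} X_i(\id{clean}) \le N_j$ trivially (each $\id{clean}(j)$ iteration contends for exactly one name, hence contributes at most one to the group sum), and integrate the tail of $N_j$ directly; the $64n/2^j$ threshold terms telescope to $O(n)$ and the $\log n$ additive constants are absorbed. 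Both arguments hinge on the same key ingredient, the corollary to \lemmaref{lem:phaseunion}. What the paper's version buys is the per-name $O(1)$ bound (a finer statement, and one that exercises the "probability at most $2^j/n$ per pick" observation); what yours buys is brevity and the avoidance of the stochastic-domination/independence step for an adaptively scheduled sequence of picks, which is the only mildly delicate point in the paper's version. Your closing remarks (no double counting within a group; the $O(\log n)$ additive term is harmless) are accurate.
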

\begin{proof}
Let us equivalently prove that $\mathbb{E}[X_i(\id{clean})] = O(1)$ for 
        any name $u_i$ in some group $G_j$, where
        $X_i(\id{clean})$ is defined as the number of $\id{clean}(j)$ iterations, 
        in which processors contend for a name $u_i \in G_j$.

By definition, in all $\id{clean}(j)$ iterations a processor observes
        all names in $G_{j''>j}$ as uncontended on line~\ref{line:pickspot}.
Therefore, each time, independent of other iterations, the probability of picking spot $i$ and 
        contending for the name $u_i$ is at most $\frac{2^{j}}{n}$.
Thus, if there are exactly $\beta n$ of $\id{clean}(j)$ iterations, 
        $X_i(\id{clean}) \leq \mathrm{B}(\beta n, \frac{2^{j}}{n})$, thus
\begin{equation}
\mathbb{E}[X_i(\id{clean}) \mid u_i \in G_j, \beta n \textit{ iterations}] 
\leq 2^{j} \beta \label{lem:alphan}
\end{equation}
\noindent for $\beta n$ clean iterations that started in phase $j$. 
The probability that there are exactly $\beta n$ of $\id{clean}(j)$ iterations is trivially 
        upper-bounded by the probability that there are 
        at least $\beta n$ $\id{clean}(j)$ iterations, 
        which by~\corollaryref{cor:expdec} is at most 
        $e^{-\frac{\beta n}{32}}$ for $\beta \geq \frac{1}{2^{j-6}}$.
Therefore:
\begin{equation}
\mathbb{E}[X_i(\id{clean}) \mid u_i \in G_j] 
\leq    \frac{2^{j}}{2^{j-6}} + 
        \sum_{l = \lceil \frac{n}{2^{j-6}} \rceil}^{\infty} 
                e^{-\frac{l}{32}} \cdot \frac{2^{j}l}{n}
\end{equation}
\noindent which, after some calculation, is $O(1)$, completing the proof.
\end{proof}
\begin{claim}
\label{clm:notwice}
  In any execution, for each $j$, any processor participates in at most one $\id{dirty}(j)$
        and at most one $\id{cross}(j)$ iteration.
\end{claim}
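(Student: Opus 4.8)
The plan is to prove both halves with a single mechanism. A $\id{dirty}(j)$ iteration of a processor $p$ witnesses some name of a strictly later group $G_{j''>j}$ as already contended, and a $\id{cross}(j)$ iteration in which $p$ contends for a name $u_i\in G_j$ plants $u_i$ itself at a quorum; I will show that either event forces an entire $\prec$-prefix of the names to become quorum-contended, which, by the way the phases are defined, means that phase $j$ (respectively phase $j-1$) has already ended by the time that iteration completes. Since every later iteration of $p$ then starts after the end of phase $j$ (respectively phase $j-1$), it cannot be $\id{dirty}(j)$ (respectively $\id{cross}(j)$), so $p$ has at most one iteration of each kind.

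The first ingredient is a monotonicity observation about $\prec$: at any point in time the set of names $u$ for which a quorum holds $\id{Contended}[u]=\lit{true}$ is a $\prec$-prefix. This is immediate from the definition of $\prec$, since names are sorted by the first instant their contention information reaches a quorum (with the time-$\infty$ names placed last); hence if $u$ is quorum-contended at time $t$, so is every $w\prec u$. Combining this with the facts that the groups $G_1,G_2,\dots$ are consecutive blocks of $\prec$ and that phase $j$ ends exactly when the $\prec$-last name of $G_j$ becomes quorum-contended, I obtain the two statements I will use: if at time $t$ some name of $G_{j''>j}$ is quorum-contended then all of $G_1\cup\cdots\cup G_j$ is quorum-contended at $t$, so phase $j$ has ended by $t$; and symmetrically, if at time $t$ some name of $G_j$ is quorum-contended then phase $j-1$ has ended by $t$.

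Now fix a processor $p$. For the $\id{dirty}(j)$ part, let $I$ be $p$'s first $\id{dirty}(j)$ iteration: the view $V_p$ that $p$ holds just before line~\ref{line:pickspot} of $I$ contains some $u\in G_{j''>j}$ marked contended, and since nothing between line~\ref{line:prepropcontend} and line~\ref{line:pickspot} modifies $p$'s $\id{Contended}$ array, $u$ is already marked contended when $p$ issues the $\lit{communicate}$ call on line~\ref{line:prepropcontend}; therefore, by the time that call returns, a quorum holds $\id{Contended}[u]=\lit{true}$. By the observation above, phase $j$ has ended by then, so every subsequent iteration of $p$ begins after the end of phase $j$ and hence is not $\id{dirty}(j)$, making $I$ the only such iteration. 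The $\id{cross}(j)$ part is identical except for where the witness name is placed: if $I'$ is $p$'s first iteration that contends for a name $u_i\in G_j$ and starts in a phase $<j$, then $u_i$ is the spot $p$ selects on line~\ref{line:pickspot}, so $u_i$ is not yet in $p$'s array at line~\ref{line:prepropcontend}; instead $p$ propagates $\id{Contended}[u_i]=\lit{true}$ to a quorum on line~\ref{line:propcontend} before $I'$ finishes, so $u_i\in G_j$ is quorum-contended by the end of $I'$, phase $j-1$ has ended by then, and every later iteration of $p$ begins after the end of phase $j-1$ and hence is not $\id{cross}(j)$.

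The step I expect to need the most care is making the ``witness reaches a quorum'' claim fully rigorous: one must check that the witnessing name is genuinely in $p$'s local $\id{Contended}$ set at the moment of the propagating $\lit{communicate}$ call used in the argument --- on line~\ref{line:prepropcontend} in the $\id{dirty}$ case because $V_p$ is precisely that array, unchanged since that line, and on line~\ref{line:propcontend} in the $\id{cross}$ case because $p$ has just picked the name --- and that such a call indeed leaves more than half of the processors with that bit set. The remaining loose ends are routine: $\id{cross}(j)$ is vacuous for $j=1$ since there is no earlier phase, and $\id{dirty}(j)$ is vacuous once $G_{j''>j}$ is empty. Note that the argument uses only the structural property of $\prec$ that underlies \lemmaref{lem:renameorder}, not the lemma itself.
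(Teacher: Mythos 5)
Your proof is correct and takes essentially the same route as the paper's: in both halves the witnessing name is shown to reach a quorum via the propagating \lit{communicate} call inside the first such iteration (line~\ref{line:prepropcontend} for $\id{dirty}(j)$, line~\ref{line:propcontend} for $\id{cross}(j)$), and the $\prec$-ordering of names then forces phase $j$ to have ended (respectively, phase $j$ to have started) before any later iteration of that processor begins. The only difference is that you spell out the prefix-monotonicity of the quorum-contended set under $\prec$, which the paper leaves implicit in the phrase ``by the way names in $u$ are sorted.''
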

\begin{proof}
  The first time processor $p$ participates in a $\id{dirty}(j)$ iteration, 
        by definition, it views $\id{Contended}[i] = \lit{true}$ for some $u_i \in G_{j''>j}$.
  Therefore, $p$ also propagates $\id{Contended}[i] = \lit{true}$ 
        on line~\ref{line:prepropcontend} in the same iteration.
  When $p$ starts a subsequent iteration, a quorum of processors know about 
        $u_i \in G_{j''>j}$ being contended.
  By the way names in $u$ are sorted, at that point more than half of the processors 
        must already know that each name in $G_j$ is contended, meaning that phase $j$ has ended.
  Therefore, no subsequent iteration of the processor can be of a $\id{dirty}(j)$ type.
        
  On the other hand, when a processor completes a $\id{cross}(j)$ iteration, it has propagated
        contention information for $u_i \in G_j$ to a quorum, meaning that because of the way
        names in $u$ are sorted, phase $j$ must have been started already, 
        and no operation that starts later can be $\id{cross}(j)$.
\end{proof}
\begin{lemma}
\label{lem:rentype23}
$\mathbb{E}[\sum_{i=1}^n X_i(\id{dirty})] = O(n)$ and 
        $\mathbb{E}[\sum_{i=1}^n X_i(\id{cross})] = O(n).$
\end{lemma}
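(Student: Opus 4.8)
The plan is to derive both bounds deterministically (not merely in expectation) from two ingredients already in hand: Claim~\ref{clm:notwice}, which states that for each $j$ a processor has at most one $\id{dirty}(j)$ iteration and at most one $\id{cross}(j)$ iteration, and Lemma~\ref{lem:ncontend}, which states that at most $n/2^{j-1}$ processors ever contend for a name in $G_{j'\ge j}$.

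First I would split each sum by groups, $\sum_{i=1}^n X_i(\id{dirty}) = \sum_{j=1}^{\log n}\sum_{i:\,u_i\in G_j}X_i(\id{dirty})$ and likewise for $X_i(\id{cross})$, and check that each iteration in which a processor contends for some name is counted in exactly one term. An iteration in which a processor picks a name $u_i\in G_j$ either starts in phase $j$ --- in which case it is a $\id{clean}(j)$ or $\id{dirty}(j)$ iteration, counted in $X_i(\id{clean})$ or $X_i(\id{dirty})$ --- or starts in a phase $j'''<j$, in which case it is a $\id{cross}(j)$ iteration counted in $X_i(\id{cross})$; it cannot start in a later phase, since by then all names of $G_j$ are already contended by a quorum. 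In particular a $\id{dirty}(j)$ iteration that picks a name in a strictly later group $G_{j''}$ is not counted among the $\id{dirty}(j)$ terms at all --- it is a $\id{cross}(j'')$ iteration of that later name --- so the $j$-th inner sum for $\id{dirty}$ sees only $\id{dirty}(j)$ iterations whose chosen name lies in $G_j$, and similarly for $\id{cross}$.

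Next, fix $j$ and bound $\sum_{i:\,u_i\in G_j}X_i(\id{dirty})$. By Claim~\ref{clm:notwice} each processor contributes at most one $\id{dirty}(j)$ iteration, so this sum is at most the number of processors that have a $\id{dirty}(j)$ iteration whose chosen name lies in $G_j$; any such processor contends for a name of $G_j\subseteq G_{j'\ge j}$, so by Lemma~\ref{lem:ncontend} there are at most $n/2^{j-1}$ of them. Hence $\sum_{i:\,u_i\in G_j}X_i(\id{dirty})\le n/2^{j-1}$, and summing the geometric series over $j=1,\dots,\log n$ yields $\sum_{i=1}^n X_i(\id{dirty})<2n$ in every execution, so $\mathbb{E}[\sum_i X_i(\id{dirty})]=O(n)$. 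The identical argument, using the $\id{cross}(j)$ clause of Claim~\ref{clm:notwice}, gives $\sum_{i=1}^n X_i(\id{cross})<2n$, hence $\mathbb{E}[\sum_i X_i(\id{cross})]=O(n)$.

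Unlike Lemma~\ref{lem:rentype1}, this statement carries no probabilistic content: the random choices on line~\ref{line:pickspot} enter only through the already-established Claim~\ref{clm:notwice} and Lemma~\ref{lem:ncontend}. The one place that needs care --- presumably the ``some work'' alluded to in the text --- is the classification step: each $\id{dirty}(j)$ or $\id{cross}(j)$ iteration must be charged to the group of the name actually picked, so that the per-group count is genuinely controlled by Lemma~\ref{lem:ncontend} and no iteration is double-counted or omitted. Once that bookkeeping is pinned down, the conclusion is a one-line combination of the two earlier facts with a geometric sum.
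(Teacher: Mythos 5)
Your proof is correct and follows essentially the same route as the paper's: combine Claim~\ref{clm:notwice} (at most one $\id{dirty}(j)$ and one $\id{cross}(j)$ iteration per processor) with Lemma~\ref{lem:ncontend} (at most $n/2^{j-1}$ processors ever contend in $G_{j'\ge j}$) to get a deterministic per-group bound of $n/2^{j-1}$, then sum the geometric series. The only cosmetic difference is that the paper, worried that the adversary controls which names land in $G_j$, over-counts by summing over all $u_i\in G_{j'\ge j}$, whereas you bound $\sum_{u_i\in G_j}X_i(\id{dirty})$ directly; both are valid since the bound is pointwise over executions.
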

\begin{proof}
  Recall that $X_i(\id{dirty})$ is the number of processors 
        that ever contend for $u_i \in G_j$ in a $\id{dirty}(j)$ iteration.
  Let us equivalently fix $j$ and prove that 
        $\mathbb{E}[\sum_{u_i \in G_j} X_i(\id{dirty})] = O(\frac{n}{2^{j-1}})$, 
        which implies the desired statement by linearity of expectation and telescoping.

  We sum up quantities $X_i(\id{dirty})$ for the names in $j$-th group, 
        but the adversary controls precisely which names belong to group $G_j$.
  We will therefore consider all names $u_i \in G_{j' \geq j}$ and sum up quantities $X_{i,j}$:
        the number of processors that contend for a name $u_i$ in a $\id{dirty}(j)$ iteration.
  All $\id{dirty}(j)$ iterations by definition start in phase $j$, and by~\lemmaref{lem:ncontend} 
        there can be at most $\frac{n}{2^{j-1}}$ different processors executing them.
  Moreover, by~\claimref{clm:notwice} each of these processors can participate in 
        at most one $\id{dirty}(j)$ iteration, implying $\sum_{u_i} X_{i,j} \leq \frac{n}{2^{j-1}}$.
  Thus $\mathbb{E}[\sum_{i=1}^n X_i(\id{dirty})] = 
        \sum_{j=1}^{\log{n}} \sum_{u_i \in G_{j'\geq j}} X_{i,j} = O(n)$ as desired.
        
  The proof for $X_i(\id{cross})$ is analogous because at most $\frac{n}{2^{j-1}}$ 
        different processors contend for names $u_i \in G_{j' \geq j}$ by~\lemmaref{lem:ncontend}, 
        each participating in at most one $\id{cross}(j)$ iteration by~\claimref{clm:notwice}.
\end{proof}
\begin{theorem}
\label{thm:rounds}
The time complexity of the the renaming algorithm is $O( \log^2 n )$.
\end{theorem}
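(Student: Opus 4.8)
The plan is to reduce, via Claim~\ref{clm:comtime}, to bounding for each processor $p$ the expected number of \lit{communicate} calls $p$ makes, and then to bound this by bounding $L_p$, the number of while-loop iterations $p$ performs. Each iteration consists of $O(1)$ \lit{communicate} calls on lines~\ref{line:colContended}, \ref{line:prepropcontend} and~\ref{line:propcontend}, plus one invocation of \lit{LeaderElect}, which by Theorem~\ref{thm:maintas} costs $O(\log^{*} n)$ \lit{communicate} calls for its at most $n$ participants. Hence the expected number of \lit{communicate} calls by $p$ is $O(\log^{*} n)\cdot \mathbb{E}[L_p]$, and it suffices to prove $\mathbb{E}[L_p] = O(\log^2 n / \log^{*} n)$; in particular $\mathbb{E}[L_p] = O(\log n)$ is more than enough, and the whole theorem then follows from Claim~\ref{clm:comtime}.

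Next I would partition $p$'s iterations by the phase in which they start, $L_p = \sum_{j=1}^{\log n} L_{p,j}$. By the classification used in the message-complexity analysis, every iteration of $p$ that starts in phase $j$ is exactly one of $\id{clean}(j)$, $\id{dirty}(j)$, or $\id{cross}(j)$, and by Claim~\ref{clm:notwice} at most one $\id{dirty}(j)$ and at most one $\id{cross}(j)$ iteration per phase, contributing $O(\log n)$ in total across all phases. So the real content is to bound the expected number of $\id{clean}(j)$ iterations of $p$ and show it sums to $O(\log n)$, e.g.\ that it is $O(1)$ per phase on average.

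For the $\id{clean}(j)$ iterations I would exploit the trade-off foreshadowed before the statement. In every $\id{clean}(j)$ iteration $p$ picks its next name uniformly among the names it currently sees as uncontended, and by definition this pool contains all $\ge n/2^{j}$ names of $G_{j''>j}$, so $p$ always has a large set to choose from. On the other side, Lemma~\ref{lem:ncontend} caps at $n/2^{j-1}$ the number of processors that ever contend for $G_{j'\ge j}$, and the concentration bound stated after Lemma~\ref{lem:phaseunion} caps (with overwhelming probability) the total number of $\id{clean}(j)$ iterations by $O(n/2^{j})$; together these bound the number of names in $p$'s pool that are "dangerous" in the sense of being contended by some other processor. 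The key step is to convert this into a per-iteration statement: each $\id{clean}(j)$ iteration of $p$ independently avoids all contention --- so $p$ wins and stops doing $\id{clean}(j)$ iterations --- with probability bounded below by a fixed constant, so that the count of such iterations is stochastically dominated by a geometric variable and $\mathbb{E}[C_{p,j}] = O(1)$. Summing over the $O(\log n)$ phases gives $\mathbb{E}[L_p] = O(\log n)$, hence $O(\log n\cdot\log^{*} n) = O(\log^2 n)$ \lit{communicate} calls per processor.

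The main obstacle is precisely this last conversion into a per-iteration constant-success-probability bound. The adversary controls message delivery and thus the exact set of names $p$ sees as uncontended, and the order in which other processors commit their picks, so it can concentrate almost all $\id{clean}(j)$ activity against $p$ and try to correlate other processors' random picks with $p$'s by delaying propagation of their $\id{Contended}$ flags. One has to argue that, because every processor making a $\id{clean}(j)$ pick also draws uniformly from a pool of size $\ge n/2^{j}$, the probability that any particular other processor lands on the same name as $p$ is $O(2^{j}/n)$, and then union-bound over the $\le n/2^{j-1}$ relevant processors --- each of which picks any fixed name at most once, since the $\id{Contended}$ flags are monotone --- keeping $p$'s per-iteration loss probability bounded away from $1$. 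Making this bookkeeping robust to the adversary's freedom to skew the views, without the constants degenerating, is the delicate part; the groups $G_j$ and Lemma~\ref{lem:ncontend} are what prevent the pool from ever shrinking below the $n/2^{j}$ threshold within phase $j$, which is what keeps the argument going.
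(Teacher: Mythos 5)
Your reduction --- bound the expected number of while-loop iterations of a fixed processor $p$, multiply by the $O(\log^{*} n)$ \lit{communicate} calls per iteration coming from Theorem~\ref{thm:maintas}, and invoke Claim~\ref{clm:comtime} --- is the same as the paper's, and the disposal of $\id{dirty}(j)$ and $\id{cross}(j)$ iterations via Claim~\ref{clm:notwice} is fine. But the step carrying all the weight, namely that each $\id{clean}(j)$ iteration of $p$ succeeds with probability bounded below by a fixed constant (so that $p$'s $\id{clean}(j)$ count is dominated by a geometric variable with $O(1)$ mean), is precisely the hard part, and your sketch does not establish it. The union bound you propose gives $(n/2^{j-1})\cdot(2^{j}/n)=2$, which is not a probability bounded away from $1$; to get a constant survival probability you need the product form $(1-2^{j}/n)^{O(n/2^{j})}= \Omega(1)$, and that requires a one-shot, independent structure that the adaptive adversary destroys. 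A competitor that loses its leader election picks again, so each of the $\le n/2^{j-1}$ relevant processors contributes many picks rather than one; the adversary can time these re-picks \emph{after} seeing $p$'s choice, and can withhold the resulting $\id{Contended}$ flags from $p$, so $p$ keeps drawing from a stale pool in which many ``free'' names are already taken. Nothing in your bookkeeping caps how much of this interference can be injected into a single iteration of $p$, and the total-count bound of Lemma~\ref{lem:phaseunion} (which is a bound summed over \emph{all} processors) does not convert into a per-iteration success probability for $p$.

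The paper closes exactly this gap with a device absent from your proposal: a per-iteration dichotomy for $p$. Either iteration $i$ is \emph{high-information}, meaning $p$ learns at the start of iteration $i+1$ that at least a $1/\log m_i$ fraction of its pool $M_i$ has become contended --- this can happen only $O(\log^{2} n/\log\log n)$ times before the pool is exhausted --- or it is \emph{low-information}, in which case every adversarial re-scheduling is charged against the at most $m_i/\log m_i$ new slots the adversary is allowed to reveal, the number of ``balls'' aimed at $p$'s slot is bounded by $O(m_i)$, and $p$ survives with probability at least $(1/e)^{3}$. Note also that this route yields $O(\log^{2} n/\log\log n)$ iterations, not the $O(\log n)$ you target; your bound, if it held, would improve the theorem to $O(\log n\log^{*}n)$ time, which is itself a signal that the per-phase $O(1)$ claim does not follow from the argument you describe. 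To repair the proof you need either the paper's information-based amortization or a substitute that bounds, per iteration of $p$, the number of adversarially timed picks landing in $p$'s stale view of the free names.
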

\begin{proof}
We will prove that the maximum expected number of \lit{communicate} calls by any processor
        that the adaptive adversary can achieve is $O( \log^2 n )$, which implies the result by~\claimref{clm:comtime}.
	
In the following, we fix an arbitrary processor $p$, 
        and upper bound the number of loop iterations it performs during the execution. 
Let $M_i$ be the set of free slots that $p$ sees when performing 
        its random choice in the $i$th iteration of the loop, and let $m_i = |M_i|$. 
By construction, notice that there can be at most $m_i$ processors that compete with for slots 
        in $M_i$ for the rest of the execution. 

Assuming that $p$ does not complete in iteration $i$, let $Y_i \subseteq M_i$ be the set of 
        \emph{new} slots that $p$ finds out have become contended at the beginning 
        of iteration $i + 1$, and let $y_i = |Y_i|$. 
We define an iteration as being \emph{low-information} if $y_i / m_i < 1 / \log m_i$.  
Notice that, in an iteration that is high-information, the processor might collide, 
        but at least reduces its random range for choices by a $1  / \log m_i$ factor.  

Let us now focus on low-information iterations, and in particular let $i$ be such an iteration. 
Notice that we can model the interaction between the algorithm and the adversary 
        in iteration $i$ as follows. 
Processor $p$ first makes a random choice $r$ from $m_i$ slots it sees as available. 
By the principle of deferred decisions, we can assume that, at this point, 
        the adversary schedules all other $m_i - 1$ processors to make their choices in this round, 
        from slots in $M_i$, with the goal of  causing a collision with $p$'s choice. 
        (The adversary has no interest in showing slots outside $M_i$ to processors.)
Notice that, in fact, the adversary may choose to schedule certain processors 
        multiple times in order to obtain collisions. 
However, by construction, each re-scheduled processor announces its choice in the iteration 
        to a quorum, and this choice will become known to $p$ in the next iteration. 
Therefore, re-scheduled processors should not announce more than $m_i / \log m_i$ distinct slots. 
Intuitively, the number of re-schedulings for the adversary can be upper bounded by 
the number of balls falling into the $m_i / \log m_i$ most loaded bins 
in an $m_i - 1$ balls into $m_i$ bins scenario. 
A simple balls-and-bins argument yields that, in any case, the adversary cannot perform 
        more than $m_i$ re-schedules without having to announce 
        $m_i / \log m_i$ new slots, with high probability in $m_i$. 

Recall that the goal of the adversary is to cause a collision with $p$'s random choice $r$. 
We can reduce this to a balls-into-bins game in which the adversary throws $m_i - 1$ 
        initial balls and an extra $m_i$ balls (from the re-scheduling) 
into a set of $m_i ( 1 - 1 / \log m_i)$ bins, with the goal of hitting a specific bin, 
        corresponding to $r$. 
        (The extra $( 1 - 1 / \log m_i)$ factor comes from the fact that certain processors 
        (or balls) may already observe the slots removed in this iteration.) 
The probability that a fixed bin gets hit is at most 
        $$ \left(1 - \frac{1}{m_i ( 1 - 1 / \log m_i )}\right)^{2m_i} \leq (1 / e)^3.$$ 

Therefore, processor $p$ terminates in each low-information iteration with constant probability. 
Putting it all together, we obtain that, for $c \geq 4$ constant, after $c \log^2 n / \log \log n$ iterations, 
        any processor $p$ will terminate with high probability, 
        either because $m_i = 1$ or because one of its probes 
        was successful in a low-information phase.

In each loop iteration, a processor performs a fixed constant additional number of \lit{communicate} calls on top
 of the \lit{communicate} calls performed while executing the leader election algorithm for the name picked in that iteration.
By~\theoremref{thm:maintas}, the maximum expected number of \lit{communicate} calls in each leader election 
 is $O(\log^{\ast} n)$, and by linearity of expectation, total maximum number of \lit{communicate} calls by any processor
 is at most $O(\frac{\log^2 n \log^{\ast} n}{\log \log n}) = O(\log^2 n)$.
\end{proof}

\section{Message Complexity Lower Bounds}
\label{app:lbounds}
In this section, we prove that our algorithms are message-optimal by showing a lower bound of expected $\Omega( n^2 )$ messages on any algorithm implementing leader election or renaming in an asynchronous message-passing system where $t < n / 2$ processors may fail by crashing. In fact, we prove such a lower bound for any object with \emph{strongly non-commutative} methods~\cite{LawsOfOrder}. 

\begin{definition}
        Given an object $O$, a method $M$ of this object is \emph{strongly non-commutative} if there exists some state $S$ of $O$ for which an instance $m_1$ of $M$ executed sequentially by processor $p$ changes the result of an instance $m_2$ of $M$ executed by processor $q \neq p$, and vice-versa, i.e. $m_2$ changes the result of $m_1$ from state $S$.
\end{definition}

\noindent We now give a message complexity lower bound for objects with non-commutative operations. 

\begin{theorem}
\label{thm:lb}
Any implementation of an object $O$ with a strongly non-commutative operation $M$ by $k \leq n$ 
        processors guaranteeing termination with at least constant probability $\alpha > 0$ 
        in an asynchronous message-passing system where $t < n / 2$ processors may 
        fail by crashing must have worst-case expected message complexity $\Omega( \alpha k n )$. 
\end{theorem}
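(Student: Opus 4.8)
The plan is to fix an arbitrary algorithm $\cA$ implementing $O$ with a strongly non-commutative operation $M$, and show that if the expected message complexity is $o(\alpha k n)$, then with positive probability some execution violates correctness or termination. First I would invoke the strong non-commutativity of $M$: fix the state $S$ and the two instances $m_1$ (by processor $p$) and $m_2$ (by processor $q$) that mutually affect each other's return values from $S$. The high-level idea is that, in order for $p$'s output of $m_1$ to reflect whether $m_2$ has occurred, information must flow between $\{p\}$ and $\{q\}$; and since the adversary may crash up to $t = \lceil n/2\rceil - 1$ processors, it can hide any ``thin'' communication pattern. Concretely, I would run the processors so that $p$ executes $m_1$ solo while a set $C$ of processors is silenced (their messages delayed), and symmetrically $q$ executes $m_2$ with a different set silenced, and argue that if few messages are sent, the two solo executions are indistinguishable from a combined execution in which both have run — contradicting non-commutativity.

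The key steps, in order: (1) By averaging, it suffices to exhibit a set of $\Theta(k)$ processors each of which, in expectation over the algorithm's coins and a cleverly chosen adversary strategy, either sends or receives $\Theta(k)$ messages; summing gives $\Omega(\alpha k n)$ after accounting for the $\alpha$ success probability (an execution that fails to terminate contributes nothing, so we lose a factor $\alpha$). (2) Partition the $k$ participants into pairs (or consider $p$ against each other processor), and for the pair $(p,q)$ run the following indistinguishability argument: let $p$ run $m_1$ to completion from $S$ while delaying all messages to/from a candidate ``quiet set'' $Q$ of size up to $t$; then let $q$ run $m_2$. If the total number of messages crossing between $Q$ and its complement during $p$'s run is less than $|Q|$, then some processor in $Q$ received no message, so the adversary can crash exactly the senders and make $q$'s run look identical to one where $m_1$ never happened — but $p$ already returned a value depending on whether $m_2$ ran, which $q$ can now contradict by returning the ``$m_1$ absent'' value. (3) Turn this into a counting statement: in any terminating execution, for each processor $r$ there must be a ``witness'' set of $\Theta(n)$ processors from which $r$ either received a message or to which $r$ sent one, else the adversary isolates $r$ together with a quorum's worth of silent processors and forces an indistinguishable, incorrect execution. (4) Sum the per-processor message counts, divide by $2$ for double counting, multiply by $\alpha$ for the termination probability, and use $k \le n$ with $t \le \lceil n/2\rceil - 1 = \Theta(n)$ to conclude $\Omega(\alpha k n)$.

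The main obstacle, as the introduction already flags, is that the naive ``each processor must send or receive $\Theta(n)$ messages'' intuition is false: a group of processors can run a gossip/aggregation protocol where only a few designated processors ever receive external messages, keeping the total at $o(n^2)$ while still disseminating the needed bit. So the real work is step (3): I would argue that even under such aggregation strategies there must exist a \emph{linear-sized} subset of processors each incurring \emph{linear} individual communication. The argument is to consider the ``communication graph'' of an execution (an edge for each ordered sender/receiver pair that exchanges a message) and show that if this graph has $o(k n)$ edges then there is a cut separating a set $A$ with $|A| = \Theta(k)$ from a set $B$ with $|B| \ge t+1 - |A|$ across which $o(k)$ average messages flow, letting the adversary crash the crossing senders and produce two executions — one where the $M$-operations in $A$ run, one where they don't — that are indistinguishable to $B$, violating linearizability of $O$ (for leader election: $B$ elects a second winner; for renaming: two processors get the same name). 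Making the choice of $A$ robust to the algorithm's adaptivity and to the adversary's own scheduling constraints (it must keep a quorum alive) is the delicate part; I expect to handle it with a probabilistic / counting argument over which $\Theta(k)$-subset to isolate, showing the expected number of messages charged is $\Omega(\alpha k n)$ regardless of $\cA$'s strategy.
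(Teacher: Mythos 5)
Your proposal has the right intuition (indistinguishability plus the crash budget, and the recognition that the naive ``everyone sends $\Theta(n)$ messages'' claim needs a more careful linear-sized-set statement), but it stops short precisely where the paper's proof does its real work, and the substitute you sketch has a structural problem. The paper's device is a ``bubble'': the adversary fixes a set $S$ of $k/4$ processors \emph{up front} and buffers all messages to and from each member of $S$ until $n/4$ such messages have accumulated for that member, at which point the member is released. The charge of $\Omega(n)$ messages per bubble processor then comes directly from the release threshold, and the whole argument reduces to showing that no bubble processor can return before being released. Your cut-based alternative --- look at the communication graph of the execution, find a sparse cut, and isolate the set $A$ on one side --- has a chicken-and-egg difficulty you acknowledge but do not resolve: the communication graph is produced by the interaction of the (randomized, adaptive) algorithm with the adversary's schedule, so the adversary cannot choose which set to isolate as a function of a graph that its own choice generates. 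Fixing $S$ in advance and using the buffering threshold is exactly what sidesteps this, and it is the missing idea. (Relatedly, your step (1) charges each of $\Theta(k)$ processors only $\Theta(k)$ messages, which sums to $k^2$, not $kn$; the correct per-processor charge is $\Theta(n)$, coming from the failure budget $t=\Theta(n)$, as you do say in step (3).)

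A second genuine gap is that you never argue why an isolated processor must \emph{decide} at all, which is what forces it to accumulate the $n/4$ messages. Termination is only guaranteed with probability $\alpha$ and only under executions with fewer than $n/2$ crashes, so one must exhibit a concrete fault pattern consistent with the bubbled processor's (empty) view under which the algorithm is obliged to terminate. The paper does this by first establishing a quiescence time $\tau$ after which no more messages are sent (otherwise the adversary already wins by waiting), then crashing, for each bubble processor $q$, all of $q$'s would-be correspondents (fewer than $n/4$) together with all other bubble members (fewer than $n/4$), staying within the budget $t<n/2$; from $q$'s perspective this is indistinguishable from an execution in which it must return with probability at least $\alpha$, yet the two-execution reordering argument (run the outsiders before $q$'s solo run versus after it, and invoke strong non-commutativity) shows $q$ cannot return while still bubbled. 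Your pairwise step (2) gestures at the reordering argument but conflates who must output the contradictory value, and your proposal as written would not let a referee verify that the failure budget suffices to apply the argument to all $\Theta(k)$ isolated processors simultaneously.
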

\begin{proof}
Let $A$ be an algorithm implementing a shared object $O$ with a strongly non-commutative method $M$, 
        in asynchronous message-passing with $t < n / 2$, guaranteeing termination with probability $\alpha$. 
We define an adversarial strategy for which we will argue that all the resulting terminating executions 
        (regardless of their probability) must cause $\Omega( kn )$ messages to be sent. 
This clearly implies our claim. 
The strategy proceeds as follows.
        
Assume that each processor is executing an instance of $M$. 
The adversary picks a subset $S$ of $k / 4$ participants, and places them in a ``bubble:'' 
        for each such processor $q$, the adversary suspends all its incoming and outgoing messages 
        in a buffer, until there are at least $n / 4$ such messages in the buffer. 
At this point, the processor is freed from the bubble, and is allowed to take steps synchronously, 
        together with other processors. 
Processors outside $S$ execute in lock-step, 
        and their messages outside the bubble are delivered in a timely fashion. 
        
Note that this strategy induces a family of executions $\mathcal{E}$, 
        each of which is defined by the set of coin flips made by the processors. 
We can assume that there exists a time $\tau$ after which in all executions in $\mathcal{E}$ with non-zero probability
 no processors send any more messages.
Otherwise, the adversary can always wait for another message that must be sent, 
 then for the next message, and so on, until $\Omega( k n )$ messages.

Let us prove that in each execution $E \in \mathcal{E}$ every processor in the bubble must eventually leave the bubble before returning, 
 which implies $\Omega( k n )$ messages in executions in which all processors return.
Towards this goal, we first show that a processor cannot return while still in the bubble.
Then we prove that all processors in the bubble are forced to either return while still in the bubble (which cannot happen)
 or leave the bubble, completing the proof.

For the first part, assume for the sake of contradiction that there exists an execution 
        $E \in \mathcal{E}$ and a processor $p \in S$ that decides in $E$ 
        while still being in the bubble. 
Practically, this implies that $p$ has returned from its method invocation without receiving any 
        messages, and without any of its messages being received. 
To obtain a contradiction, we build two alternate executions $E'$ and $E''$, both of which are 
        indistinguishable to $p$, but in which $p$ must return different outputs. 
        
In execution $E'$, we run all processors outside the bubble until one of them returns--this must eventually occur with constant probability, 
 since this execution is indistinguishable to these processors from an execution in which all (at most $k/4 < n/2$) 
 processors in the bubble 
 are initially crashed.
We suspend messages sent to the processors inside the bubble.
We then run processor $p$, which flips the same coins as in $E$ (the execution exists as this happens with probability $>0$), 
 observes the same emptiness and therefore eventually returns with constant probability, without having received any messages.
We deliver $p$'s messages and suspended messages as soon as $p$ decides.
         
In execution $E''$, we first run $p$ in isolation, suspending its messages.
With probability $>0$, $p$ flips the same coins as in $E$, and must eventually decide with constant probability without 
        having received any messages.
We then run all processors outside the bubble in lock-step. 
One of these processors must eventually return with constant probability, since to these processors, 
 the execution is indistinguishable from an execution in which $p$ (and other processors in the bubble) 
 has crashed initially. 
We deliver $p$'s messages after this decision. 
Since both $E'$ and $E''$ are indistinguishable to $p$, 
        it has to return the same value in both executions with constant probability. 
However, this cannot be the case because instances of method $M$ are strongly non-commutative, 
        the two returning instances are not concurrent, and occur in opposite orders in the two executions. This correctness requirement is enforced \emph{deterministically}.
Therefore, $p$ must return distinct values in  executions $E'$ and $E''$, which is a contradiction. 
Hence, $p$ cannot return in $E$.

To complete the argument, we prove that $p$ has to eventually return or leave the bubble, with probability $\geq \alpha$.
We cannot directly require this of the execution prefix $E$ since not all messages 
        by correct processors have been delivered in this prefix. 
For this, we consider time $\tau$, at which we crash all recipients of messages by $p$, 
        and all processors that sent messages to $p$ in $E$.
By the definition of the bubble, the number of crashes we need to expend is $< n / 4$. 
Therefore, by definition of $\tau$, there exists a valid execution, in which no more messages will be sent and 
 $p$ must eventually decide with probability $\geq \alpha$. 
From $p$'s prospective, the current execution in the bubble can be this execution, and if the adversary keeps $p$
 in the bubble for long enough, it has to decide with probability $\geq \alpha$.
However, from the previous argument, we know that $p$ cannot decide while in the bubble,
        therefore $p$ has to eventually leave the bubble in order to be able to decide and return.
        
This shows that a specific processor $p$ must eventually leave the bubble. 
The final difficulty is in showing that we can apply the same argument to \emph{all} processors 
        in the bubble at the same time without exceeding the failure budget. 
Notice however that we could apply the following strategy: for each processor $q$ in the bubble, 
        we could fail all senders and recipients of $q$ ($< n / 4$), 
        and also all other processors in the bubble ($< n / 4$) at time $\tau$. 
This can be applied without exceeding the failure budget. 
Since any processor $q$ could be the sole survivor from the bubble to which we have applied the buffering strategy,
 and since $q$ does not see a difference from an execution in which it has to return, 
 analogously to the previous case, we obtain that each $q$ in the bubble has to eventually leave the bubble with probability $\geq \alpha$.
        
Therefore, we obtain that at least $\alpha k n / 16$ messages have to be exchanged during the execution,
        which implies the claim. 
\end{proof}

It is easy to check that the \emph{elect} procedure of a leader election algorithm and the \emph{rename} procedure of a strong renaming algorithm 
are both non-commutative. (In the case of renaming, consider $n + 1$ distinct processors executing the rename procedure. 
By the pigeonhole principle, there exists some non-zero probability that two processors choose the same name in solo executions. Therefore, these two operations do not commute, and therefore the \emph{rename} procedure is strongly non-commutative.)
We therefore obtain the following corollary.

\begin{corollary}
\label{cor:lowerbound}
	Any implementation of leader election or renaming by $k \leq n$ 
        processors which ensures termination with probability at least $\alpha > 0$ in an asynchronous message-passing system where $t < n / 2$ processors may 
        fail by crashing must have worst-case expected message complexity $\Omega( \alpha k n )$. 
\end{corollary}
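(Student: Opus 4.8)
The plan is to obtain the corollary as an immediate instantiation of Theorem~\ref{thm:lb}, which already establishes the $\Omega(\alpha k n)$ message lower bound for \emph{any} object exposing a strongly non-commutative operation. The only thing left to do is verify that both leader election and renaming in fact possess such an operation; everything else is just plugging into the theorem.

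First I would treat leader election. The relevant operation is the \emph{elect} (test-and-set) call, and the witnessing state $S$ is the initial state in which no processor has yet participated. If processor $p$ runs \emph{elect} to completion from $S$, it must return $\id{WIN}$: a solo run cannot produce only $\id{LOSE}$ outcomes, by the unique-winner requirement together with the stipulation that no processor loses before the eventual winner starts. If $q \neq p$ then invokes \emph{elect} sequentially after $p$ has returned, it must return $\id{LOSE}$, since $p$ has already won. Running the two invocations in the opposite order swaps the outcomes. Hence each instance changes the result of the other from state $S$, so \emph{elect} is strongly non-commutative, and Theorem~\ref{thm:lb} applies directly.

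Second, for renaming I would use the pigeonhole argument over solo executions sketched just before the corollary. Take $S$ to be the initial state. Since there are only $n$ names and at least two distinct processors participate, there exist two processors $p, q$ and coin-flip outcomes of nonzero probability under which $p$ returns some name $x$ in a solo run from $S$ and $q$ also returns $x$ in a solo run from $S$. But in the sequential composition where $p$ runs first and then $q$, the distinctness requirement forces $q$ to return a name $\neq x$; symmetrically, running $q$ first forces $p$ to deviate. Thus an instance of \emph{rename} by $p$ changes the result of an instance by $q$ and vice versa, which is exactly strong non-commutativity. Applying Theorem~\ref{thm:lb} to the renaming object yields the bound, and combining the two cases gives the corollary.

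The step I would be most careful with is the interaction between randomization and the definition of strong non-commutativity: the definition speaks of deterministic-looking ``instances,'' so I must fix not only the state $S$ but also the coin outcomes realizing the conflicting behavior, and check that the correctness property being violated holds \emph{deterministically} (with probability one over the remaining randomness), so that the indistinguishability argument inside Theorem~\ref{thm:lb} goes through. For leader election this is automatic, since unique-winner is a deterministic safety property; for renaming one notes that distinctness of assigned names is likewise deterministic, so conditioning on the specific coin flips of $p$ and $q$ suffices while leaving the adversary's control over all other processors intact. Past that point no computation is needed.
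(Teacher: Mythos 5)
Your overall route is exactly the paper's: check that \emph{elect} and \emph{rename} are strongly non-commutative and then invoke \theoremref{thm:lb} as a black box. The leader-election half is correct and matches the paper --- a solo sequential \emph{elect} from the initial state must return $\id{WIN}$ (by linearizability plus the rule that nobody loses before the winner starts), a subsequent sequential instance must return $\id{LOSE}$, and reversing the order swaps the outcomes. Your closing remark about fixing the coin outcomes and relying on the fact that unique-winner and name-distinctness are deterministic safety properties is also the right thing to worry about, and is what the theorem's proof itself leans on.

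There is, however, a genuine gap in the renaming half. You assert that ``since there are only $n$ names and at least two distinct processors participate, there exist two processors $p,q$ and coin-flip outcomes of nonzero probability under which both return the same name $x$ in solo runs from $S$.'' That does not follow: with $m \le n$ participants it is perfectly consistent that the supports of their solo-run output distributions are pairwise disjoint (e.g.\ a hypothetical implementation in which processor $i$ deterministically grabs name $h(i)$ for some map $h$ injective on the participating ids), in which case the two instances \emph{do} commute from the initial state and your witness fails. The pigeonhole needs strictly more candidate instances than names: because ids are drawn from a large namespace, one considers $n+1$ distinct ids, whose solo-run support sets $N_i \subseteq [n]$ cannot all be pairwise disjoint and nonempty, so some pair of ids collides with nonzero probability --- this is precisely the $n+1$ in the paper's parenthetical. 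With that one-line repair (choose the colliding pair of ids as $p$ and $q$), the rest of your argument, including the reversal of roles to get that each instance changes the other's result, goes through and the corollary follows from \theoremref{thm:lb} as you describe.
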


\end{document}